\def\confversion{0}
\def\ifconf{\ifnum\confversion=1}
\def\ifnotconf{\ifnum\confversion=0}
\def\showauthornotes{1}
\def\showkeys{0}
\def\showdraftbox{0}
\definecolor{darkred}{rgb}{0.5,0,0}
\definecolor{darkgreen}{rgb}{0,0.35,0}
\definecolor{darkblue}{rgb}{0,0,0.55}
\newcommand{\Authornote}[2]{{\sf\small\color{red}{[#1: #2]}}}
\newcommand{\Authorcomment}[2]{{\sf \small\color{gray}{[#1: #2]}}}
\newcommand{\Authorfnote}[2]{\footnote{\color{red}{#1: #2}}}
\newcommand{\Authornote}[2]{}
\newcommand{\Authorcomment}[2]{}
\newcommand{\Authorfnote}[2]{}
\newcommand{\draftbox}{\begin{center}
  \fbox{%
    \begin{minipage}{2in}%
      \begin{center}%
        \begin{Large}%
          \textsc{Working Draft}%
        \end{Large}\\
        Please do not distribute%
      \end{center}%
    \end{minipage}%
  }%
\end{center}
\vspace{0.2cm}}
\newcommand{\draftbox}{}
\newtheorem{theorem}{Theorem}[section]
\newtheorem{observation}[theorem]{Observation}
\newtheorem{definition}[theorem]{Definition}
\newtheorem{lemma}[theorem]{Lemma}
\newtheorem{proposition}[theorem]{Proposition}
\newtheorem{corollary}[theorem]{Corollary}
\newtheorem{claim}[theorem]{Claim}
\newtheorem{fact}[theorem]{Fact}
\def\FullBox{\hbox{\vrule width 6pt height 6pt depth 0pt}}
\def\qed{\ifmmode\qquad\FullBox\else{\unskip\nobreak\hfil
\penalty50\hskip1em\null\nobreak\hfil\FullBox
\parfillskip=0pt\finalhyphendemerits=0\endgraf}\fi}
\def\qedsketch{\ifmmode\Box\else{\unskip\nobreak\hfil
\penalty50\hskip1em\null\nobreak\hfil$\Box$
\parfillskip=0pt\finalhyphendemerits=0\endgraf}\fi}
\newenvironment{proofsketch}{\begin{trivlist} \item {\bf
Proof Sketch:~~}}
  {\qedsketch\end{trivlist}}
\newenvironment{proofof}[1]{\begin{trivlist} \item {\bf Proof of
#1:~~}}
  {\qed\end{trivlist}}
\def\to{\rightarrow}
\def\eps{\varepsilon}
\def\epsilon{\varepsilon}
\def\eps{\epsilon}
\def\phi{\varphi}
\def\cal{\mathcal}
\newcommand{\defeq}{:=}
\newcommand{\ie}{i.e.,\xspace}
\newcommand{\etal}{et al.\xspace}
\newcommand{\mper}{\,.}
\newcommand{\mcom}{\,,}
\newcommand{\R}{{\mathbb R}}
\newcommand{\E}{{\mathbb E}}
\newcommand{\N}{{\mathbb{N}}}
\newcommand{\pmone}{\{-1,1\}\xspace}
\newcommand{\on}{\pmone}
\newcommand{\indicator}[1]{\mathds{1}_{\{#1\}}}
\newcommand{\gaussian}[2]{{\cal N(#1, #2)}}
\let\nfrac=\nicefrac
\newcommand{\abs}[1]{\ensuremath{\left\lvert #1 \right\rvert}}
\def\norm#1{
  \@ifnextchar\bgroup%
   {\normalpnorm{#1}}%
   {\defaultnorm{#1}}%
}
\def\defaultnorm#1{%
    \renderNorm{#1}{}
}
\def\normalpnorm#1#2{%
   \@ifnextchar\bgroup%
   {\banaxnorm{#1}{#2}}%
   {\renderNorm{#2}{#1}}
}
\def\banaxnorm#1#2#3{%
    \renderNorm{#3}{#1\to#2 }
}
\def\renderNorm#1#2{%
    \@ifnextchar^%
    {\fixExponent{#1}{#2}}%
    {\ensuremath{\mathchoice%
        {\lVert #1 \rVert_{#2}}%
        {\lVert #1 \rVert_{#2}}%
        {\lVert #1 \rVert_{#2}}%
        {\lVert #1 \rVert_{#2}}}%
    }%
}
\def\fixExponent#1#2^#3{%
    \ensuremath{{\lVert #1 \rVert^{#3}_{#2}}}%
}
\def\enorm#1#2{%
   \@ifnextchar\bgroup%
   {\norm{L_{#1}}{L_{#2}}}%
   {\norm{L_{#1}}{#2}}%
}
\def\cnorm#1#2{%
   \@ifnextchar\bgroup%
   {\norm{\ell_{#1}}{\ell_{#2}}}%
   {\norm{\ell_{#1}}{#2}}%
}
\newcommand{\mysmalldot}[2]{\ensuremath{\langle #1, #2 \rangle}}
\newcommand{\ip}[1]{\left\langle #1 \right\rangle}
\newcommand{\cip}[1]{\left\langle #1 \right\rangle}  
\newcommand{\eip}[1]{\left\langle #1 \right\rangle} 
\newcommand{\Esymb}{\mathbb{E}}
\newcommand{\Psymb}{\mathbb{P}}
\def\Pr#1{%
    \ProbabilityRender{\Psymb}{#1}%
}
\def\Ex#1{%
    \ProbabilityRender{\Esymb}{#1}%
}
\def\ProbabilityRender#1#2{
  \@ifnextchar\bgroup%
  {\renderwithdist{#1}{#2}}
   {\singlervrender{#1}{#2}}
}
\def\singlervrender#1#2{%
   \ensuremath{\mathchoice
       {{#1}\left[ #2 \right]}
       {{#1}[ #2 ]}
       {{#1}[ #2 ]}
       {{#1}[ #2 ]}
   }
}
\def\renderwithdist#1#2#3{%
   \@ifnextchar\bgroup
   {\superfancyrender{#1}{#2}{#3}}
   {\ensuremath{\mathchoice
      {\underset{#2}{#1}\left[ #3 \right]}
      {{#1}_{#2}[ #3 ]}
      {{#1}_{#2}[ #3 ]}
      {{#1}_{#2}[ #3 ]}
     }
   }
}
\def\superfancyrender#1#2#3#4#5{
   \ensuremath{\mathchoice
      {\underset{#1}{{#1}}\left#4 #3 \right#5}
      {{#1}_{#2}#4 #3 #5}
      {{#1}_{#2}#4 #3 #5}
      {{#1}_{#2}#4 #3 #5}
   }
}
\newcommand\restrict[1]{\raisebox{-.5ex}{$|$}_{#1}}
\newfont{\inhead}{eufm10 scaled\magstep1}
\newcommand{\calD}{{\cal D}}
\newcommand{\calL}{{\cal L}}
\newcommand{\poly}{{\mathrm{poly}}}
\newcommand{\ee}{\ensuremath{\mathrm e}}
\newcommand{\RRR}{\ensuremath{\mathbb{R}}}
\newcommand{\classfont}[1]{\textsf{#1}}
\newcommand{\BPP}{\classfont{BPP}\xspace}
\newcommand{\NP}{\classfont{NP}\xspace}
\newcommand{\DTIME}[1]{\classfont{DTIME}\inparen{#1}\xspace}
\newcommand{\BPTIME}[1]{\classfont{BPTIME}\inparen{#1}\xspace}
\newcommand{\opt}{{\sf OPT}\xspace}
\newcommand{\problemmacro}[1]{\textsf{#1}}
\newcommand{\labelcover}{\problemmacro{Label Cover}\xspace}                                
\newcommand{\threesat}{3-SAT\xspace}
\newcommand{\groth}[2]{\ensuremath{(#1, #2)}-Grothendieck}
\newcommand{\onorm}[2]{\ensuremath{#1{\to}#2} norm}
\newcommand{\inparen}[1]{\left(#1\right)}             
\newcommand{\inbraces}[1]{\left\{#1\right\}}           
\newcommand{\Holder}{H\"{o}lder}
\newcommand{\Odonnell}{O'Donnell}
\newcommand{\Briet}{Bri\"et\xspace}          
\newcommand{\Esseen}{Ess\'een\xspace}          
\newcommand{\barz}{\overline{z}}
\newcommand{\hatf}{\widehat{f}}
\newcommand{\hatg}{\widehat{g}}
\newcommand{\bff}{{\mathbf f}}
\newcommand{\bfgg}{{\mathbf g}}
\newcommand{\bfF}{{\mathbf F}}
\newcommand{\bfA}{{\mathbf A}}
\newcommand{\bfhatf}{{\bf \widehat{f}}}
\newcommand{\bfhatg}{{\bf \widehat{g}}}
\newcommand{\bfhatP}{{\bf \widehat{P}}}
\newcommand{\bfhatL}{{\bf \widehat{L}}}
\newcommand{\elnote}{\Authornote{EL}}
\begin{document}


\title{Inapproximability of Matrix $p\rightarrow q$ Norms}
\author{
$\qquad$
Vijay Bhattiprolu\thanks{Supported by NSF CCF-1422045 and CCF-1526092.  {\tt
vpb@cs.cmu.edu}. Part of the work was done while visiting NTU, Singapore.} \and Mrinalkanti Ghosh\thanks{Supported by NSF CCF-1254044 \tt
mkghosh@ttic.edu} \and Venkatesan Guruswami\thanks{Supported in part by NSF
grant CCF-1526092. {\tt guruswami@cmu.edu}. Part of the work was done while visiting NTU, Singapore. }$\qquad$ \and Euiwoong
Lee\thanks{Supported by the Simons Institute for the Theory of Computing.
{\tt euiwoong@cims.nyu.edu}} \and Madhur Tulsiani \thanks{Supported by NSF
CCF-1254044 \tt madhurt@ttic.edu} 
}


\setcounter{page}{0}
\date{}

\maketitle
\draftbox
\thispagestyle{empty}


We study the problem of computing the \onorm{p}{q} of a matrix $A \in \R^{m \times n}$, defined as
\[
\norm{p}{q}{A} ~\defeq~ \max_{x \in \R^n \setminus \{0\}} \frac{\norm{q}{Ax}}{\norm{p}{x}} \mper 
\]
This problem generalizes the spectral norm of a matrix ($p=q=2$) and the Grothendieck problem
($p=\infty$, $q=1$), and has been widely studied in various regimes. 
When $p \geq q$, the problem exhibits a  dichotomy: constant factor approximation algorithms are
known if $2 \in [q,p]$, and the problem is hard to approximate within almost polynomial  factors
when $2 \notin [q,p]$.

The regime when $p < q$, known as \emph{hypercontractive norms}, is particularly significant for
various applications but much less well understood.  The case with $p = 2$ and $q > 2$ was studied
by [Barak \etal, STOC'12] who gave sub-exponential algorithms for a promise version of the problem
(which captures small-set expansion) and also proved hardness of approximation results based on the
Exponential Time Hypothesis. However, no NP-hardness of approximation is known for
these problems for any $p < q$. 

We prove the first NP-hardness result for approximating hypercontractive norms. 
We show that for any $1< p < q < \infty$ with $2 \notin [p,q]$,
  $\norm{p}{q}{A}$ is hard to approximate within $2^{O((\log
  n)^{1-\eps})}$ assuming $\NP \not \subseteq \BPTIME{2^{(\log
    n)^{O(1)}}}$. 


En route to the above result, we also prove new results for the case when $p \geq q$ with $2 \in [q,p]$.
For such $p$ and $q$, we show that
  $\norm{p}{q}{A}$ is hard to approximate within any factor smaller
  than $1/(\gamma_{p^\ast} \cdot \gamma_q)$, where for any $r$, $\gamma_r$
  denotes the $r^{th}$ norm of a standard normal random variable, and $p^\ast := p/(p-1)$ is the dual norm of $p$. The hardness factor is tight for the cases when $p$ or $q$ equals $2$.
%


\newpage

\ifnotconf
\pagenumbering{roman}
\tableofcontents
\clearpage
\fi

\pagenumbering{arabic}
\setcounter{page}{1}




\section{Introduction}\label{sec:intro}
We consider the problem of finding the \onorm{p}{q} of a given matrix $A \in \R^{m \times n}$,
which is defined as 
\[
\norm{p}{q}{A} ~\defeq~ \max_{x \in \R^n \setminus \{0\}} \frac{\norm{q}{Ax}}{\norm{p}{x}} \mper 
\]
The quantity $\norm{p}{q}{A}$ is a natural generalization of the well-studied spectral norm, which
corresponds to the case $p=q=2$. For general $p$ and $q$, this quantity computes the maximum
distortion (stretch) of the operator $A$ from the normed space $\ell_p^n$ to $\ell_q^m$.

The case when $p = \infty$ and $q = 1$ is the well known Grothendieck problem \cite{KN12, Pisier12},
where the goal is to maximize $\ip{y, Ax}$ subject to $\norm{\infty}{x}, \norm{\infty}{y} \leq 1$. In
fact, via simple duality arguments (see \cref{sec:prelims}), the general problem computing
$\norm{p}{q}{A}$ can be seen to be equivalent to the following variant of the Grothendieck problem
(and to $ \norm{q^*}{p^*}{A^T} $)
\[
\norm{p}{q}{A} ~=~ 
\max_{ \substack{\norm{p}{x} \leq 1 \\ \norm{q^*}{y} \leq 1}} \ip{y, Ax} 
~=~ \norm{q^*}{p^*}{A^T} 
\mcom
\]
where $p^*, q^*$ denote the dual norms of $p$ and $q$, satisfying $1/p + 1/p^* = 
1/q + 1/q^* = 1$.

\paragraph{Hypercontractive norms.}
The case when $p < q$, known as the case of \emph{hypercontractive} norms, also has a special
significance to the analysis of random walks, expansion and related problems in hardness of
approximation \cite{Biswal11, BBHKSZ12}. 
The problem of computing $\norm{2}{4}{A}$ is also known to be equivalent to
determining the maximum acceptance probability of a quantum protocol with multiple unentangled
provers, and is related to several problems in quantum  information theory \cite{HM13, BH15}.

Bounds on hypercontractive norms of operators are also used to prove expansion of small sets in
graphs.
Indeed, if $f$ is the indicator function of set $S$ of measure $\delta$ in a graph with adjacency
matrix $A$, then we have that for any $p \leq q$,
\[
\Phi(S) 
~=~ 
1 - \frac{\ip{f, A f}}{\norm{2}{f}^2} 
~\geq~
1 - \frac{\norm{q^*}{f} \cdot \norm{q}{Af}}{\delta}
~\geq~
1 - \norm{p}{q}{A} \cdot \delta^{1/p - 1/q} \mper
\]
It was proved by Barak \etal \cite{BBHKSZ12} that the above connection to small-set expansion can in
fact be made two-sided for a special case of the \onorm{2}{q}.
They proved by that  to resolve the promise version of the small-set expansion (SSE) problem, it
suffices to distinguish the cases $\norm{2}{q}{A} \leq c \cdot \sigma_{\min}$ and $\norm{2}{q}{A}
\geq C\cdot \sigma_{\min}$, where $\sigma_{\min}$ is the least non-zero singular value of $A$ and
$C > c > 1$ are appropriately chosen constants based on the parameters of the SSE problem. 
Thus, the approximability of \onorm{2}{q} is closely related to the small-set expansion problem. In
particular,  proving the NP-hardness of approximating the \onorm{2}{q} is (necessarily) an
intermediate goal  towards proving the Small-Set Expansion Hypothesis of Raghavendra and Steurer
\cite{RS10}. 

However,  relatively few results algorithmic and hardness results are known for approximating
hypercontractive norms.  
A result by Steinberg's \cite{Steinberg05} gives an upper bound of $O(\max\inbraces{m,n}^{25/128})$
on the approximation factor, for all $p,q$. 
For the case of \onorm{2}{q} (for any $q > 2$), Barak \etal \cite{BBHKSZ12} give an approximation
algorithm for the promise version of the problem described above, running  
in time $\exp\inparen{\tilde{O}(n^{2/q})}$. 
They also provide an additive approximation algorithm
for the \onorm{2}{4} (where the error depends on \onorm{2}{2} and \onorm{2}{\infty} of
$A$), which was extended to the \onorm{2}{q} by Harrow and Montanaro \cite{HM13}. 
Barak \etal also prove NP-hardness of approximating
$\norm{2}{4}{A}$ within a factor of $1 + \tilde{O}(1/ n^{o(1)})$,  and hardness of approximating
better than $\exp{O((\log n)^{1/2-\eps})}$ in polynomial time, assuming the Exponential Time
Hypothesis (ETH). 
This reduction was also used by Harrow, Natarajan and Wu \cite{HNW16} to prove that $\tilde{O}(\log n)$
levels of the Sum-of-Squares SDP hierarchy cannot approximate $\norm{2}{4}{A}$ within any constant
factor.

It is natural to ask if the bottleneck in proving (constant factor) hardness of approximation for
\onorm{2}{q} arises from the fact from the nature of the domain (the $\ell_2$ ball) or from
hypercontractive nature of the objective.  As discussed in \cref{sec:overview}, \emph{all}
hypercontractive norms present a barrier for gadget reductions, since if a ``true'' solution $x$ is
meant to encode the assignment to a (say) label cover problem with consistency checked via local
gadgets, then (for $q > p$), a ``cheating solution'' may make the value of $\norm{q}{Ax}$ very
large by using a sparse $x$ which does not carry any meaningful information about the underlying
label cover problem.

We show that (somewhat surprisingly, at least for the authors)  it is indeed possible to
overcome the barrier for gadget reductions for hypercontractive norms, for any $2 < p < q$ (and by
duality, for any $p < q< 2$). 
This gives the first NP-hardness result for hypercontractive norms (under randomized reductions). 
Assuming ETH, this also rules out a constant factor approximation algorithm that runs in $2^{n^{\delta}}$
for some $\delta := \delta(p, q)$. 
%
%
%
\begin{theorem}
For any $p, q$ such that $1 < p \leq q < 2$ or $2 < p \leq q < \infty$ and a constant $c > 1$,  it
is NP-hard to approximate \onorm{p}{q} within a factor of $c$. The reduction runs in time $n^{B_p
  \cdot q}$ for $2 < p < q$, where $B_p = \poly(1 / (1 - \gamma_{p^*}))$. 
\label{thm:main_hype_vanilla}
\end{theorem}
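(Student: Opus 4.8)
The plan is to give a gadget reduction from a hard constraint satisfaction problem (most naturally a version of \problemmacro{Label Cover} or \problemmacro{Smooth Label Cover}, or equivalently a dictatorship-style test composed with a PCP) to the problem of approximating $\norm{p}{q}{A}$ for $2 < p \leq q < \infty$; the range $1 < p \leq q < 2$ then follows by duality, since $\norm{p}{q}{A} = \norm{q^*}{p^*}{A^T}$ and $2 < q^* \leq p^* < \infty$. The central difficulty, as the introduction itself flags, is the ``sparse cheating solution'' obstruction: because $q > p$, the ratio $\norm{q}{Ax}/\norm{p}{x}$ can be inflated by a vector $x$ supported on very few coordinates, regardless of whether $x$ encodes a meaningful assignment. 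So the heart of the argument must be a gadget whose matrix $A$ is, in a quantitative sense, \emph{spread out}: every sparse vector $x$ must be penalized, so that near-optimal solutions are forced to be ``smooth'' and hence decodable into labelings.

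First I would fix the gadget. For a single Label Cover edge with projection constraints over an alphabet of size $k$, I would build a block of $A$ acting on the ``long-code-like'' space $\R^{[k]}$ (or a Fourier-analytic analogue), designed so that (i) a solution $x$ that is a genuine dictator (indicator of the correct label) achieves value close to some target $\tau_{\mathrm{yes}}$, and (ii) any $x$ whose $\ell_p$-mass is concentrated on a small set of coordinates, or which fails the consistency checks, achieves value at most $\tau_{\mathrm{no}} < \tau_{\mathrm{yes}}$ with $\tau_{\mathrm{yes}}/\tau_{\mathrm{no}} \ge c$. The natural way to enforce the anti-concentration property (ii) is to include in $A$ a ``smoothing'' or ``averaging'' sub-block — e.g. rows that compute local averages of the entries of $x$, or noise-operator rows $T_\rho x$ — scaled so that a sparse $x$ has small $\ell_q$ image under that sub-block relative to its $\ell_p$ norm, while a flat $x$ does not. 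Concretely, the point is that for a $t$-sparse unit $\ell_p$ vector one has $\norm{q}{x} \le 1$ but $\norm{2}{x}$ can be as large as $t^{1/2 - 1/p}$, whereas after averaging over a structured neighbourhood the $\ell_q$ norm of the image drops; choosing the relative weights of the ``test'' block and the ``smoothing'' block correctly makes the maximum over sparse $x$ strictly submaximal. The running-time bound $n^{B_p \cdot q}$ with $B_p = \poly(1/(1-\gamma_{p^*}))$ strongly suggests the gadget size (hence the blow-up of the reduction) is governed by how close $\gamma_{p^*}$ is to $1$, i.e. by the Gaussian-norm constant appearing in the companion hardness factor $1/(\gamma_{p^*}\gamma_q)$ — so I would expect the smoothing parameter, number of ``noise'' rows, or alphabet size to be taken as a function of $1/(1-\gamma_{p^*})$ and of $q$.

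The completeness direction should be routine: take the satisfying assignment of Label Cover, set $x$ to be the corresponding combination of dictators across blocks, and lower-bound $\norm{q}{Ax}/\norm{p}{x}$ by a direct (if tedious) $\ell_p$/$\ell_q$ computation, using that the test block fires on consistent dictators and the smoothing block only helps. The soundness direction is where the work lies, and I would structure it in two stages. Stage one is an \emph{analytic} lemma: if $x$ has value close to $\tau_{\mathrm{yes}}$ on a single gadget block, then $x$ restricted to that block is close (in $\ell_p$, or in some hybrid norm) to a small mixture of dictators — this is a $p \to q$ analogue of the standard ``if a function passes the BLR/noisy-dictator test then it has an influential coordinate'' statement, and the anti-concentration built into the gadget is exactly what rules out the pseudo-solutions that ordinary $\ell_2$-based dictatorship tests would admit. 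Stage two is the usual \emph{decoding / list-decoding} argument: extract from each block a short list of candidate labels, show that a random choice respects the projection constraints with probability bounded below, and conclude that a large fraction of Label Cover edges are satisfiable, contradicting the PCP hardness. Two technical hazards to watch: (a) global interaction — a cheating $x$ that is not sparse on any single block but is ``spread thin'' across blocks — which I would control by normalizing the blocks so that $\norm{p}{x}^p$ decomposes additively and the per-block values cannot all be simultaneously large without violating the per-block analytic lemma; and (b) the fact that $\ell_p$ for $p \ne 2$ is not Hilbertian, so orthogonality-style arguments must be replaced by convexity/$\ell_p$-smoothness inequalities (this is presumably where the $\gamma_{p^*}$ constant enters the quantitative bounds). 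I expect the main obstacle to be precisely the analytic soundness lemma — proving that a near-optimal solution to the $p \to q$ gadget \emph{must} be close to a dictator, since for hypercontractive norms the space of near-optimizers is a priori much richer than in the $p = q = 2$ case, and pinning it down is exactly the novelty the abstract is claiming.
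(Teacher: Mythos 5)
Your proposal pursues a direct gadget reduction: build a Label Cover instance whose $p\to q$ norm is large in the completeness case, and fold anti-sparsity constraints into the matrix $A$ itself (``smoothing'' sub-blocks) so that sparse cheating solutions are penalized. This is \emph{not} the route the paper takes, and the paper's own overview explains why the authors believe this route is blocked: to force a near-optimal $x$ to be well-spread across the blocks of \emph{many} Label Cover vertices (not just smooth within a single block), one seems to need a very high density of constraints, and no Label Cover hardness is known in that regime. Your proposed fix --- adding averaging rows so that the $\ell_q$ image of a sparse $x$ is small --- addresses intra-block sparsity, but the dangerous cheating vector is the one that puts all its $\ell_p$ mass on a single vertex's block and is a perfect dictator there; such an $x$ is not penalized by any averaging inside the block, and the only thing that could penalize it is the consistency coupling to other blocks, which is exactly the high-density issue. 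Your handling of hazard (a), ``normalize so $\norm{p}{x}^p$ decomposes additively,'' does not resolve this: additivity of $\norm{p}{x}^p$ is trivially true but does not by itself stop a cheating $x$ concentrated on one block, because for $q>p$ the ratio $\norm{q}{\cdot}/\norm{p}{\cdot}$ \emph{rewards} concentration. So the proposal has a genuine gap precisely at the point the abstract flags as the novelty.

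The paper's actual argument sidesteps the cheating problem entirely rather than confronting it. Step one proves hardness of $\norm{2}{r}{A}$ for $r<2$ via a Label Cover reduction with long codes (symmetrized so $A$ is a self-adjoint projection composed with Fourier transforms, Theorem 3.1 / \cref{thm:brs}). The crucial observation is that for $r<2$ the map $z\mapsto\norm{r}{z}$ is a \emph{concave} function of $\norm{2}{z}$, so the test naturally favors \emph{spread-out} vectors and the sparse-cheating obstruction simply does not arise; the soundness analysis is a Berry--Ess\'een fractional-moment dictatorship test. Step two is duality: $\norm{2}{r}$ hardness for $r<2$ is exactly $\norm{p}{2}$ hardness for $p=r^*>2$, still with codomain $\ell_2$. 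Step three composes the hard matrix $A$ with a random Gaussian matrix $B\in\R^{m\times n}$ with $m=\omega(n^{q/2})$; by the Dvoretzky--Milman / Schechtman deviation inequality, $\norm{q}{Bx}=(1\pm o(1))\,m^{1/q}\gamma_q\norm{2}{x}$ for all $x$ with high probability, so $\norm{p}{q}{BA}\approx m^{1/q}\gamma_q\,\norm{p}{2}{A}$ and the constant-factor gap survives unchanged. This is where the running time $n^{B_p\cdot q}$ in the theorem comes from: the codomain dimension of $B$ is $n^{\Theta(q)}$, and the parameter $\eps$ in the $\norm{2}{r}$ reduction (which controls the long-code blowup $|\phi|^{\poly(1/\eps)}$) must be chosen polynomially small in $1-\gamma_{p^*}$ to create the gap. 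In short: the paper never builds a $p\to q$ gadget at all; it builds a $p\to 2$ gadget and then applies an off-the-shelf random embedding of $\ell_2$ into $\ell_q$. You should internalize why the concavity at $r<2$ makes the $2\to r$ gadget \emph{easier} than the $p\to q$ gadget you tried to build, and why the embedding trick then lets you trade the benign codomain $\ell_2$ for $\ell_q$ for free.
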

We show that the above hardness can be strengthened to any constant factor via a simple tensoring
argument. 
In fact, this also shows that it is hard to approximate $\norm{p}{q}{A}$ within almost polynomial factors
unless NP is in randomized quasi-polynomial time. This is the content of the following theorem.
\begin{theorem}
For any $p, q$ such that $1 < p \leq q < 2$ or $2 < p \leq q < \infty$ and $\epsilon > 0$,  there
is no polynomial time algorithm that approximates the \onorm{p}{q} of an $n \times n$ matrix within
a factor $2^{\log^{1 - \epsilon} n}$ unless $\NP \subseteq \BPTIME{2^{(\log n)^{O(1)}}}$. When $q$
is an even integer, the same inapproximability result holds unless $\NP \subseteq \DTIME{2^{(\log
    n)^{O(1)}}}$
\label{thm:main_hype}
\end{theorem}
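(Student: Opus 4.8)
The plan is to bootstrap Theorem \ref{thm:main_hype_vanilla} — which gives a fixed constant-factor NP-hardness — using a tensoring argument that amplifies the gap while only polynomially blowing up the instance size. The key structural fact is the multiplicativity of $p{\to}q$ norms under tensor products: for matrices $A$ and $B$, and any $1 \le p \le q \le \infty$, one has $\norm{p}{q}{A \otimes B} = \norm{p}{q}{A} \cdot \norm{p}{q}{B}$. The inequality $\norm{p}{q}{A\otimes B} \ge \norm{p}{q}{A}\cdot\norm{p}{q}{B}$ is easy (take $x \otimes y$ for near-optimal $x, y$ and use that $\|u \otimes v\|_r = \|u\|_r\|v\|_r$). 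The reverse inequality, $\norm{p}{q}{A\otimes B}\le\norm{p}{q}{A}\cdot\norm{p}{q}{B}$, is the substantive direction and holds precisely in the hypercontractive regime $p \le q$; I would cite this (it is classical, going back to work on tensor norms — e.g.\ it appears in the hypercontractive-norms literature and can be proved by viewing $A \otimes B$ as acting on $\ell_p^n(\ell_p^m)$ and applying $A$ and $B$ in the two ``coordinates'' successively, using $p \le q$ to commute a $\|\cdot\|_q$-outer with a $\|\cdot\|_p$-inner via Minkowski's integral inequality). This multiplicativity is the only place the hypothesis $p \le q$ enters the amplification.

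Given multiplicativity, the amplification is routine. Starting from Theorem \ref{thm:main_hype_vanilla}, we get a polynomial-time reduction producing $n \times n$ matrices $A$ with a $c$-gap: YES instances have $\norm{p}{q}{A} \ge C_1$ and NO instances have $\norm{p}{q}{A} \le C_1/c$ for some fixed $c = c(p,q) > 1$. Taking the $k$-fold tensor power $A^{\otimes k}$ yields a matrix of dimension $n^k \times n^k$ with gap $c^k$, computable in time $\poly(n^k)$. To get inapproximability factor $2^{\log^{1-\eps} N}$ where $N = n^k$ is the new dimension, I set $k = k(n)$ growing slowly with $n$: choosing $k \approx (\log n)^{1/\eps}$ (so that $\log N = k \log n \approx (\log n)^{1+1/\eps}$), the gap $c^k = 2^{\Theta(k)} = 2^{\Theta((\log n)^{1/\eps})}$, and one checks $(\log N)^{1-\eps} = \Theta\big((\log n)^{(1+1/\eps)(1-\eps)}\big) = \Theta\big((\log n)^{1/\eps - \eps}\big)$, which for small $\eps$ is dominated by $(\log n)^{1/\eps}$; hence $c^k \ge 2^{(\log N)^{1-\eps}}$ for $n$ large. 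The reduction runs in time $\poly(n^k) = n^{\poly(\log n)} = 2^{\polylog(n)}$ — quasi-polynomial — and is randomized (inherited from Theorem \ref{thm:main_hype_vanilla}), so an algorithm achieving factor $2^{\log^{1-\eps} N}$ in polynomial time would place $\NP$ in $\BPTIME{2^{(\log n)^{O(1)}}}$. For the final sentence about even integers $q$: I would check that in this case the base hardness reduction of Theorem \ref{thm:main_hype_vanilla} can be made deterministic — when $q$ is an even integer, $\norm{2}{q}{A}^q$ and more generally the relevant optimization can be analyzed via polynomial identities rather than a randomized rounding/derandomization step — so the whole chain is deterministic, yielding $\NP \subseteq \DTIME{2^{(\log n)^{O(1)}}}$.

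The main obstacle is establishing (or correctly invoking) the tensor multiplicativity $\norm{p}{q}{A\otimes B} = \norm{p}{q}{A}\cdot\norm{p}{q}{B}$ in the regime $p \le q$, since the $\le$ direction is what makes the NO case of the tensored instance behave correctly; without it a cheating solution on $A^{\otimes k}$ that is not a tensor product could inflate the norm. Everything else is bookkeeping with the parameter $k$. I would state the multiplicativity as a self-contained lemma, give the Minkowski-inequality proof for the nontrivial direction, and then present the amplification in a short paragraph as above. One should also remark that the same tensoring, applied to the fixed-constant hardness, immediately gives "hard to approximate within any constant factor" by taking $k$ a large constant — the quasi-polynomial statement is just the extreme of the same trade-off.
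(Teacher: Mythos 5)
Your proposal is correct and is essentially the paper's own proof: start from the constant-factor NP-hardness (the paper's Proposition \ref{hypercontractive:const:factor} / Theorem \ref{thm:main_hype_vanilla}), invoke tensor submultiplicativity $\cnorm{p}{q}{A\otimes B}\le\cnorm{p}{q}{A}\,\cnorm{p}{q}{B}$ for $p\le q$ (the paper's Theorem \ref{productivization}), take $A^{\otimes k}$ with $k\approx(\log n)^{1/\eps}$, and for even $q$ use the derandomized base reduction (the paper's Proposition \ref{hypercontractive:derandomized:const:factor}). Two small remarks: the paper proves the key $\le$ direction of productivization via a convexity argument rather than Minkowski's integral inequality, but both hinge on the same fact ($q/p\ge1$) and your Minkowski sketch does work; and the paper explicitly notes that this general $p\le q$ multiplicativity ``appears to have gone unnoticed previously'' (only the $p=q$ case \cite{BV11} and the $2\to4$ case \cite{HM13} were known), so you should prove it rather than cite it as classical.
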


We also note that the operator $A$ arising in our reduction in  \cref{thm:main_hype_vanilla}
 satisfies $\sigma_{\min}(A) \approx 1$ (and is in fact a product of a carefully chosen 
projection and a scaled random Gaussian matrix).
For such an $A$, we prove the hardness of distinguishing $\norm{p}{q}{A} \leq c$ and
$\norm{p}{q}{A} \geq C$, for constants $C > c > 1$. 
For the corresponding problem in
the case of \onorm{2}{q}, Barak \etal \cite{BBHKSZ12} gave a subexponential algorithm running in
time $\exp\inparen{O(n^{2/q})}$ (which works for every $C > c > 1$). 
On the other hand, since the running time of our reduction is $n^{O(q)}$, we get that 
assuming ETH, we show that no algorithm can distinguish the above cases for \onorm{p}{q} in
time $\exp\inparen{n^{o(1/q)}}$, for any $p \leq q$ when $2 \notin [p,q]$.

While the above results give some possible reductions for working with hypercontractive norms, it
remains an interesting problem to understand the role of the domain as a barrier to proving hardness
results for the \onorm{2}{q} problems. In fact, no hardness results are available even for the more
general problem of polynomial optimization over the $\ell_2$ ball.
We view the above theorem as providing some evidence that while hypercontractive norms have been
studied as a single class so far, the case when $2 \in [p,q]$ may be qualitatively different (with respect to
techniques) from the case when $2 \notin [p,q]$. 
This is indeed known to be true in the \emph{non-hypercontractive case} with $p \geq q$. In fact,
our results are obtained via new hardness results for the case $p \geq q$, as described below.

\paragraph{The non-hypercontractive case.}
Several results are known in the case when $p \geq q$, and we summarize known results for matrix
norms  in \cref{fig:normbounds}, for the both the hypercontractive and non-hypercontractive cases. 
While the case of $p=q=2$ corresponds to the spectral norm,
the problem is also easy when $q = \infty$ (or equivalently $p = 1$) since this corresponds to
selecting the row of $A$ with the maximum $\ell_{p^*}$ norm. Note that in general,
\cref{fig:normbounds} is symmetric about the principal diagonal. Also note that if $\norm{p}{q}{A}$
is a hypercontractive norm ($p < q$) then so is the equivalent $\norm{q^*}{p^*}{A^T}$ (the
hypercontractive and non-hypercontractive case are separated by the non-principal diagonal).
%

%
\begin{figure}
\begin{center}
\includegraphics[width=3.6in]{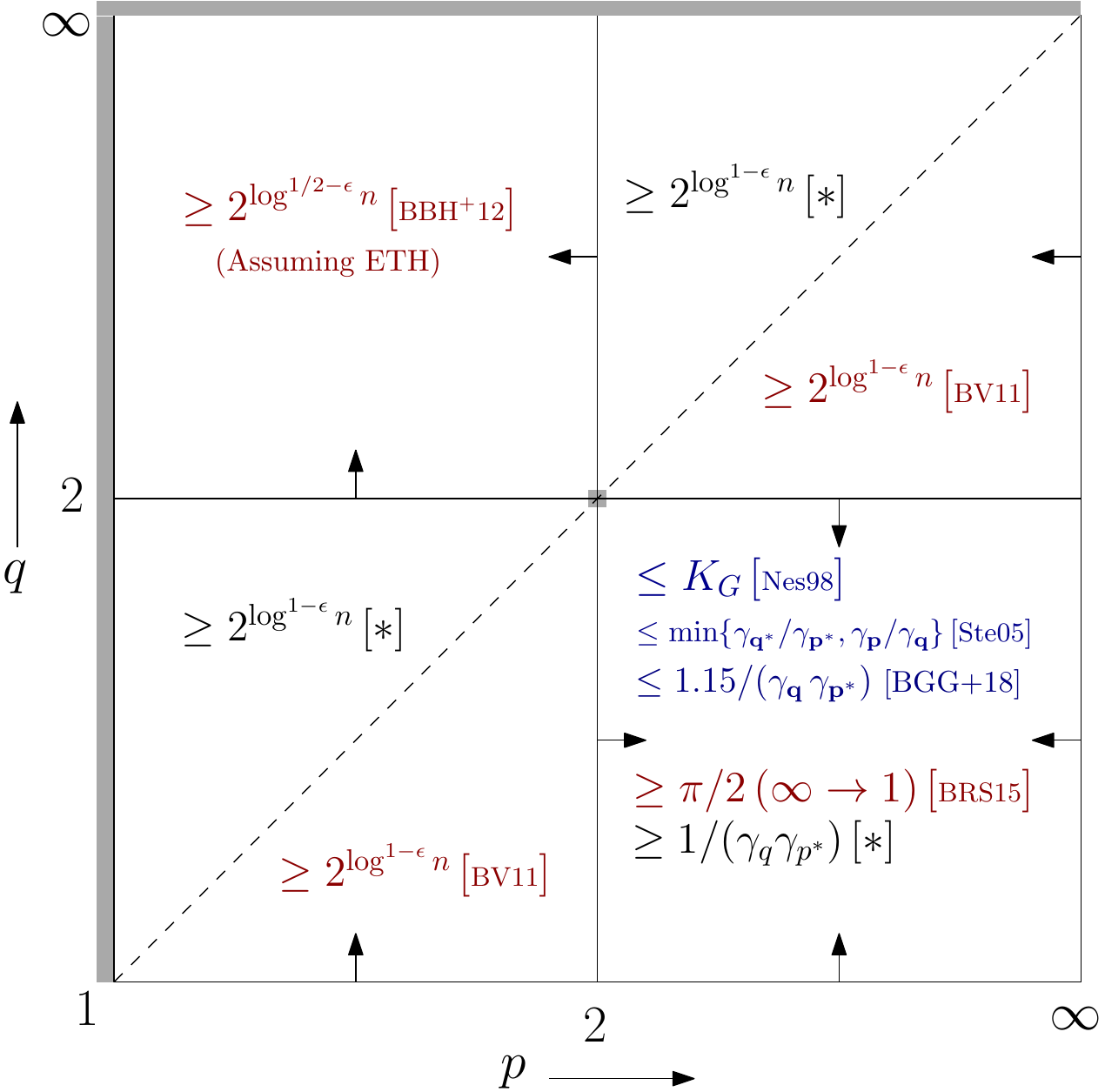}
\end{center}
\vspace{-15 pt}
\caption{Upper and lower bounds for approximating $\|A\|_{p \rightarrow q}$. Arrows
  indicate the region to which a boundary belongs and thicker shaded regions represent exact
  algorithms. Our results are indicated by [$*$]. We omit UGC-based hardness results in the figure.}
\label{fig:normbounds}
\end{figure}

As is apparent from the figure, the problem of approximating $\norm{p}{q}{A}$ for $p \geq q$ 
admits good  approximations when $2 \in [q,p]$, and is hard otherwise. 
For the case when $2 \notin [q,p]$,
an upper bound of $O(\max\{m,n\}^{25/128})$ on the approximation
ratio was proved by Steinberg \cite{Steinberg05}.  
Bhaskara and Vijayaraghavan \cite{BV11} showed NP-hardness of approximation within any
constant factor, and hardness of approximation within an $O\inparen{2^{(\log n)^{1-\eps}}}$ factor
for arbitrary $\eps > 0$ assuming $\NP \not\subseteq \DTIME{2^{(\log n)^{O(1)}}}$.

Determining the right
constants in these approximations when $2 \in [q,p]$ has been of considerable interest in the
analysis and optimization community.
For the case of \onorm{\infty}{1}, Grothendieck's theorem \cite{Grothendieck56} shows that the
integrality gap of a semidefinite programming (SDP) relaxation is bounded by a constant, and the
(unknown) optimal value is now called the Grothendieck constant $K_G$. Krivine \cite{Krivine77} proved
an upper bound of $\pi/(2\ln(1+\sqrt{2})) = 1.782 \ldots$ on $K_G$, and it was later shown by
Braverman \etal that $K_G$ is strictly smaller than this bound. The best known lower bound on $K_G$
is about $1.676$, due to (an unpublished manuscript of) Reeds \cite{Reeds91} (see also \cite{KO09}
for a proof).

An upper bound of $K_G$ on the approximation factor also
follows from the work of Nesterov \cite{Nesterov98} for any $p \geq 2 \geq q$. A later work of
Steinberg \cite{Steinberg05} also gave an upper bound of 
$\min\inbraces{\gamma_p/\gamma_q, \gamma_{q^*}/\gamma_{p^*}}$, where $\gamma_p$ denotes
$p^{th}$  norm of a standard normal random variable (\ie the $p$-th root of the $p$-th Gaussian moment). 
Note that Steinberg's bound is less than 
$K_G$ for some values of $(p,q)$, in particular for all values of the form $(2,q)$ with $q \leq 2$
(and equivalently $(p,2)$ for $p \geq 2$), where it equals $1/\gamma_q$ (and $1/\gamma_{p^*}$ for
$(p,2)$).

On the hardness side, \Briet, Regev and Saket \cite{BRS15} showed NP-hardness of $\pi/2$ for the
\onorm{\infty}{1},  strengthening a hardness result of Khot and Naor based on the Unique Games
Conjecture (UGC) \cite{KN08} (for a special case of the Grothendieck problem when the matrix $A$ is
positive semidefinite). Assuming UGC, a hardness result matching Reeds' lower bound was proved by Khot and
\Odonnell \cite{KO09}, and hardness of approximating within $K_G$ was proved by Raghavendra and
Steurer \cite{RS09}. 

For a related problem known as the $L_p$-Grothendieck problem, where the goal is to maximize
$\ip{x,Ax}$ for $\norm{p}{x} \leq 1$, results by  Steinberg \cite{Steinberg05} and 
Kindler, Schechtman and Naor \cite{KNS10} give an upper bound of $\gamma_p^2$, and a matching lower
bound was proved assuming UGC by \cite{KNS10}, which was strengthened to NP-hardness by Guruswami
\etal \cite{GRSW16}. 
However, note that this problem is quadratic and not necessarily bilinear, and is in
general much harder than the Grothendieck problems considered here. In particular, the case of 
$p = \infty$ only admits an $\Theta(\log n)$ approximation instead of $K_G$ for the bilinear version
\cite{AMMN06, ABHKS05}.

We extend the hardness results of \cite{BRS15} for the $\infty \to 1$
and $2 \to 1$ norms of a matrix to any $p \geq 2 \geq q$. 
The hardness factors obtained match the performance of known algorithms (due to Steinberg
\cite{Steinberg05}) for the cases of $2 \to q$ and $p \to 2$.
\begin{theorem}
    For any $p, q$ such that $\infty \geq p \geq 2 \geq q \geq 1$ and $\epsilon > 0$, 
    it is NP-hard to approximate the \onorm{p}{q} within a factor $1 / (\gamma_{p^*} \gamma_q) - \epsilon$. 
   \label{thm:main_nhc}
\end{theorem}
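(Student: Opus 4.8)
The plan is to generalize the dictatorship-test reduction that \Briet, Regev and Saket \cite{BRS15} used for $\norm{\infty}{1}{\cdot}$, pushing the invariance-principle analysis through with both norm parameters. It is convenient to use the bilinear reformulation $\norm{p}{q}{A}=\max\{\langle y,Ax\rangle:\norm{p}{x}\le 1,\ \norm{q^*}{y}\le 1\}$, and to note at the outset that $p,q^*\ge 2$ while $p^*,q\le 2$, so $\gamma_{p^*},\gamma_q\le 1$ and the target factor $1/(\gamma_{p^*}\gamma_q)$ exceeds $1$ (strictly unless $p=q=2$, where $\norm{2}{2}{\cdot}$ is the spectral norm, in \classP). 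I would reduce from a \labelcover instance with the smoothness property used in \cite{KNS10,GRSW16}, with bipartite constraint graph on vertex sets $(U,W)$, label sets $[L],[R]$, and projections $\pi_{uw}$. The matrix $A$ carries one long-code block $B_v$ per vertex: restricting $x$ to $B_u$ ($u\in U$) yields a function $f_u\colon\{-1,1\}^L\to\RRR$, restricting $y$ to $B_w$ yields $g_w\colon\{-1,1\}^R\to\RRR$, all folded to be odd, and $\langle y,Ax\rangle=\Ex{(u,w)}{\mathcal{T}_{\pi_{uw}}(f_u,g_w)}$ for a standard $\rho$-noise-perturbed projection test $\mathcal{T}_\pi$ with $\rho$ a small constant. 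With the appropriate normalization of blocks and edge weights, $\norm{p}{x}\le 1$ corresponds (up to $1+o(1)$) to the budget $\Ex{u}{\norm{L_p}{f_u}^{p}}\le 1$ in the uniform-measure $L_p$ norm, and $\norm{q^*}{y}\le 1$ to $\Ex{w}{\norm{L_{q^*}}{g_w}^{q^*}}\le 1$. For completeness, a labeling $\ell$ satisfying a $(1-o(1))$-fraction of constraints gives the solution $f_u=a_{\ell(u)}$, $g_w=b_{\ell(w)}$; these are odd and $\pm1$-valued, hence $\norm{L_p}{f_u}=\norm{L_{q^*}}{g_w}=1$ so no rescaling is needed, and on a satisfied edge $\mathcal{T}_{\pi_{uw}}$ evaluates to $\rho$, whence $\norm{p}{q}{A}\ge(1-o(1))\,\rho$.

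For soundness, suppose $x,y$ in the respective unit balls achieve $\langle y,Ax\rangle\ge\gamma_{p^*}\gamma_q\cdot\rho\cdot(1+\eta)$. After the standard preprocessing --- restricting to the bulk of vertices whose functions have $L_p$- (resp.\ $L_{q^*}$-) norm in $[\delta,\delta^{-1}]$, which costs a negligible fraction of the value --- run the influence-decoding step: if an $\Omega(\eta)$-fraction of edges admit coordinates $i$ of $f_u$ and $j$ of $g_w$ with $\pi_{uw}(j)=i$ both carrying non-negligible low-degree influence, then (using smoothness to preclude two relevant coordinates colliding under $\pi$) these define a randomized labeling satisfying $\Omega_\eta(1)$ of the constraints, contradicting \labelcover soundness. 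Otherwise no edge has a common influential coordinate; after a harmless degree truncation (the test's $\rho$-noise already suppresses all but $O_\rho(1)$ Hermite levels) the invariance principle replaces each Boolean pair $(f_u,g_w)$ on an edge by a Gaussian pair $(F,G)$ with $\norm{L_p(\gamma)}{F},\norm{L_{q^*}(\gamma)}{G}\le 1+o(1)$, changing the test value by $o(\rho)$, and the few edges still carrying an influential coordinate contribute only $o(\rho)$ to the average. We are thus left to show that the Gaussian noise test $\mathcal{T}_\rho$ (identity projection, after smoothness is used to dispose of the pull-back) satisfies $\mathcal{T}_\rho(F,G)\le(1+O(\rho))\,\rho\,\gamma_{p^*}\gamma_q$ whenever $\norm{L_p(\gamma)}{F}\le 1$ and $\norm{L_{q^*}(\gamma)}{G}\le 1$.

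This Gaussian estimate is the heart of the proof and the step requiring genuinely new work over \cite{BRS15} (who handled only $p\in\{2,\infty\}$, $q=1$, where one side of the budget is an $\ell_2$ constraint or an $\ell_\infty$ constraint). The plan: expand in Hermite levels, $\mathcal{T}_\rho(F,G)=\sum_{k\ \mathrm{odd}}\rho^{k}\langle F^{=k},G^{=k}\rangle$ (the noise-operator factors absorbed into $\rho^{k}$, and for odd folded functions only odd $k$ occur); since $\sum_k\norm{2}{F^{=k}}^{2}=\norm{2}{F}^{2}\le\norm{L_p(\gamma)}{F}^{2}\le 1$ (using $p\ge 2$) and likewise for $G$, the terms with $k\ge 3$ contribute at most $\rho^{3}$, leaving $\rho\,\langle F^{=1},G^{=1}\rangle\le\rho\,\norm{2}{F^{=1}}\,\norm{2}{G^{=1}}$. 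Now bound the degree-one masses by \Holder's inequality in Gaussian space: $\norm{2}{F^{=1}}=\sup_{\norm{2}{c}=1}\Ex{z\sim\gamma}{F(z)\,\langle c,z\rangle}\le\norm{L_p(\gamma)}{F}\cdot\sup_{\norm{2}{c}=1}\norm{L_{p^*}(\gamma)}{\langle c,z\rangle}=\gamma_{p^*}\,\norm{L_p(\gamma)}{F}\le\gamma_{p^*}$, since $\langle c,z\rangle$ is a standard Gaussian and hence has $L_{p^*}$-norm $\gamma_{p^*}$; symmetrically $\norm{2}{G^{=1}}\le\gamma_{(q^*)^*}\,\norm{L_{q^*}(\gamma)}{G}=\gamma_q$ because $q$ is dual to $q^*$. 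Hence $\mathcal{T}_\rho(F,G)\le\rho\,\gamma_{p^*}\gamma_q+\rho^{3}$, contradicting the assumed value once $\eta$ exceeds a constant multiple of $\rho$. Conceptually, the honest solution's Boolean dictator is ``for free'' a bounded, \emph{purely degree-one} function and so extracts the entire noise eigenvalue $\rho$, whereas a low-influence competitor --- which, after invariance, is a genuine Gaussian function --- can load at most a $\gamma_{p^*}$ (resp.\ $\gamma_q$) fraction of its $\ell_2$ budget onto degree one without violating its $L_p$ (resp.\ $L_{q^*}$) constraint, because linear functions of Gaussians are large in $L_r$ for $r>2$; this interplay is precisely what pins the optimal constant at $1/(\gamma_{p^*}\gamma_q)$.

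Combining the two cases, in the soundness case $\langle y,Ax\rangle<\gamma_{p^*}\gamma_q\cdot\rho\cdot(1+\eta)$, so the completeness-to-soundness ratio is at least $\bigl(\gamma_{p^*}\gamma_q(1+\eta)\bigr)^{-1}$; letting $\rho\to 0$ and the \labelcover soundness parameter $\to 0$ --- which keeps the reduction polynomial-time, hence gives NP-hardness under deterministic many-one reductions --- drives $\eta\to 0$ and yields hardness within $1/(\gamma_{p^*}\gamma_q)-\epsilon$ for every $\epsilon>0$. The main obstacle is exactly this Gaussian estimate together with the bookkeeping around it: establishing the $L_p$-version of the invariance principle with the right error terms, controlling the pull-back norms in $\mathcal{T}_{\pi_{uw}}$ via smoothness so that the reduction to the identity-projection test $\mathcal{T}_\rho$ is lossless, tracking the $o(1)$ and $o(\rho)$ errors from bulk-truncation and invariance against the scale $\gamma_{p^*}\gamma_q\,\rho$ so that a single small constant $\rho$ (and a single $\eta$) suffices, and checking that the odd-folding costs nothing in the completeness value; everything else is a routine adaptation of the \cite{BRS15} template.
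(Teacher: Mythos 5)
Your proposal takes a genuinely different route from the paper. You propose a \emph{direct} bilinear reduction from Smooth Label Cover, with two families of long-code blocks (one for $x$, one for $y$), a noise-stability projection test, and an invariance-principle argument culminating in a Gaussian degree-one estimate $\norm{2}{F^{=1}}\le\gamma_{p^*}\,\norm{L_p}{F}$ (and its mirror) via H\"older. The paper instead proves a one-sided intermediate hardness for $\norm{2}{r}{\cdot}$ with $r<2$ (\cref{thm:brs}): a BRS-style reduction producing a \emph{single} symmetric matrix $A$ so that in the completeness case there is a $\pm1$-valued $x$ with $Ax=x$, and in the soundness case $\enorm{2}{r}{A}\le\gamma_r+\eps^{2-r}$ for \emph{every} $r<2$ simultaneously (the soundness analysis uses Berry--Ess\'een, not an invariance principle). \cref{thm:main_nhc} then falls out by a composition trick: the instance is $A^2=AA^T$; completeness gives $\enorm{p}{q}{A^2}\ge 1$ since $A^2x=x$ for a flat $x$, and soundness uses the submultiplicativity $\enorm{p}{q}{A^2}\le\enorm{p}{2}{A}\cdot\enorm{2}{q}{A}$ together with the fact that, by duality and symmetry of $A$, $\enorm{p}{2}{A}=\enorm{2}{p^*}{A}\le\gamma_{p^*}+\delta/3$ and $\enorm{2}{q}{A}\le\gamma_q+\delta/3$. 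The product $\gamma_{p^*}\gamma_q$ thus appears as a literal product of two separate one-sided bounds from the \emph{same} matrix, not from a joint bilinear Gaussian estimate.

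Both approaches yield the same constant. What the paper's route buys is modularity and technical economy: the dictatorship test lives entirely on the $\ell_2$-constrained side, there is only one long-code block per vertex, the consistency check is baked into a single orthogonal projector, and a Berry--Ess\'een inequality for fractional moments suffices --- no invariance principle, no bilinear pull-back bookkeeping, no separate handling of two asymmetric norm budgets. The same matrix $A$ and \cref{thm:brs} are also reused verbatim for the hypercontractive theorems. What your route buys is a sharper conceptual explanation of \emph{why} the answer is $\gamma_{p^*}\gamma_q$: the dictator loads all its mass on Hermite degree one for free, while a low-influence Gaussian competitor subject to an $L_p$ (resp.\ $L_{q^*}$) budget can load only a $\gamma_{p^*}$ (resp.\ $\gamma_q$) fraction of its $\ell_2$ mass onto degree one. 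Your Gaussian estimate and its H\"older proof are correct (and give exactly the needed bound, since it is the unconstrained low-degree projection that matters, not the linear-function case). The cost is that you would need to establish an $L_p$-budget version of the invariance principle with the right error scaling, carry the pull-back norms through the smooth projections carefully, and juggle two asymmetrically-constrained blocks per edge --- all feasible in the style of KNS/GRSW, but substantially heavier machinery than the paper's $A^2$ composition.
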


In subsequent work \cite{BGGLT18b} motivated by the hardness results herein, we also give an 
improved approximation for \onorm{p}{q} when $2 \in [q,p]$ (inspired by the above hardness result) 
which achieves an approximation factor of $C_0\cdot (1 / (\gamma_{p^*} \gamma_q))$, where 
$C_0 \approx 1/(\ln(1+\sqrt{2}))$ is a constant comparable to that arising in Krivine's upper bound on 
the Grothendieck constant \cite{Krivine77}. 

Both \cref{thm:main_hype_vanilla} and \cref{thm:main_nhc} are consequences of a more technical theorem,
which proves hardness of approximating $\norm{2}{r}{A}$ for $r < 2$ (and hence $\norm{r^*}{2}{A}$
for $r^* > 2$) while providing additional structure in the matrix $A$ produced by the
reduction. This is proved in \cref{sec:nhc}.
We also show our methods can be used to provide a simple proof (albeit via randomized reductions)
of the $2^{\Omega((\log n)^{1-\eps})}$ hardness 
for the non-hypercontractive case when $2 \notin [q,p]$, which was
proved by \cite{BV11}. This is presented in \cref{sec:reverse}.

\subsection{Proof Overview}
\label{sec:overview}
\paragraph{The hardness of proving hardness for hypercontractive norms.}
Reductions for  various geometric problems use a ``smooth'' version of the Label Cover problem,
composed with long-code functions for the labels of the variables. In various reductions, including
the ones by Guruswami \etal \cite{GRSW16} and \Briet \etal \cite{BRS15} (which we closely follow) the
solution vector $x$ to the geometric problem consists of the Fourier coefficients of the various
long-code functions, with a ``block'' $x_v$ for each vertex of the label-cover instance. 
The relevant geometric operation (transformation by the matrix $A$ in our case)
consists of projecting to a space which enforces the consistency constraints derived from the
label-cover problem, on the Fourier coefficients of the encodings.

However, this strategy presents with two problems when designing reductions for hypercontractive
norms. Firstly, while projections maintain the $\ell_2$ norm of encodings corresponding to
consistent labelings and reduce that of inconsistent ones, their behaviour is harder to analyze for
$\ell_p$ norms for $p \neq 2$. Secondly, the \emph{global} objective of maximizing $\norm{q}{Ax}$ is
required to enforce different behavior within the blocks $x_v$, than in the full vector $x$. The
block vectors $x_v$ in the solution corresponding to a satisfying assignment of label cover are
intended to be highly sparse, since they correspond to ``dictator functions'' which have only one
non-zero Fourier coefficient. This can be enforced in a test using the fact that for a vector 
$x_v \in \R^t$, $\norm{q}{x_v}$ is a convex function of $\norm{p}{x_v}$ when $p \leq q$, and is
maximized for vectors with all the mass concentrated in a single coordinate.
However, a global objective function which tries to maximize $\sum_v \norm{x_v}_q^q$, 
also achieves a high value from global vectors $x$ which concentrate all the mass on coordinates
corresponding to  few vertices of the label cover instance, and do not carry any meaningful
information about assignments to the underlying label cover problem.
 
Since we can only check for a global objective which is the $\ell_q$ norm of some vector
involving coordinates from blocks across the entire instance, it is not clear how to enforce local
Fourier concentration (dictator functions for individual long codes) and global well-distribution
(meaningful information regarding assignments of most vertices) using the same objective function. 
While the projector $A$ also enforces a linear relation between the block vectors $x_u$ and $x_v$
for all edges $(u,v)$ in the label cover instance, using this to ensure well-distribution across blocks
 seems to require a very high density of constraints in the label cover instance, and no hardness
 results are available in this regime.
\paragraph{Our reduction.}
We show that when $2 \notin [p,q]$, it is possible to bypass the above issues using hardness of
$\norm{2}{r}{A}$ as an intermediate (for $r < 2$). Note that since $\norm{r}{z}$ is a \emph{concave}
function of $\norm{2}{z}$ in this case, the test favors vectors in which the mass is
well-distributed and thus solves the second issue. For this, we use local tests based on
the Berry-\Esseen
theorem (as in \cite{GRSW16} and \cite{BRS15}). Also, since the starting point now is the $\ell_2$
norm, the effect of projections is easier to analyze. This reduction is discussed in \cref{sec:nhc}.

By duality, we can interpret the above as a
hardness result for $\norm{p}{2}{A}$ when $p > 2$ (using $r = p^*$). We then convert this to a
hardness result for \onorm{p}{q} in the hypercontractive case by composing $A$ with an ``approximate
isometry'' $B$ from $\ell_2 \to \ell_q$ (\ie $\forall y~ \norm{q}{By} \approx \norm{2}{y}$) since we
can replace $\norm{2}{Ax}$ with $\norm{q}{BAx}$.
Milman's version of the Dvoretzky theorem \cite{Vershynin17} implies random operators to a
sufficiently high dimensional ($n^{O(q)}$) space satisfy this property, which then yields constant
factor hardness results for the \onorm{p}{q}. A similar application of Dvoretzky's theorem also
appears in an independent work of Krishnan \etal \cite{KMW18} on sketching matrix norms.

We also show that the hardness for hypercontractive norms can be
amplified via tensoring. This was known previously for the \onorm{2}{4} using an argument based
on parallel repetition for QMA \cite{HM13}, and for the case of $p = q$ \cite{BV11}. We give a
simple argument based on convexity, which proves this for all $p \leq q$, but appears to
have gone unnoticed previously. The amplification is then used to prove hardness of approximation
within almost polynomial factors.

\paragraph{Non-hypercontractive norms.}
We also use the hardness of $\norm{2}{r}{A}$ to obtain hardness for the non-hypercontractive
case of $\norm{p}{q}{A}$ with $q < 2 < p$, by using an operator that ``factorizes'' through 
$\ell_2$.
In particular, we obtain hardness results for $\norm{p}{2}{A}$ and $\norm{2}{q}{A}$ (of factors
$1/\gamma_{p^*}$and $1/\gamma_q$ respectively) using the reduction in \cref{sec:nhc}. We
then combine these hardness results using additional properties of the operator $A$ obtained in the
reduction, to obtain a hardness of factor $(1/\gamma_{p^*}) \cdot (1/\gamma_{q})$ for the
\onorm{p}{q} for $p > 2 > q$. The composition, as well as the hardness
results for hypercontractive norms, are presented in \cref{sec:hypercontractive}.

We also obtain a simple proof of the $2^{\Omega((\log n)^{1-\eps})}$ hardness for the
non-hypercontractive case when $2 \notin [q,p]$ (already proved by Bhaskara and Vijayaraghavan
\cite{BV11}) via an approximate isometry argument as used in the hypercontractive case. In the
hypercontractive case, we started from a constant factor hardness of the \onorm{p}{2} and the same
factor for \onorm{p}{q} using the fact that for a random Gaussian matrix $B$ of appropriate
dimensions, we have $\norm{q}{Bx} \approx \norm{2}{x}$ for all $x$. We then amplify the hardness via
tensoring. 
In the non-hypercontractive case, we start with a hardness for \onorm{p}{p} (obtained via the above
isometry), which we \emph{first} amplify via tensoring. We then apply another approximate isometry
result due to Schechtman \cite{Schechtman87}, which gives a samplable distribution $\calD$ over
random matrices $B$ such that with high probability over $B$, we have $\norm{q}{Bx} \approx
\norm{p}{x}$ for all $x$.

\bigskip

We thus view the above results as showing that combined with a basic hardness for \onorm{p}{2}, the
basic ideas of duality, tensoring, and embedding (which builds on powerful results from functional
analysis) can be combined in powerful ways to prove strong results in both the hypercontractive and
non-hypercontractive regimes. 



\section{Preliminaries and Notation}\label{sec:prelims}
\ifnotconf
\subsection{Matrix Norms}
\fi
For a vector $x\in\R^n$, throughout this paper we will use $x(i)$ to denote its $i$-th coordinate. 
For $p \in [1, \infty)$, we define $\cnorm{p}{\cdot}$ to denote the counting $p$-norm and $\enorm{p}{\cdot}$
to denote the expectation $p$-norm; \ie for a vector $x\in\R^n$, 
\[
    \cnorm{p}{x} := \inparen{\sum_{i\in [n]} |x(i)|^{p}}^{1/p}
 \quad \mbox{ and } \quad 
    \enorm{p}{x} := \Ex{i\sim [n]}{|x(i)|^p}^{1/p} =\inparen{\frac1n \cdot \sum_{i\in [n]} |x(i)|^{p}}^{1/p}.
\]
Clearly $\cnorm{p}{x} = \enorm{p}{x}\cdot n^{1/p}$. 
For $p = \infty$, we define $\cnorm{\infty}{x} = \enorm{\infty}{x} := \max_{i \in [n]} |x(i)|$. 
We will use $p^*$ to 
denote the `dual' of $p$, i.e. $p^* = p/(p-1)$. 
Unless stated otherwise, we usually work with $\cnorm{p}{\cdot}$. 
We also define inner product $\cip{x,y}$ to denote the inner product under the counting measure unless stated
otherwise; \ie for two vectors $x, y \in \R^n$,
$\cip{x, y} := \sum_{i\in [n]} x(i)y(i)$.

We next record a well-known fact about $p$-norms that is used 
in establishing many duality statements.  

\begin{observation}
\label{p-norm:dual}
    For any $p\in [1,\infty]$, 
$\cnorm{p}{x} = \sup_{\cnorm{p^*}{y}=1} \,\mysmalldot{y}{x}$.
\end{observation}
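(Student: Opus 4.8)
The statement to prove is \cref{p-norm:dual}: for any $p \in [1,\infty]$, $\cnorm{p}{x} = \sup_{\cnorm{p^*}{y}=1} \mysmalldot{y}{x}$. This is the standard duality characterization of $\ell_p$ norms, essentially a restatement of Hölder's inequality together with its equality case.

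\textbf{The plan.} The proof proceeds by two inequalities. For the ``$\geq$'' direction, \Holder's inequality gives $\mysmalldot{y}{x} \le \cnorm{p^*}{y} \cdot \cnorm{p}{x} = \cnorm{p}{x}$ for every $y$ with $\cnorm{p^*}{y}=1$, so the supremum is at most $\cnorm{p}{x}$. For the ``$\le$'' direction, I exhibit a specific $y$ achieving (or approaching) the bound. If $x = 0$ the claim is trivial, so assume $x \ne 0$.

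\textbf{Key steps for the achieving vector.} I would split into cases on $p$. For $1 < p < \infty$ (so $1 < p^* < \infty$), set $y(i) = \sgn(x(i)) \cdot |x(i)|^{p-1} / \cnorm{p}{x}^{p-1}$. Then a direct computation shows $\cnorm{p^*}{y}^{p^*} = \sum_i |x(i)|^{(p-1)p^*} / \cnorm{p}{x}^{(p-1)p^*} = \sum_i |x(i)|^p / \cnorm{p}{x}^p = 1$, using the identity $(p-1)p^* = p$; and $\mysmalldot{y}{x} = \sum_i |x(i)|^p / \cnorm{p}{x}^{p-1} = \cnorm{p}{x}$. For $p = 1$ (so $p^* = \infty$), pick any coordinate $i_0$ with $|x(i_0)| = \cnorm{\infty}{x}$... wait, rather: for $p=1$ take $y(i) = \sgn(x(i))$, which has $\cnorm{\infty}{y} \le 1$ and gives $\mysmalldot{y}{x} = \sum_i |x(i)| = \cnorm{1}{x}$. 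For $p = \infty$ (so $p^* = 1$), let $i_0$ be an index maximizing $|x(i)|$ and set $y = \sgn(x(i_0)) e_{i_0}$, the signed standard basis vector; then $\cnorm{1}{y} = 1$ and $\mysmalldot{y}{x} = |x(i_0)| = \cnorm{\infty}{x}$. In every case the supremum is attained, so it equals $\cnorm{p}{x}$.

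\textbf{Anticipated obstacle.} There is no real obstacle here — this is a textbook fact and the only mild care needed is (i) handling the degenerate case $x = 0$, (ii) correctly treating the endpoint exponents $p \in \{1, \infty\}$ where the general formula for $y$ degenerates, and (iii) checking the arithmetic identity $(p-1)p^* = p$ that makes the normalization work out. Since the extremal $y$ is explicit in all cases, the supremum is actually a maximum, which is slightly stronger than the stated claim; I would just phrase the proof to establish the displayed equality as written.
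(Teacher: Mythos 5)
Your proof is correct and is the standard argument: the paper states \cref{p-norm:dual} as a well-known fact without providing a proof, so there is nothing to compare against. Your handling of the endpoint cases $p\in\{1,\infty\}$ and the exponent identity $(p-1)p^\ast = p$ is all in order; the only cosmetic nit is that for $p=1$ you may want to set $y(i)=1$ when $x(i)=0$ so that $\cnorm{\infty}{y}=1$ exactly (rather than relying on some nonzero coordinate existing), but since you already dispatched $x=0$ separately this is harmless.
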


We next define the primary problems of interest in this paper. 
\begin{definition}
    For $p,q\in [1,\infty]$, the \onorm{p}{q} problem is to maximize 
    \[
        \frac{\cnorm{q}{Ax}}{\cnorm{p}{x}}
    \]
    given an $m\times n$ matrix $A$. 
\end{definition}

\begin{definition}
    For $p,q\in [1,\infty]$, we define a  generalization of the Grothendieck problem, namely \groth{p}{q}, as the 
    problem of computing 
    \[
        \sup_{\cnorm{p}{y}=1} \,\sup_{\cnorm{q}{x}=1} \mysmalldot{y}{Ax}
    \]
    given an $m\times n$ matrix $A$. 
\end{definition}

The original Grothendieck problem is precisely \groth{\infty}{\infty}. 
We next state the well known equivalence of \onorm{p}{q}, \groth{q\textsuperscript{*}}{p}, and 
\onorm{q\textsuperscript{*}}{p\textsuperscript{*}}.
\begin{observation}
\label{onorm:groth}
    For any $p,q\in [1,\infty]$ and any matrix $A$,
\[
\cnorm{p}{q}{A} =\sup_{\cnorm{q^*}{y}=1}\,\sup_{\cnorm{p}{x}=1}\mysmalldot{y}{Ax} = \cnorm{q^*}{p^*}{A^T}.
\]
\end{observation}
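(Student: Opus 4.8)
The plan is to derive both equalities from the dual characterization of $p$-norms recorded in \cref{p-norm:dual}, together with the homogeneity of the Rayleigh-type quotient $\cnorm{q}{Ax}/\cnorm{p}{x}$. Since the only tools needed are already in hand, the argument is short, and I will present it essentially as a two-step substitution.

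\textbf{Step 1 (first equality).} By scale-invariance of the quotient, $\cnorm{p}{q}{A} = \sup_{\cnorm{p}{x}=1}\cnorm{q}{Ax}$ (the maximum over $x\neq 0$ is attained on the unit sphere). Applying \cref{p-norm:dual} to the vector $Ax$ gives $\cnorm{q}{Ax} = \sup_{\cnorm{q^*}{y}=1}\mysmalldot{y}{Ax}$. Substituting this in and interchanging the two suprema — harmless because both index sets are fixed and the objective $\mysmalldot{y}{Ax}$ is jointly continuous (and in finite dimensions the suprema are attained) — yields $\cnorm{p}{q}{A} = \sup_{\cnorm{q^*}{y}=1}\sup_{\cnorm{p}{x}=1}\mysmalldot{y}{Ax}$, which is the middle expression.

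\textbf{Step 2 (second equality).} Rewrite $\mysmalldot{y}{Ax} = \mysmalldot{A^Ty}{x}$, so that for each fixed $y$ the inner supremum over $\cnorm{p}{x}=1$ is, again by \cref{p-norm:dual}, exactly $\cnorm{p^*}{A^Ty}$. Hence the double supremum equals $\sup_{\cnorm{q^*}{y}=1}\cnorm{p^*}{A^Ty}$, and by the same scale-invariance observation as in Step~1 this is precisely $\cnorm{q^*}{p^*}{A^T}$.

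I do not expect a genuine obstacle; the only points worth a sentence of care are that the suprema are finite and attained (immediate in finite dimensions, and the definition of the operator norm already excludes $x=0$), and that \cref{p-norm:dual} is being used over the full range $p,q\in[1,\infty]$ including the endpoints $p=\infty$ or $q=1$ — but since \cref{p-norm:dual} is stated in exactly that generality, the endpoint cases require no separate treatment.
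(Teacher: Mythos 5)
Your proof is correct and follows essentially the same route as the paper's: both apply the dual characterization of $\cnorm{p}{\cdot}$ (\cref{p-norm:dual}) to $\cnorm{q}{Ax}$, interchange the two suprema, use $\mysmalldot{y}{Ax}=\mysmalldot{A^Ty}{x}$, and collapse the inner supremum back into a norm of $A^Ty$.
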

\begin{proof}
Using $\mysmalldot{y}{Ax} = \mysmalldot{x}{A^Ty}$, 
\begin{align*}
        \cnorm{p}{q}{A}
&=        \sup_{\cnorm{p}{x}=1}\cnorm{q}{Ax}
=         \sup_{\cnorm{p}{x}=1}\,\sup_{\cnorm{q^*}{y}=1}\mysmalldot{y}{Ax} 
=        \sup_{\cnorm{q^*}{y}=1}\,\sup_{\cnorm{p}{x}=1}\mysmalldot{y}{Ax} \\
&=    
        \sup_{\cnorm{p}{x}=1}\,\sup_{\cnorm{q^*}{y}=1}\mysmalldot{x}{A^Ty} 
        =
        \sup_{\cnorm{q^*}{y}=1}\cnorm{p^*}{A^Ty} 
	=
        \cnorm{q^*}{p^*}{A^T}\mper 
\tag*{\qedhere}
    \end{align*}
\end{proof}

The following observation will be useful for composing hardness maps for \onorm{p}{2} and \onorm{2}{q} 
to get \onorm{p}{q} hardness for when $p>q$ and $p\geq 2 \geq q$. 
\begin{observation}
\label{composition:soundness}
    For any $p,q,r\in [1,\infty]$ and any matrices $B, C$, 
\[
        \cnorm{p}{q}{BC} 
        =
        \sup_x \frac{\cnorm{q}{BCx}}{\cnorm{p}{x}}
        \leq 
        \sup_x \frac{\cnorm{r}{q}{B} \cnorm{r}{Cx}}{\cnorm{p}{x}} 
\leq 
        \cnorm{r}{q}{B} \cnorm{p}{r}{C}.
\]
\end{observation}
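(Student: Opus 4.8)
The statement to prove is \cref{composition:soundness}: for any $p,q,r\in[1,\infty]$ and matrices $B,C$,
\[
\cnorm{p}{q}{BC} \leq \cnorm{r}{q}{B}\cdot\cnorm{p}{r}{C}.
\]

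The plan is to prove this directly from the variational definition of the operator norm, with no appeal to duality needed — this is just submultiplicativity of operator norms under composition, specialized to a choice of intermediate norm $\ell_r$. First I would write $\cnorm{p}{q}{BC} = \sup_{x\neq 0} \cnorm{q}{BCx}/\cnorm{p}{x}$ directly from \cref{composition:soundness}'s own statement (or the preceding definition). The key step is a two-term chain of inequalities. For the first inequality, fix any $x\neq 0$ and apply the defining inequality of $\cnorm{r}{q}{B}$ to the vector $Cx$: by definition $\cnorm{q}{By}\leq \cnorm{r}{q}{B}\cdot\cnorm{r}{y}$ for every $y$, so in particular $\cnorm{q}{B(Cx)}\leq \cnorm{r}{q}{B}\cdot\cnorm{r}{Cx}$. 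Dividing by $\cnorm{p}{x}$ and taking the supremum over $x$ gives
\[
\cnorm{p}{q}{BC}=\sup_x\frac{\cnorm{q}{BCx}}{\cnorm{p}{x}}\leq \sup_x\frac{\cnorm{r}{q}{B}\cdot\cnorm{r}{Cx}}{\cnorm{p}{x}} = \cnorm{r}{q}{B}\cdot\sup_x\frac{\cnorm{r}{Cx}}{\cnorm{p}{x}}.
\]
For the second inequality, recognize that $\sup_x \cnorm{r}{Cx}/\cnorm{p}{x}$ is exactly $\cnorm{p}{r}{C}$ by definition, so the right-hand side equals $\cnorm{r}{q}{B}\cdot\cnorm{p}{r}{C}$, completing the bound.

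There is essentially no obstacle here; the only points requiring a word of care are (i) that the scalar $\cnorm{r}{q}{B}$ is nonnegative so it can be pulled through the supremum without reversing inequalities, and (ii) degenerate cases — if $C=0$ then $BC=0$ and the left side is $0$, so the inequality holds trivially; if $Cx=0$ for some $x\neq0$ that term contributes $0$ to the supremum and causes no issue. One should also note the inequality is stated for $\cnorm{\cdot}{\cdot}$ (counting norms), and the same proof works verbatim for expectation norms since only the definitional inequality $\cnorm{q}{By}\leq\cnorm{r}{q}{B}\cnorm{r}{y}$ and homogeneity are used, both of which hold for either normalization. I would present this as the three-line display already essentially written in the excerpt, adding only the one sentence justifying each step.
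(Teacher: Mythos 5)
Your proof is correct and is exactly the standard submultiplicativity argument the paper implicitly relies on (the paper states the observation without a written proof, embedding the justification in the displayed chain). The two steps you identify — applying the defining bound of $\cnorm{r}{q}{B}$ to $y=Cx$, then recognizing $\sup_x \cnorm{r}{Cx}/\cnorm{p}{x}=\cnorm{p}{r}{C}$ — are precisely what makes the displayed chain hold, and your remarks on nonnegativity and degenerate cases are appropriate.
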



\ifnotconf
\subsection{Fourier Analysis}
\label{sec:fourier}
We introduce some basic facts about Fourier analysis of Boolean functions.
Let $R \in \N$ be a positive integer, and consider a function $f : \{ \pm 1 \}^R \to \R$. 
For any subset $S \subseteq [R]$ let $\chi_S := \prod_{i \in S} x_i$. 
Then we can represent $f$ as 
\begin{equation}
\label{eq:inverse_fourier}
f(x_1, \dots, x_R) = \sum_{S \subseteq [R]} \hatf(S) \cdot \chi_S(x_1, \dots x_R),
\end{equation}
where 
\begin{equation}
\label{eq:fourier}
\hatf(S) = \E_{x \in \{ \pm 1 \}^R} [f(x) \cdot \chi_S(x)] \mbox{ for all } S \subseteq [R].
\end{equation}
The {\em Fourier transform} refers to a linear operator $F$ that maps $f$ to $\hatf$ as defined as~\eqref{eq:fourier}. 
We interpret $\hatf$ as a $2^R$-dimensional vector whose coordinates are indexed by $S \subseteq [R]$. 
Endow the expectation norm and the expectation norm to $f$ and $\hatf$ respectively; i.e., 
\[
\enorm{p}{f} := \left( \Ex{x \in \{ \pm 1 \}^R}{|f(x)|^p} \right)^{1/p}
\quad \mbox{ and } \quad 
\cnorm{p}{\hatf} := \left( \sum_{S \subseteq [R]} | \hatf(S) |^p \right)^{1/p}.
\]
as well as the corresponding inner products $\mysmalldot{f}{g}$ and $\mysmalldot{\hatf}{\hatg}$ consistent with their $2$-norms.
We also define the {\em inverse Fourier transform} $F^T$ to be a linear operator 
that maps a given $\hatf : 2^R \to \R$ to $f : \{ \pm 1 \}^R \to \R$ defined as in~\eqref{eq:inverse_fourier}. 
We state the following well-known facts from Fourier analysis. 
\begin{observation} [Parseval's Theorem]
For any $f : \{ \pm 1 \}^R \to \R$, $\enorm{2}{f} = \cnorm{2}{F f}$.
\end{observation}
\begin{observation} $F$ and $F^T$ form an adjoint pair; i.e., for any $f : \{ \pm 1 \}^R \to \R$
and $\hatg : 2^R \to \R$, 
\[ 
\mysmalldot{\hatg}{Ff} = 
\mysmalldot{F^T \hatg}{f}.
 \]
\end{observation}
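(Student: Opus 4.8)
The plan is to expand both inner products straight from the definitions in \eqref{eq:fourier} and \eqref{eq:inverse_fourier} and match them term by term. Note first that the left-hand side $\mysmalldot{\hatg}{Ff}$ is a \emph{counting} inner product on $2^R$-indexed vectors, while the right-hand side $\mysmalldot{F^T\hatg}{f}$ is an \emph{expectation} inner product on functions over $\{\pm 1\}^R$, so the statement genuinely relates two spaces carrying different normalizations; keeping this distinction straight is the only real bookkeeping point.

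Starting from the left side and substituting the definition of $F$,
\[
\mysmalldot{\hatg}{Ff} \;=\; \sum_{S \subseteq [R]} \hatg(S)\,\hatf(S) \;=\; \sum_{S \subseteq [R]} \hatg(S)\, \Ex{x \in \{\pm 1\}^R}{f(x)\,\chi_S(x)}.
\]
Then I would interchange the finite sum over $S$ with the expectation over $x$ (both index sets are finite, so Fubini is trivial) and pull out the factor $f(x)$:
\[
\mysmalldot{\hatg}{Ff} \;=\; \Ex{x \in \{\pm 1\}^R}{f(x) \sum_{S \subseteq [R]} \hatg(S)\,\chi_S(x)} \;=\; \Ex{x \in \{\pm 1\}^R}{f(x)\,(F^T\hatg)(x)},
\]
where the last equality is precisely the definition of the inverse transform $F^T$ applied to $\hatg$, as in \eqref{eq:inverse_fourier}. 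The final expression is by definition the expectation inner product $\mysmalldot{F^T\hatg}{f}$, which completes the argument.

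There is essentially no obstacle here: once the two inner-product conventions are tracked correctly, the identity reduces to linearity of expectation and a swap of finite sums. (Alternatively one could deduce it from Parseval's Theorem together with invertibility of $F$, but the direct computation above is cleaner and fully self-contained.)
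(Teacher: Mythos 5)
Your computation is correct and is the standard proof: expand $\mysmalldot{\hatg}{Ff}$ via the definition of $F$, swap the finite sum over $S$ with the expectation over $x$, and recognize the inner factor as $(F^T\hatg)(x)$, carefully tracking that the two inner products use different normalizations. The paper states this observation without proof as a well-known fact, so there is no competing argument to compare against; your direct verification is exactly what a reader would be expected to supply.
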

\begin{observation}
$F^T F$ is the identity operator.
\end{observation}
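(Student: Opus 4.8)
The plan is to verify directly that the composition $F^T F$, applied to an arbitrary $f : \{\pm 1\}^R \to \R$, returns $f$; equivalently, that the formulas \eqref{eq:fourier} and \eqref{eq:inverse_fourier} are genuinely inverse to one another. First I would substitute the definition of $\hatf(S)$ from \eqref{eq:fourier} into the right-hand side of \eqref{eq:inverse_fourier}, obtaining
\[
(F^T F f)(x) \;=\; \sum_{S \subseteq [R]} \hatf(S)\, \chi_S(x) \;=\; \frac{1}{2^R} \sum_{y \in \{\pm 1\}^R} f(y) \sum_{S \subseteq [R]} \chi_S(x)\,\chi_S(y),
\]
after swapping the (finite) sums. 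It then suffices to evaluate the inner sum $\sum_{S \subseteq [R]} \chi_S(x)\chi_S(y)$.

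The key computation is that this sum factorizes: since $\chi_S(x)\chi_S(y) = \prod_{i \in S} x_i y_i$, summing over all subsets $S \subseteq [R]$ gives $\prod_{i=1}^R (1 + x_i y_i)$. If $x = y$ each factor equals $1 + x_i^2 = 2$, so the product is $2^R$; if $x \neq y$ then $x_j y_j = -1$ for some coordinate $j$, and that factor kills the product. Hence $\sum_{S} \chi_S(x)\chi_S(y) = 2^R \cdot \indicator{x = y}$, and substituting this back gives $(F^T F f)(x) = f(x)$ for every $x$, which proves $F^T F = I$.

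There is essentially no obstacle here — the one point to be careful about is the normalization: the expectation inner product / norm carries the $1/2^R$ on the $f$-side while the $\hatf$-side uses the counting measure, and it is precisely this asymmetry that makes $F^T$ (as opposed to $2^R F^T$ or $2^{-R} F^T$) the correct inverse. An equivalent and perhaps cleaner phrasing would invoke orthonormality of the characters, $\mysmalldot{\chi_S}{\chi_T} = \indicator{S = T}$ under the expectation inner product (the $x$-diagonal case of the same factorization above): then \eqref{eq:fourier} expresses $\hatf(S)$ as the coordinate of $f$ in the orthonormal basis $\{\chi_S\}_{S \subseteq [R]}$, and \eqref{eq:inverse_fourier} is merely the reconstruction of $f$ from those coordinates, so $F^T F = I$ is immediate.
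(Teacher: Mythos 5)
Your proof is correct; it is the standard direct verification of Fourier inversion on $\{\pm 1\}^R$ via the character orthogonality identity $\sum_{S\subseteq[R]}\chi_S(x)\chi_S(y)=\prod_{i=1}^R(1+x_iy_i)=2^R\cdot\indicator{x=y}$, and your remark about the normalization (expectation measure on the $f$ side, counting measure on the $\hatf$ side) is exactly the right thing to single out. The paper itself asserts this observation without proof, as one of a short list of ``well-known facts,'' so there is no paper argument to compare against. One route worth noting, since it fits the paper's sequencing, is that the observation follows formally from the two observations stated immediately before it: Parseval's theorem says $F$ is an isometry from the expectation-norm space to the counting-norm space, which after polarization gives $\mysmalldot{Ff}{Fg}=\mysmalldot{f}{g}$; combining with the adjoint relation $\mysmalldot{F^TFf}{g}=\mysmalldot{Ff}{Fg}$ yields $\mysmalldot{F^TFf}{g}=\mysmalldot{f}{g}$ for all $g$, hence $F^TF=I$. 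This is essentially your second, orthonormal-basis phrasing, just packaged through the two preceding observations rather than re-deriving orthogonality from scratch.
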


In~\cref{sec:nhc}, we also consider a {\em partial} Fourier transform $F_P$
that maps a given function $f : \{ \pm 1 \}^R \to \R$ to a vector $\hatf : [R] \to \R$
defined as $\hatf(i) = \E_{x \in \{ \pm 1 \}^R} [f(x) \cdot x_i]$ for all $i \in [R]$.
It is the original Fourier transform where $\hatf$ is further projected to $R$ coordinates corresponding to linear coefficients. 
The partial inverse Fourier transform $F_P^T$ is a transformation that maps 
a vector $\hatf : [R] \to \R$ to a function $f : \{ \pm 1 \}^R \to \R$ as in~\eqref{eq:inverse_fourier} restricted to $S = \{ i \}$ for some $i \in [R]$. 
These partial transforms satisfy similar observations as above: 
(1) $\enorm{2}{f} \geq \cnorm{2}{F_P f}$, 
(2) $\enorm{2}{F_P^T \hatf} = \cnorm{2}{\hatf}$,
(3) $F_P$ and $F_P^T$ form an adjoint pair, 
and (4) $(F_P^T F_P) f = f$ if and only if $f$ is a linear function. 

\subsection{Smooth Label Cover}
An instance of \labelcover is given by a quadruple $\calL = (G, [R], [L], \Sigma)$ that consists of a regular connected graph $G = (V, E)$, a label set $[R]$ for some positive integer $n$, and a collection $\Sigma = ((\pi_{e, v}, \pi_{e, w}) : e = (v, w) \in E)$ of pairs of maps both from $[R]$ to $[L]$ associated with the endpoints of the edges in $E$. Given a {\em labeling} $\ell : V \to [R]$, we say that an edge $e = (v, w) \in E$ is {\em satisfied } if $\pi_{e, v}(\ell(v)) = \pi_{e, w}(\ell(w))$. Let $\opt(\calL)$ be the maximum fraction of satisfied edges by any labeling. 

The following hardness result for \labelcover, given in~\cite{GRSW16}, is a slight variant of the original construction due to~\cite{Khot02}. The theorem also describes the various structural properties, including smoothness, that are identified by the hard instances.
\begin{theorem}
\label{thm:smooth_label_cover}
For any $\xi >0$ and $J \in \N$, there exist positive integers $R = R(\xi, J), L = L(\xi, J)$ and $D = D(\xi)$, and a \labelcover instance $(G, [R], [L], \Sigma)$ as above such that
\begin{itemize}
\item (Hardness): It is NP-hard to distinguish between the following two cases:
\begin{itemize}
\item (Completeness): $\opt(\calL) = 1$. 
\item (Soundness): $\opt(\calL) \leq \xi$.
\end{itemize}

\item (Structural Properties): 
\begin{itemize}
\item ($J$-Smoothness): For every vertex $v \in V$ and distinct $i, j \in [R]$, we have 
\[
\Pr{e : v \in e}{\pi_{e,v}(i)=\pi_{e,v}(j)} \leq 1 / J.
\]
\item ($D$-to-$1$): For every vertex $v \in V$, edge $e \in E$ incident on $v$, and $i \in [L]$, we have $|\pi^{-1}_{e, v}(i)| \leq D$; that is at most $D$ elements in $[R]$ are mapped to the same element in $[L]$. 

\item (Weak Expansion): For any $\delta > 0$ and vertex set $V' \subseteq V$ such that $|V'| = \delta \cdot |V|$, the number of edges among the vertices in $|V'|$ is at least $(\delta^2 / 2) |E|$. 
\end{itemize}

\end{itemize} 
\end{theorem}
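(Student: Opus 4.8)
The plan is to follow the classical route to smooth \labelcover hardness (Khot~\cite{Khot02}, in the form used in~\cite{GRSW16}): reduce from a gap version of a bounded-occurrence \textsf{3-SAT}, package it as a two-prover one-round game, amplify its soundness by parallel repetition, and arrange the structural properties so that they follow by inspection of the gadget.

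First I would invoke the PCP theorem together with the standard expander-based regularization trick to obtain a constant $\epsilon_0 > 0$ such that it is NP-hard to distinguish satisfiable instances $\phi$ of \textsf{3-SAT} in which every variable occurs in exactly five clauses from instances in which no assignment satisfies more than a $(1-\epsilon_0)$ fraction of the clauses; write $X$ for the variables and $\calC$ for the clauses of $\phi$. Now fix parameters $u = u(\xi)$ and $t = t(J,\epsilon_0)$ to be chosen later and build $\calL$ as follows. Left vertices (carrying the larger label set) are $u$-tuples of clauses $(C_1,\dots,C_u) \in \calC^u$, and a label is an assignment to the at most $3u$ variables appearing in $C_1,\dots,C_u$ that satisfies all of them, so $R \leq 7^u$. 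An edge is sampled by choosing, independently for each $j \in [u]$, a $t$-tuple of uniformly random variables $(x_j^{(1)},\dots,x_j^{(t)})$ from the three variables of $C_j$; the right endpoint remembers exactly these variables and a right label is an assignment to them ($L \leq 2^{tu}$), and an edge is declared satisfied when the restriction of the left label to the remembered variables equals the right label (so the left-endpoint projection is this restriction and the right-endpoint projection is the identity). Completeness is immediate: a satisfying assignment of $\phi$ induces, by restriction, a labeling of $\calL$ satisfying every edge, so $\opt(\calL) = 1$ in the completeness case.

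For soundness I would first analyze the single-coordinate game ($u = 1$). Given any global right-side assignment $\beta : X \to \{0,1\}$, the best left response to a clause $C$ is a satisfying assignment of $C$ Hamming-closest to $\beta|_C$: if $\beta$ already satisfies $C$ this equals $\beta|_C$ and every edge at $C$ is won, and otherwise it differs from $\beta$ in exactly one variable $z \in C$, so the edge is won only when $z$ is not among the $t$ remembered variables, an event of probability $(2/3)^t$. Hence the value of the single-coordinate game is at most $\opt(\phi) + (2/3)^t$; choosing $t = t(J,\epsilon_0)$ large enough that $(2/3)^t < \min\{1/J,\ \epsilon_0\}$, this is at most $1 - \eta$ for a constant $\eta > 0$ in the soundness case. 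Since $\calL$ is exactly the $u$-fold parallel repetition of this game (with answer alphabets of constant size), Raz's parallel repetition theorem gives $\opt(\calL) \leq 2^{-\Omega(u)}$, which is below $\xi$ once $u = u(\xi)$ is large enough.

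It remains to read off the structural properties. $D$-to-$1$: fixing a right label pins at least one variable of each clause $C_j$, so a satisfying extension to $C_j$ has at most four choices and the restriction map is at most $4^u$-to-one, giving $D := 4^u = D(\xi)$. $J$-smoothness: two distinct left labels of a vertex disagree on some variable $z$, say $z \in C_j$, and the edge fails to separate them only when none of $x_j^{(1)},\dots,x_j^{(t)}$ equals $z$, which has probability at most $(2/3)^t < 1/J$. Weak expansion is a property of the graph $G$ alone, obtained by a routine modification — regularizing and symmetrizing $G$, or equivalently exploiting the pseudorandom incidence structure of the tupled instance, so that every $\delta$-fraction of vertices spans at least a $(\delta^2/2)$-fraction of edges — which I would verify directly. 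The one genuinely nontrivial ingredient, and hence the main obstacle, is the soundness amplification: pushing $\opt(\calL)$ below an arbitrary constant $\xi$ really does require parallel repetition, while completeness, $D$-to-$1$, smoothness, and weak expansion are all bookkeeping on the construction.
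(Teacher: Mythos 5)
The paper does not actually prove this theorem; it imports it verbatim from [GRSW16] (itself an adaptation of Khot's smooth \labelcover construction), so there is no internal proof to compare against. Evaluated on its own merits, your sketch captures the right high-level ingredients---start from gap-3SAT, build a two-prover game whose right question remembers $t$ variables per clause to buy $J$-smoothness, amplify soundness by parallel repetition---and your parameter accounting for $D=D(\xi)$ is actually salvageable, because the effective right alphabet per clause is at most $2^3=8$ regardless of $t$ (you implicitly lean on this when you say ``answer alphabets of constant size''), so the repetition parameter $u$ and hence $D=4^u$ really can be made independent of $J$.

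The genuine gap is the graph structure, and it is not bookkeeping. Your instance is bipartite with two different label sets: left vertices carry $[R]$ and project by restriction, right vertices carry $[L]$ and project by the identity. The theorem, however, posits a single regular connected graph $G=(V,E)$ in which \emph{every} vertex has label set $[R]$ and \emph{both} endpoint projections $\pi_{e,v},\pi_{e,w}$ map $[R]\to[L]$. This two-sided form is not cosmetic: the reduction in Section~3 builds the projector $\bfhatP$ from constraints $\sum_{j\in\pi^{-1}_{e,u}(i)}\hatf_u(j)=\sum_{j\in\pi^{-1}_{e,v}(i)}\hatf_v(j)$ that explicitly use both projections, and the soundness lemma needs weak expansion to locate edges inside the set $V_0$. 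Weak expansion is flatly false for your bipartite graph---take $V'$ to be one side of the bipartition and the internal edge count is zero while $(\delta^2/2)|E|>0$---so the ``routine modification'' you defer to is precisely where the content lives. The standard fix is to symmetrize by placing an edge between two left $u$-tuples whenever they have a common right neighbor, with each endpoint projecting via its incidence to that right vertex; but then regularity (your right vertices do not have constant degree, so the squared graph is not automatically regular), connectivity, weak expansion, and the soundness of the squared game all have to be re-established, and none of these is automatic. Calling completeness, $D$-to-$1$, smoothness, and weak expansion ``all bookkeeping'' while singling out parallel repetition as the only nontrivial step understates where the work actually is in matching the theorem as stated.
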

\fi

\section{Hardness of \onorm{2}{r} with $r < 2$}\label{sec:nhc}
%
This section proves the following theorem that serves as a starting point of our hardness results. 
The theorem is stated for the expectation norm for consistency with the current literature, but the same statement holds for the counting norm, since if 
$A$ is an $n \times n$ matrix,  $\cnorm{2}{r}{A} = n^{1/r -1/2} \cdot \enorm{2}{r}{A}$. 
Note that the matrix $A$ used in the reduction below does not depend on $r$.
\begin{theorem} For any $\varepsilon > 0$, 
    there is a polynomial time reduction that takes a 3-CNF formula $\phi$ and 
    produces a symmetric matrix $A \in \R^{n \times n}$ with $n = |\phi|^{\poly(1 / \eps)}$ such that 
	\begin{itemize}
        \item (Completeness) If $\phi$ is satisfiable, there exists $x \in
            \R^{n}$ with $|x(i)| = 1$ for all $i \in [n]$ and $Ax = x$.  In particular,
            $\enorm{2}{r}{A} \geq 1$ for all $1 \leq r \leq \infty$.
        \item (Soundness) $\enorm{2}{r}{A} \leq \gamma_r + \varepsilon^{2 - r}$ for all $1 \leq r < 2$. 
	\end{itemize}
    \label{thm:brs}
\end{theorem}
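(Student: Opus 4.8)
The plan is to follow the standard long-code / smooth label cover framework, as in \cite{GRSW16} and \cite{BRS15}, but with a crucial twist: because $r < 2$, the function $t \mapsto t^{r}$ is concave, so the local tests will be built from the Berry--Esseen theorem rather than from hypercontractivity, and the matrix $A$ will be (a scaled version of) a \emph{projection} enforcing label-cover consistency on the linear Fourier coefficients. Concretely, I would start with a smooth label cover instance $\calL = (G, [R], [L], \Sigma)$ as in \cref{thm:smooth_label_cover} with smoothness parameter $J$ and soundness $\xi$ chosen as functions of $\eps$. For each vertex $v$ I allocate a block of coordinates $\hatf_v : [R] \to \R$ (the linear Fourier coefficients of a would-be long code on $\{\pm1\}^R$), and I define $A$ as the composition $F_P^T \circ \Pi \circ F_P$ acting blockwise, where $\Pi$ is the orthogonal projection onto the subspace of consistent labelings — i.e., the subspace where for each edge $e = (v,w)$ the coefficients respect $\pi_{e,v}$ and $\pi_{e,w}$ (averaging coordinates in the same fibre and equating the two sides). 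Since $F_P$ loses $\ell_2$ mass on non-linear parts and $\Pi$ is a contraction in $\ell_2$, the whole operator is a contraction from $\enorm{2}{\cdot}$ to $\cnorm{2}{\cdot}$ on the Fourier side; symmetrizing (the block structure plus $G$ regular makes this arrangeable) gives a symmetric $n \times n$ matrix with $n = |\phi|^{\poly(1/\eps)}$.

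For completeness, if $\phi$ is satisfiable then label cover is fully satisfiable, so there is a labeling $\ell : V \to [R]$ satisfying every edge; taking $f_v$ to be the dictator (long code) $x \mapsto x_{\ell(v)}$ yields $\hatf_v = e_{\ell(v)}$, which lies exactly in the consistent subspace, so $\Pi$ fixes it and $F_P^T F_P$ fixes it (it is linear). Thus $Ax = x$ for the corresponding $\pm 1$ vector $x$ (the values $x(i)$ are the $\pm 1$ truth-table entries, all of absolute value $1$), giving $\enorm{2}{r}{A} \ge \enorm{r}{Ax}/\enorm{2}{x} = 1$ for every $r$ since $|x(i)| \equiv 1$.

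For soundness, suppose $\enorm{2}{r}{A} > \gamma_r + \eps^{2-r}$, witnessed by some $x$ with $\enorm{2}{x} = 1$; I want to extract a labeling of $\calL$ satisfying more than a $\xi$ fraction of edges. Write $y = Ax$, so $\cnorm{2}{Fy}\text{-type}$ mass is controlled and $\enorm{r}{y}$ is large. The heart of the argument is local: on each block, after applying $F_P^T$ the function $f_v = F_P^T \hatf_v$ is linear, and for a linear function $g = \sum_i c_i x_i$ with $\sum_i c_i^2 = \sigma^2$, the Berry--Esseen theorem says the distribution of $g(x)$ over $x \in \{\pm1\}^R$ is close (in CDF distance $O(\max_i |c_i|/\sigma)$) to $\calN(0,\sigma^2)$, hence $\enorm{r}{g} \le (\gamma_r + \text{error})\cdot\sigma$ where the error is small unless some single coordinate $|c_i|$ carries an $\eps$-fraction of the $\ell_2$ mass. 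So a block beats the Gaussian bound by $\eps^{2-r}$ only if it is $\eps$-concentrated on one coordinate $i(v) \in [R]$; summing/averaging over blocks and using that the global $\ell_r^r$ mass exceeds $(\gamma_r+\eps^{2-r})^r$ forces a constant fraction of blocks to be $\eps$-concentrated. Defining $\ell(v) := i(v)$ for concentrated blocks (and arbitrary otherwise), the projection $\Pi$ having (nearly) preserved the mass means the consistency constraints are (nearly) satisfied on these coordinates; smoothness ($J$-smoothness plus $D$-to-$1$) ensures the decoded labels are consistent across a good fraction of edges, with the standard argument that a single collision of decoded labels along an edge survives the $\pi_{e,v}$ maps with probability $\ge 1 - O(1/J)$. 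Choosing $J$ large and $\xi$ small relative to $\eps$ then contradicts the soundness of \cref{thm:smooth_label_cover}.

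The main obstacle I anticipate is the soundness bookkeeping — specifically, quantifying how the global $\ell_r$ objective (a concave, hence ``spreading'' objective) interacts with the projection $\Pi$ and with the per-block Berry--Esseen losses, and then converting ``many blocks are $\eps$-concentrated and approximately consistent'' into a genuine labeling beating soundness $\xi$. One subtlety is that $\Pi$ mixes coordinates within a fibre of $\pi_{e,v}$, so an $\eps$-concentrated block could be concentrated on a \emph{fibre} rather than a single label; this is exactly where $D$-to-$1$ and $J$-smoothness are needed to argue that the fibre can be shrunk to a single label without losing much mass, and that the decoding survives the projection maps. Getting the exact constant $\gamma_r + \eps^{2-r}$ (rather than $\gamma_r + o_\eps(1)$) will require carefully tracking that the only ``cheat'' available is a single dominant coordinate contributing at most an $\eps$-fraction, whose marginal contribution to $\enorm{r}{\cdot}^{r}$ beyond the Gaussian value is at most $\eps^{2-r}$ — a short but delicate $\ell_2$-to-$\ell_r$ interpolation computation.
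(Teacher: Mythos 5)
Your scaffolding is the same as the paper's: start from a smooth label cover instance, set $A = F^T\,\Pi\,F$ with $F$ the partial (linear) Fourier transform and $\Pi$ the orthogonal projection onto the subspace of label-cover-consistent linear coefficients, prove completeness via dictator long codes, and prove soundness via a Berry--\Esseen dictatorship test. The completeness argument you give is essentially the paper's, verbatim. But the soundness sketch has a concrete error in what you extract from Berry--\Esseen, plus a missing case in the bookkeeping.

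You claim that ``a block beats the Gaussian bound by $\eps^{2-r}$ only if it is $\eps$-concentrated on one coordinate $i(v)\in[R]$.'' That is not what Berry--\Esseen yields. The error term is controlled by $\sum_i |c_i|^3/\sigma^3$; you have (correctly) upper-bounded this by $\max_i|c_i|/\sigma$, but the contrapositive you need runs the other way, and large $\sum|c_i|^3/\sigma^3$ does \emph{not} force a single dominant coordinate. By Cauchy--Schwarz, $\sum|c_i|^3 \le \cnorm{4}{c}^2\,\cnorm{2}{c}$, so the correct conclusion from a large BE error is that $\cnorm{4}{\hatg_v} \ge \delta\,\cnorm{2}{\hatg_v}$ — the Fourier vector is ``peaky'' in 4-norm, but the mass could be spread over several coordinates, and there is no canonical $i(v)$ to decode per block. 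This matters: the paper's dictatorship test (\cref{lem:dict}) proves exactly the 4-norm implication, and passing from ``a constant fraction of vertices have large 4-norm and bounded 2-norm'' to a labeling that beats $\xi$ is a separate, nontrivial list-decoding argument (Lemma~3.6 of~\cite{BRS15}), using smoothness and $D$-to-$1$ to cut the candidate label set to near-singletons. You gesture at the fibre issue, which is related, but the multi-coordinate issue is already present within a single block before $\Pi$ mixes anything.

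There is a second hole in the counting. The quantity $\enorm{r}{g_v}^r$ can be large not because of Fourier peakiness but simply because the per-block $\ell_2$ budget $\cnorm{2}{\hatg_v}$ is large: the normalization $\enorm{2}{\bff}=1$ only constrains $\sum_v\cnorm{2}{\hatg_v}^2 \le |V|$, not any individual block. Your plan of ``summing/averaging over blocks'' and concluding that a constant fraction are concentrated would fail on inputs where a few blocks carry very large $\cnorm{2}{\hatg_v}$. The paper isolates this as a separate set $V_3 = \{v : \cnorm{2}{\hatg_v} > 1/\eps\}$ and bounds its contribution by $\eps^{2-r}|V|$ using $\cnorm{2}{\hatg_v}^{r-2} < \eps^{2-r}$ together with the global $\ell_2$ budget; the remaining blocks are split into low-2-norm, moderate-2-norm-low-4-norm (bounded via Jensen, using $r<2$), and the useful set $V_0$. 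Without the $V_3$ and $V_2$ cases your per-block accounting does not close, and your claimed mechanism for the $\eps^{2-r}$ slack (a marginal contribution of a single dominant coordinate) is not where that term actually comes from in the paper's proof.
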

We adapt the proof by \Briet, Regev and Saket for the hardness of $2 \to 1$ and $\infty \to 1$
norms to prove the above theorem. A small difference is that, 
unlike their construction which starts with a Fourier encoding of the long-code functions, we start
with an evaluation table (to ensure that the resulting matrices are symmetric). We also analyze
their dictatorship tests for the case of fractional $r$. 
\ifconf
The details can be found in the full version of the paper.
\fi

\ifnotconf
\subsection{Reduction and Completeness}
Let $\calL = (G, [R], [L], \Sigma)$ be an
instance of \labelcover with $G = (V, E)$. 
In the rest of this section, $n = |V|$ and our reduction will construct a self-adjoint linear operator 
$\bfA : \R^N \to \R^N$ with $N = |V| \cdot 2^R$, which yields a symmetric $N \times N$ matrix
representing $\bfA$ in the standard basis. 
This section concerns the following four Hilbert spaces based on the standard Fourier
analysis composed with $\calL$.
\begin{enumerate}
    \item Evaluation space $\R^{2^R}$. Each function in this space is
        denoted by $f : \{ \pm 1 \}^R \to \RRR$.  The inner product is defined as
        $\ip{f, g} := \Ex{x \in \{ \pm 1 \}^R}{f(x) g(x)}$, which
        induces $\norm{2}{f} := \enorm{2}{f}$. We also define $\enorm{p}{f} := \Ex{x}{|f(x)|^p}^{1/p}$ in this space. 
    \item Fourier space $\R^{R}$. Each function in this space is denoted by $\hatf : [R]
        \to \RRR$.  The inner product is defined as $\mysmalldot{\hatf}{\hatg} := \sum_{i
        \in [R]} \hatf(i) \hatg(i)$, which induces $\norm{2}{\hatf} := \cnorm{2}{\hatf}$. 

    \item Combined evaluation space $\R^{V \times 2^R}$. Each function in this space is
        denoted by $\bff : V \times \{ \pm 1 \}^R \to \RRR$.  The inner product is defined as
        $\mysmalldot{\bff}{\bfgg} := \Ex{v \in V}{\Ex{x \in \{ \pm 1 \}^R}{\bff(v, x)
        \bfgg(v, x)}}$, which induces $\enorm{2}{\bff} := \enorm{2}{\bff}$. 
We also define $\norm{p}{\bff} := \Ex{v, x}{|\bff(v, x)|^p}^{1/p}$ in this space.
    \item Combined Fourier space $\R^{V \times R}$. Each function in this space is denoted
        by $\bfhatf : V \times [R] \to \RRR$.  The inner product is defined as
        $\mysmalldot{\bfhatf}{\bfhatg} := \Ex{v \in V}{\sum_{i \in [R]} \bfhatf(v, i)
    \bfhatg(v, i) }$, which induces $\norm{2}{\bfhatf}$, which is neither a counting nor an
    expectation norm.
\end{enumerate}
Note that $\bff \in \R^{V \times 2^R}$ and a vertex
$v \in V$ induces $f_v \in \R^{2^R}$ defined by $f_v(x) := \bff(v, x)$, and similarly
$\bfhatf \in \R^{V \times R}$ and a vertex $v \in V$ induces $\hatf_v \in \R^{R}$
defined by $\hatf_v(x) := \bfhatf(v, x)$.  
As defined in \cref{sec:fourier}, 
we use the standard
following (partial) {\em Fourier transform} $F$ that maps $f \in \R^{2^R}$ to $\hatf \in
\R^{R}$ as follows.  \footnote{We use only {\em linear Fourier coefficients} in this
work. $F$ was defined as $F_P$ in~\cref{sec:fourier}.} \begin{equation} \hatf(i) = (Ff) (i) := \Ex{x \in \{\pm 1 \}^R}{x_i f(x)}.  \end{equation} 

The (partial) {\em inverse Fourier
transform} $F^T$ that maps $\hatf \in \R^{R}$ to $f \in \R^{2^R}$ is defined by
\begin{equation} f(x) = (F^T \hatf)(x) := \sum_{i \in [R]} x_i \hatf(i). \end{equation}

This Fourier transform can be naturally extended to combined spaces by defining $\bfF :
\bff \mapsto \bfhatf$ as $f_v \mapsto \hatf_v$ for all $v \in V$.  Then $\bfF^T$ maps
$\bfhatf$ to $\bff$ as $\hatf_v \mapsto f_v$ for all $v \in V$.

Finally, let $\bfhatP : \R^{V \times R} \to \R^{V
\times R}$ be the orthogonal projector to the following subspace of the combined
Fourier space: 
\begin{equation} 
    \bfhatL := \left\{ \bfhatf \in \R^{V \times R} :
    \sum_{j \in \pi^{-1}_{e, u}(i)} \hatf_u(i) = \sum_{j \in \pi^{-1}_{e, v}(i)}
    \hatf_v(j) \mbox{ for all } (u, v) \in E \mbox{ and } i \in [L] \right\}\mper
\end{equation}
Our transformation $\bfA : \R^{V \times 2^R} \to \R^{V \times 2^R}$ is
defined by \begin{equation} \bfA := (\bfF^T) \bfhatP \bfF.  \end{equation} In other
words, given $\bff$, we apply the Fourier transform for each $v \in V$, project the
combined Fourier coefficients to $\bfhatL$ that checks the \labelcover consistency, and
apply the inverse Fourier transform. Since $\bfhatP$ is a projector, $\bfA$ is
self-adjoint by design. 

We also note that a similar reduction that produces $(\bfF^T) \bfhatP$ 
was used in Guruswami \etal~\cite{GRSW16} and \Briet \etal~\cite{BRS15} 
for subspace approximation and Grothendieck-type problems, 
and indeed this reduction suffices for~\cref{thm:brs}
except the self-adjointness and additional properties in the completeness case. 

\paragraph{Completeness.}
We prove the following lemma for the completeness case. 
A simple intuition is that if $\calL$ admits a good labeling, 
we can construct a $\bff$ such that each $f_v$ is a linear function and 
$\bfhatf$ is already in the subspace $\bfhatL$. 
Therefore, each of Fourier transform, projection to $\bfhatL$, and inverse Fourier transform 
does not really change $\bff$.

\begin{lemma} [Completeness] Let $\ell : V \to [R]$ be a labeling that satisfies every
    edge of $\calL$. There exists a function $\bff \in \R^{V \times 2^R}$ such that
    $\bff(v, x)$ is either $+1$ or $-1$ for all $v \in V, x \in \{ \pm 1 \}^R$ and
    $\bfA \bff = \bff$.
\end{lemma}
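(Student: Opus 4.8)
The plan is to explicitly construct the function $\bff$ from a satisfying labeling $\ell$ and verify the two claimed properties. First I would define, for each vertex $v \in V$, the dictator function $f_v := \chi_{\{\ell(v)\}}$, i.e.\ $f_v(x) := x_{\ell(v)}$, and set $\bff(v,x) := f_v(x) = x_{\ell(v)}$. Since $x \in \{\pm 1\}^R$, this is clearly always $+1$ or $-1$, which gives the first property immediately. It remains to show $\bfA\bff = \bff$, i.e.\ $(\bfF^T)\bfhatP\bfF\bff = \bff$.

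The argument for $\bfA\bff = \bff$ proceeds in three steps, one for each operator in the composition. First, $\bfF\bff = \bfhatf$ where $\hatf_v(i) = \Ex{x}{x_i \cdot x_{\ell(v)}} = \indicator{i = \ell(v)}$, so $\hatf_v = e_{\ell(v)}$, the standard basis vector in $\R^R$. Second, I claim $\bfhatf \in \bfhatL$: for any edge $e = (u,v) \in E$ and any $i \in [L]$, I must check that $\sum_{j \in \pi_{e,u}^{-1}(i)} \hatf_u(j) = \sum_{j \in \pi_{e,v}^{-1}(i)} \hatf_v(j)$. The left side is $1$ if $\ell(u) \in \pi_{e,u}^{-1}(i)$, i.e.\ if $\pi_{e,u}(\ell(u)) = i$, and $0$ otherwise; similarly for the right side with $v$. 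Since $\ell$ satisfies every edge, $\pi_{e,u}(\ell(u)) = \pi_{e,v}(\ell(v))$, so both sides equal $\indicator{\pi_{e,u}(\ell(u)) = i}$ and agree. Hence $\bfhatf$ lies in the subspace $\bfhatL$, so the orthogonal projector acts as the identity: $\bfhatP\bfhatf = \bfhatf$. Third, applying $\bfF^T$: $(F^T\hatf_v)(x) = \sum_{i \in [R]} x_i \hatf_v(i) = x_{\ell(v)} = f_v(x)$, so $\bfF^T\bfhatf = \bff$. Chaining these, $\bfA\bff = \bfF^T\bfhatP\bfF\bff = \bfF^T\bfhatP\bfhatf = \bfF^T\bfhatf = \bff$, as desired.

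I do not expect any serious obstacle here — the proof is essentially a bookkeeping verification that dictator functions corresponding to a satisfying assignment are fixed points of the reduction operator. The only point requiring a little care is the second step, matching the definition of the projector subspace $\bfhatL$ (which sums linear Fourier coefficients over fibers of the projection maps $\pi_{e,u}, \pi_{e,v}$) against the fact that a dictator's only nonzero linear coefficient is at the index given by the label; the $D$-to-$1$ structure is not needed for completeness, only the edge-satisfaction property of $\ell$. One should also note that the construction is well defined because $\ell$ is a single labeling used consistently across all vertices and edges.
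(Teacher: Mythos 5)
Your proof is correct and follows essentially the same route as the paper: define $\bff(v,x) = x_{\ell(v)}$, observe $\hatf_v = e_{\ell(v)}$, use edge satisfaction to place $\bfhatf$ in $\bfhatL$ so the projector is the identity on it, and note the partial inverse Fourier transform recovers $\bff$ because each $f_v$ is linear. Your write-up actually spells out the membership-in-$\bfhatL$ step slightly more explicitly than the paper does, but the argument is the same.
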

\begin{proof} Let $\bff(v, x) := x_{\ell(v)}$ for
    every $v \in V, x \in \{ \pm 1 \}^R$.  Consider $\bfhatf = \bfF \bff$.  For each
    vertex $v \in V$, $\bfhatf(v, i) = \hatf_v(i) = 1$ if $i = \ell(v)$ and $0$
    otherwise.  Since $\ell$ satisfies every edge of $\calL$, $\bfhatf \in \bfhatL$ and
    $\bfhatP \bfhatf = \bfhatf$.  Finally, since each $f_v$ is a linear function, the
    partial inverse Fourier transform $F^T$ satisfies $(F^T) \hatf_v = f_v$, which
    implies that $(\bfF^T) \bfhatf = \bff$. Therefore, $\bfA \bff = (\bfF^T \bfhatP
    \bfF) \bff = \bff$.
\end{proof}

\subsection{Soundness}
We prove the following soundness lemma. 
This finishes the proof of~\cref{thm:brs} since~\cref{thm:smooth_label_cover} 
guarantees NP-hardness of \labelcover for arbitrarily small $\xi > 0$ and arbitrarily large $J \in \N$. 

\begin{lemma}[Soundness] \label{lem:soundness} For every $\varepsilon > 0$, 
	there exist $\xi > 0$ (that determines $D = D(\xi)$ as in~\cref{thm:smooth_label_cover}) 
	and $J \in \N$ such that if $\opt(\calL) \leq \xi$, $\calL$ is $D$-to-$1$, and $\calL$ is $J$-smooth, 
	$\enorm{2}{r}{\bfA} \leq \gamma_{r} + 4\varepsilon^{2-r}$ for every $1 \leq r < 2$. 
\end{lemma}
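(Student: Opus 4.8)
The plan is to bound $\enorm{2}{r}{\bfA}$ by showing that for any input $\bff$ with $\enorm{2}{\bff} \le 1$, the output $\bfA\bff$ has small $\ell_r$-norm (in the expectation sense). Since $\bfA = \bfF^T \bfhatP \bfF$, write $\bfhatg := \bfhatP \bfF \bff$ and $\bfg := \bfF^T \bfhatg = \bfA\bff$. Because $\bfhatP$ is an orthogonal projection and $\bfF$ (partial Fourier transform) is norm-nonincreasing, we have $\norm{2}{\bfhatg} \le \norm{2}{\bfhatf} \le \enorm{2}{\bff} \le 1$, and $\bfg = \bfF^T\bfhatg$ satisfies $\enorm{2}{\bfg} \le 1$ as well. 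The key point is that each $g_v(x) = \sum_{i} x_i \hatg_v(i)$ is a \emph{linear} function of $x \in \{\pm1\}^R$ with coefficient vector $\hatg_v$; by the Berry--Ess\'een theorem, if $\cnorm{2}{\hatg_v} = c_v$ and $\cnorm{\infty}{\hatg_v}$ is small relative to $c_v$, then $g_v(x)/c_v$ is close in CDF to a standard Gaussian, so $\enorm{r}{g_v}^r = \Ex{x}{|g_v(x)|^r} \approx c_v^r \cdot \gamma_r^r$. Summing (averaging) over $v$ and using $\Ex{v}{c_v^2} = \norm{2}{\bfhatg}^2 \le 1$ together with the concavity of $t \mapsto t^{r/2}$ for $r < 2$ (so $\Ex{v}{c_v^r} \le (\Ex{v}{c_v^2})^{r/2} \le 1$) gives $\enorm{r}{\bfg}^r \lesssim \gamma_r^r$, i.e. $\enorm{r}{\bfg} \lesssim \gamma_r$.

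The technical work is handling the $v$'s for which $\hatg_v$ is \emph{not} spread out --- those with a large coordinate $\cnorm{\infty}{\hatg_v}$ relative to $\cnorm{2}{\hatg_v}$, where Berry--Ess\'een gives nothing. Here is where smoothness and the soundness of Label Cover enter. Fix a threshold and call a vertex $v$ ``heavy'' if some coordinate $i$ has $\hatg_v(i)^2 \ge \eps^2 \cdot \cnorm{2}{\hatg_v}^2$ (only $O(1/\eps^2)$ such coordinates per vertex). For a heavy vertex we can only bound $|g_v(x)| \le \cnorm{1}{\hatg_v}$ crudely, which could be as large as $\sqrt{R}\,c_v$ --- far bigger than $\gamma_r c_v$. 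The standard fix: extract a randomized labeling of $\calL$ from the set of ``significant'' coordinates of each $\hatg_v$, and use the fact that $\bfhatg$ lies in (close to) the subspace $\bfhatL$ of consistency constraints, so that on most edges $(u,v)$ the significant coordinates of $\hatg_u$ and $\hatg_v$ project consistently; combined with $J$-smoothness (which ensures distinct significant coordinates rarely collide under the projections) and the $D$-to-$1$ property, this produces a labeling satisfying a non-negligible fraction of edges. Since $\opt(\calL) \le \xi$ with $\xi$ chosen tiny (as a function of $\eps$ and $J$), this forces the total $\ell_2$-mass on heavy vertices to be negligible --- say at most $\eps^2$ --- so their contribution to $\enorm{r}{\bfg}^r$ is at most $\eps^{2-r}$ up to constants (using $r<2$), which is absorbed into the $4\eps^{2-r}$ slack.

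More precisely, I would: (1) set up $\bfg = \bfA\bff$ and record $\enorm{2}{\bfg} \le 1$ and linearity of each $g_v$; (2) for a fixed vertex $v$, split into the ``Gaussian'' case and the ``heavy'' case, and in the Gaussian case invoke Berry--Ess\'een with error parameter controlled by $\cnorm{\infty}{\hatg_v}/\cnorm{2}{\hatg_v} \le \eps$, giving $\Ex{x}{|g_v(x)|^r} \le (\gamma_r + O(\eps))^r c_v^r$; (3) define the randomized labeling from significant Fourier coordinates and show, using $\bfhatg \in \bfhatL$ exactly (since $\bfhatP$ is the exact projector, $\bfhatP\bfhatf \in \bfhatL$) plus $J$-smoothness and $D$-to-$1$, that if the heavy vertices carried $\ell_2$-mass more than $\eps^2$ then $\opt(\calL) > \xi$, a contradiction; (4) bound the heavy contribution by $\sum_{v \text{ heavy}} c_v^2 \cdot (\text{something}) $ --- here one must be slightly careful since $\cnorm{1}{\hatg_v}$ can be $\sqrt{R}c_v$; use instead that $\Ex{v \text{ heavy}}{|g_v(x)|^r} \le \Ex{v\text{ heavy}}{c_v^r R^{r/2}}$ and note $R$ is a fixed constant (depending on $\xi, J$), so this term is $\le R^{r/2}(\Ex{v\text{ heavy}} c_v^2)^{r/2} \le R^{r/2}\eps^{r}$, and reabsorb by instead taking the heavy-mass bound to be $\eps^{2}/R$ --- adjusting constants so the final bound reads $\gamma_r + 4\eps^{2-r}$; (5) combine via concavity of $t^{r/2}$ as above. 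The main obstacle is step (3)–(4): making the decoding argument yield a \emph{quantitative} bound on the $\ell_2$-mass of heavy vertices that is strong enough (after multiplying by the crude $R^{r/2}$ blow-up) to fit inside the stated $O(\eps^{2-r})$ slack, which is exactly why $\xi$ must be chosen after $\eps$ and $J$, and why smoothness $J$ must be taken large.
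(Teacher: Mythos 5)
Your high-level plan is right and close in spirit to what the paper does: decompose $\bfA\bff$ vertex by vertex, use a Berry--Ess\'een-based dictatorship test to show that vertices with spread-out Fourier coefficients contribute at most $\gamma_r^r$ per unit of $\ell_2$-mass, and show that the remaining ``heavy'' vertices are rare by decoding a good labeling from them and contradicting the soundness of Label Cover. The paper also argues by contradiction, uses the $\ell_4$/$\ell_2$ ratio (rather than your $\ell_\infty$/$\ell_2$) as its concentration criterion, and cites Lemma~3.6 of \cite{BRS15} for the decoding step; these are cosmetic differences.

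However, your step (4) contains a genuine gap. To bound the contribution of a heavy vertex you use the pointwise bound $|g_v(x)| \leq \cnorm{1}{\hatg_v} \leq \sqrt{R}\,\cnorm{2}{\hatg_v}$, which injects a factor of $R^{r/2}$ into the heavy contribution. You then propose to compensate by requiring the heavy $\ell_2$-mass to be at most $\eps^2/R$. But $R = R(\xi, J)$ and $\xi$ is precisely the soundness parameter you would need to make small in order to enforce that heavy-mass bound, while $R(\xi, J)$ \emph{grows} as $\xi \to 0$. So the target heavy-mass threshold shrinks as you shrink $\xi$, and you are chasing your own tail: you cannot fix $\xi$ first and then learn $R$, nor vice versa. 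This is not a bookkeeping detail --- as written, the parameters do not close. The paper sidesteps the problem entirely by using Parseval together with monotonicity of the expectation norm in $r$: for $r < 2$, $\enorm{r}{g_v} \leq \enorm{2}{g_v} = \cnorm{2}{\hatg_v}$, so each heavy vertex contributes at most $\cnorm{2}{\hatg_v}^r$, with no $R$-dependence at all. With this clean bound, the paper simply partitions $V$ according to whether $\cnorm{4}{\hatg_v}/\cnorm{2}{\hatg_v}$ is small and whether $\cnorm{2}{\hatg_v}$ lies in $[\eps, 1/\eps]$, handles the small-$\ell_2$-mass vertices trivially, handles the large-$\ell_2$-mass vertices using $\cnorm{2}{\hatg_v}^{r-2} < \eps^{2-r}$ and $\sum_v \cnorm{2}{\hatg_v}^2 \leq |V|$, and only then invokes the \cite{BRS15} decoding lemma on the remaining set, which it shows must have measure at least $\eps^2$. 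You should replace your $\cnorm{1}$-based pointwise bound with the Parseval bound $\enorm{r}{g_v} \leq \cnorm{2}{\hatg_v}$; the rest of your outline then goes through without the circularity.
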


\begin{proof}
Let $\bff \in \R^{V \times 2^R}$ be an arbitrary vector such that $\enorm{2}{\bff} =
1$.  Let $\bfhatf = \bfF \bff$, $\bfhatg = \bfhatL \bfhatf$, and $\bfgg = \bfF^T
\bfhatg$ so that $\bfgg = (\bfF^T \bfhatL \bfF) \bff = \bfA \bff$.  By Parseval's
theorem, $\cnorm{2}{\hatf_v} \leq \enorm{2}{f_v}$ for all $v \in V$ and
$\norm{2}{\bfhatf} \leq \enorm{2}{\bff} \leq 1$.  Since $\bfhatL$ is an orthogonal
projection, $\norm{2}{\bfhatg} \leq \norm{2}{\bfhatf} \leq 1$.  
Fix $1 \leq r < 2$ and suppose
\begin{equation}
    \enorm{r}{\bfgg}^r = \Ex{v \in V}{\enorm{r}{g_v}^r} \geq \gamma_r^r +
    4 \epsilon^{2 - r}\mper
    \label{eq:soundness}
\end{equation}

Use~\cref{lem:dict} to obtain $\delta = \delta(\epsilon)$ such that $\enorm{p}{g_v}^p >
(\gamma_p^p + \epsilon) \cnorm{2}{\hatg_v}^p$ implies $\cnorm{4}{\hatg} > \delta
\cnorm{2}{\hatg}$ for all $1 \leq p < 2$ (so that $\delta$ does not depend on $r$), 
and consider \begin{equation} V_0 := \{ v \in V : \cnorm{4}{\hatg_v}
    > \delta \epsilon \mbox{ and } \cnorm{2}{\hatg_v} \leq 1/\epsilon \}.
    \label{eq:vzero} \end{equation} We prove the following lemma that lower bounds the
size of $V_0$.
\begin{lemma}
    For $V_0 \subseteq V$ defined as in~\eqref{eq:vzero}, we have $|V_0| \geq \epsilon^2
    |V|$.
\end{lemma}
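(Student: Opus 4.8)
The plan is to prove the contrapositive: assuming $|V_0| < \epsilon^2 |V|$, I would derive $\Ex{v \in V}{\enorm{r}{g_v}^r} < \gamma_r^r + 4\epsilon^{2-r}$, contradicting~\eqref{eq:soundness}. The proof will rest on two facts already established in the surrounding argument: each $g_v = F^T \hatg_v$ is a linear function, so $\enorm{2}{g_v} = \cnorm{2}{\hatg_v}$ and hence $\enorm{r}{g_v} \le \cnorm{2}{\hatg_v}$ for $r < 2$; and $\Ex{v}{\cnorm{2}{\hatg_v}^2} = \norm{2}{\bfhatg}^2 \le 1$, which by concavity of $t \mapsto t^{r/2}$ (Jensen) also gives $\Ex{v}{\cnorm{2}{\hatg_v}^r} \le 1$. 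I then split $V$ into the three disjoint parts $B := \{ v : \cnorm{2}{\hatg_v} > 1/\epsilon \}$, the set $V_0$ itself, and $W := \{ v : \cnorm{4}{\hatg_v} \le \delta\epsilon \text{ and } \cnorm{2}{\hatg_v} \le 1/\epsilon \}$, and bound the contribution of each block to $\Ex{v}{\enorm{r}{g_v}^r} = \tfrac1{|V|}\sum_v \enorm{r}{g_v}^r$.

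For the block $B$, Markov's inequality applied to $\Ex{v}{\cnorm{2}{\hatg_v}^2} \le 1$ gives $|B| \le \epsilon^2 |V|$, and then \Holder's inequality with exponents $\tfrac2r$ and $\tfrac2{2-r}$ yields $\tfrac1{|V|}\sum_{v\in B}\enorm{r}{g_v}^r \le \tfrac1{|V|}\sum_{v\in B}\cnorm{2}{\hatg_v}^r \le \big(|B|/|V|\big)^{(2-r)/2}\big(\tfrac1{|V|}\sum_v\cnorm{2}{\hatg_v}^2\big)^{r/2} \le \epsilon^{2-r}$. For the block $V_0$, the defining bound $\cnorm{2}{\hatg_v}\le 1/\epsilon$ together with the hypothesis $|V_0| < \epsilon^2|V|$ gives $\tfrac1{|V|}\sum_{v\in V_0}\enorm{r}{g_v}^r \le \tfrac{|V_0|}{|V|}\epsilon^{-r} < \epsilon^{2-r}$.

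The block $W$ is where \cref{lem:dict} enters. Split $W = W_{\ge} \sqcup W_{<}$ according to whether $\cnorm{2}{\hatg_v} \ge \epsilon$ or $\cnorm{2}{\hatg_v} < \epsilon$. On $W_{\ge}$ we have $\cnorm{4}{\hatg_v} \le \delta\epsilon \le \delta\cnorm{2}{\hatg_v}$, so the contrapositive of \cref{lem:dict} (with $p = r$) gives $\enorm{r}{g_v}^r \le (\gamma_r^r + \epsilon)\cnorm{2}{\hatg_v}^r$, whence $\tfrac1{|V|}\sum_{v\in W_{\ge}}\enorm{r}{g_v}^r \le (\gamma_r^r + \epsilon)\,\Ex{v}{\cnorm{2}{\hatg_v}^r} \le \gamma_r^r + \epsilon$. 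On $W_{<}$ the crude bound gives $\enorm{r}{g_v}^r \le \cnorm{2}{\hatg_v}^r < \epsilon^r$, so $\tfrac1{|V|}\sum_{v\in W_{<}}\enorm{r}{g_v}^r \le \epsilon^r$. Adding the four pieces, $\Ex{v}{\enorm{r}{g_v}^r} \le \gamma_r^r + 2\epsilon^{2-r} + \epsilon + \epsilon^r$, and since $r \ge 1$ forces $\epsilon \le \epsilon^{2-r}$ and $\epsilon^r \le \epsilon^{2-r}$, this is $\le \gamma_r^r + 4\epsilon^{2-r}$ — strictly less once $\epsilon$ is small (in the boundary case $r=1$ one tightens by a constant factor either the Markov threshold defining $B$ or the additive error in \cref{lem:dict}), contradicting~\eqref{eq:soundness}.

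I expect the only genuinely delicate point to be the treatment of the ``spread but low-mass'' vertices $W_{<}$: \cref{lem:dict} is scale-invariant and can only be invoked when $\cnorm{2}{\hatg_v}$ is not much smaller than $\cnorm{4}{\hatg_v}$, so these vertices must be controlled by the crude bound $\enorm{r}{g_v} \le \cnorm{2}{\hatg_v}$, and one has to verify that the $\epsilon$ term (from \cref{lem:dict}) and the $\epsilon^r$ term (from the crude bound) stay within the $4\epsilon^{2-r}$ budget for all $r \in [1,2)$, the regime near $r=1$ being the tight one. Everything else (Markov, \Holder, Jensen, and the bookkeeping of the partition) is routine; note also that every vertex in $W$ has all its linear Fourier coefficients of magnitude at most $\delta\epsilon$ (since $\|\hatg_v\|_\infty \le \cnorm{4}{\hatg_v}$), so if a sharper estimate on $W_{<}$ were needed it could be obtained from the Berry--\Esseen bound underlying \cref{lem:dict}.
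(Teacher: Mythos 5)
Your proof is correct and takes essentially the same route as the paper's: identical four-way partition of $V$ (your $B, V_0, W_{\geq}, W_{<}$ correspond to the paper's $V_3, V_0, V_2, V_1$), the same crude bound $\enorm{r}{g_v}\le\cnorm{2}{\hatg_v}$, the same use of Jensen on the $W_{\geq}$ block, and the same invocation of \cref{lem:dict} where the $\ell_4/\ell_2$ ratio is small. The only cosmetic differences are that you argue by contrapositive rather than rearranging the sum directly, and on the high-mass block you use Markov followed by H\"older where the paper writes $\cnorm{2}{\hatg_v}^{r} = \cnorm{2}{\hatg_v}^{r-2}\cnorm{2}{\hatg_v}^{2} < \epsilon^{2-r}\cnorm{2}{\hatg_v}^{2}$ and sums — a one-line estimate that sidesteps Markov entirely.
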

\begin{proof} The proof closely follows the proof of Lemma 3.4 of~\cite{BRS15}.
    Define the sets
    \begin{align*}
        V_1 &= \{ v \in V : \cnorm{4}{\hatg_v}
        \leq \delta \epsilon \mbox{ and } \cnorm{2}{\hatg_v} < \epsilon \}, \\ V_2 &= \{ v
        \in V : \cnorm{4}{\hatg_v} \leq \delta \epsilon \mbox{ and } \cnorm{2}{\hatg_v}
        \geq \epsilon \}, \\ V_3 &= \{ v \in V : \cnorm{2}{\hatg_v} > 1/\epsilon \}.
    \end{align*}
    From~\eqref{eq:soundness}, we have
    \begin{equation}
        \sum_{v \in V_0} \enorm{r}{g_v}^r + \sum_{v \in V_1} \enorm{r}{g_v}^r + \sum_{v \in V_2}
        \enorm{r}{g_v}^r + \sum_{v \in V_3} \enorm{r}{g_v}^r \geq (\gamma_r^r + 4
        \epsilon^{2 - r}) |V|\mper
        \label{eq:soundness_sum}
    \end{equation}
    We bound the four sums on the left side of~\eqref{eq:soundness_sum} individually.
    Parseval's theorem and the fact that $r < 2$ implies $\enorm{r}{g_v} \leq
    \enorm{2}{g_v}=\cnorm{2}{\hatg_v}$, and since $\cnorm{2}{\hatg_v} \leq 1/\epsilon$
    for every $v \in V_0$, the first sum in~\eqref{eq:soundness_sum} can be bounded by
    \begin{equation}
        \sum_{v \in V_0} \enorm{r}{g_v}^r \leq |V_0| / \epsilon^r.
        \label{eq:soundness_zero}
    \end{equation}
    Similarly, using the definition of $V_1$
    the second sum in~\eqref{eq:soundness_sum} is at most $\epsilon^r |V|$.
    By~\cref{lem:dict}, for each $v \in V_2$, we have $\enorm{r}{g_v}^r \leq (\gamma_r^r +
    \epsilon) \cnorm{2}{\hatg_v}^r$. Therefore, the third sum in~\eqref{eq:soundness_sum}
    is bounded as
    \begin{align}
        \sum_{v \in V_2} \enorm{r}{g_v}^r &\leq (\gamma_r^r +
            \epsilon) \sum_{v \in V_2} \cnorm{2}{\hatg_v}^r    &   \nonumber \\
        &= (\gamma_r^r + \epsilon) |V_2| \E_{v \in V_2} [ \cnorm{2}{\hatg_v}^r ]
            &     \nonumber \\
        &\leq (\gamma_r^r + \epsilon) |V_2| \E_{v \in V_2} [ \cnorm{2}{\hatg_v}^2 ]^{r / 2} &&
            \mbox{(By Jensen using $r < 2$)}    \nonumber \\
        &= (\gamma_r^r + \epsilon) |V_2| \bigg( \frac{\sum_{v \in V_2}
            \cnorm{2}{\hatg_v}^2 }{|V_2|} \bigg) ^{r / 2} &\nonumber \\
        &\leq (\gamma_r^r + \epsilon) |V_2|^{1 - r / 2} |V|^{r / 2} &&
            (\sum_{v \in V_2} \cnorm{2}{\hatg_v}^2 \leq \sum_{v \in V} \cnorm{2}{\hatg_v}^2
             \leq |V|) \nonumber \\
        &\leq (\gamma_r^r + \epsilon) |V|. &
    \end{align}
    Finally, the fourth sum in~\eqref{eq:soundness_sum} is bounded by
    \begin{align}
        \sum_{v \in V_3} \enorm{r}{g_v}^r &\leq \sum_{v \in V_3} \enorm{2}{g_v}^r 
            && \mbox{(Since $r < 2$)}\nonumber \\
        &= \sum_{v \in V_3} \cnorm{2}{\hatg_v}^r    && \mbox{(By Parseval's
           theorem)}   \nonumber \\
        &= \sum_{v \in V_3} \cnorm{2}{\hatg_v}^{r - 2}\cnorm{2}{\hatg_v}^2 \nonumber \\
        &< \sum_{v \in V_3} \epsilon^{2 - r}\cnorm{2}{\hatg_v}^2 
            && (\cnorm{2}{\hatg_v} > 1/\epsilon \mbox{ for } v \in V_3,\mbox{ and } r < 2)
            \nonumber \\
        &= \epsilon^{2 - r} \sum_{v \in V_3}\cnorm{2}{\hatg_v}^2 \leq \epsilon^{2- r}|V|.&&
    \end{align} Combining the above with~\eqref{eq:soundness_sum} yields
    \begin{align}
        |V_0| & \geq \epsilon^{r} \sum_{v\in V_0} \enorm{r}{g_v}^r  \nonumber \\
        & \geq \epsilon^{r} \bigg( (\gamma_r^r + 4\epsilon^{2 - r}) |V| - 
            \epsilon^r |V| - (\gamma_r^r + \epsilon)|V| - \epsilon^{2-r} |V| \bigg)
            \nonumber \\
        & \geq \epsilon^{r} \epsilon^{2 - r} |V|  =   \epsilon^2 |V|, 
    \end{align} where the last inequality uses the fact that
    $\epsilon^{2- r } \geq \epsilon \geq \epsilon^r$.
\end{proof}
Therefore, $|V_0| \geq \epsilon^2 |V|$ and every vertex of $v$ satisfies 
$\cnorm{4}{\hatg_v} > \delta \epsilon$ and $\cnorm{2}{\hatg_v} \leq 1 / \epsilon$.
Using only these two facts
together with $\bfhatg \in \bfhatL$, \Briet \etal~\cite{BRS15} proved that if the
smoothness parameter $J$ is large enough given other parameters, $\calL$
admits a labeling that satisfies a significant fraction of edges.
\begin{lemma} [Lemma
    3.6 of \cite{BRS15}] Let $\beta := \delta^2 \epsilon^3$.  There exists an absolute
    constant $c' > 0$ such that if $\calL$ is $T$-to-$1$ and $T / (c' \epsilon^8
    \beta^4)$-smooth for some $T \in \N$, there is a labeling that satisfies at least $\epsilon^8 \beta^4
    /1024$ fraction of $E$.
\end{lemma}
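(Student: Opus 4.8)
The plan is to run the standard randomized-decoding argument for smooth \labelcover, in the style of \Briet, Regev and Saket~\cite{BRS15}, using the vector $\bfhatg \in \bfhatL$ together with the structure just established: a set $V_0 \subseteq V$ with $|V_0| \ge \epsilon^2 |V|$ such that every $v \in V_0$ satisfies $\cnorm{4}{\hatg_v} > \delta\epsilon$ and $\cnorm{2}{\hatg_v} \le 1/\epsilon$. From each such $v$ I would read off a small nonempty set of candidate labels $H_v := \{\, i \in [R] : |\hatg_v(i)| \ge \beta \,\}$. Nonemptiness is immediate from $\cnorm{4}{\hatg_v}^4 \le \cnorm{\infty}{\hatg_v}^2 \, \cnorm{2}{\hatg_v}^2$, which gives $\cnorm{\infty}{\hatg_v}^2 > (\delta\epsilon)^4 \cdot \epsilon^2 = \beta^2$ since $\beta = \delta^2\epsilon^3$; and $|H_v| \le \cnorm{2}{\hatg_v}^2/\beta^2 \le 1/(\epsilon^2\beta^2)$ by a counting bound. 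The decoding is the natural randomized labeling: independently for each $v \in V_0$, choose label $i$ with probability $\hatg_v(i)^2/\cnorm{2}{\hatg_v}^2$ (so that each $i \in H_v$ is picked with probability at least $\epsilon^2\beta^2$), and pick labels arbitrarily outside $V_0$. The goal is then to show this labeling satisfies, in expectation, at least an $\epsilon^8\beta^4/1024$ fraction of $E$, which a best-labeling choice derandomizes.

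Next I would localize smoothness. For a uniformly random edge $e$ incident to a fixed $v \in V_0$, a union bound over the $\binom{|H_v|}{2} \le 1/(2\epsilon^4\beta^4)$ pairs and the $J$-smoothness of \cref{thm:smooth_label_cover} show that the probability that $\pi_{e,v}$ fails to be injective on $H_v$ is at most $1/(2J\epsilon^4\beta^4)$; choosing $J = T/(c'\epsilon^8\beta^4)$ makes this at most $c'\epsilon^4/(2T)$. The weak-expansion property guarantees at least $\tfrac12 (\epsilon^2)^2 |E| = \tfrac12\epsilon^4 |E|$ edges with both endpoints in $V_0$, so taking $c'$ a small enough absolute constant and union-bounding over the two endpoints, at least an $\tfrac14\epsilon^4$ fraction of all edges are \emph{good}: both endpoints lie in $V_0$ and $\pi_{e,u}, \pi_{e,v}$ are injective on $H_u, H_v$ respectively.

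The heart of the argument is bounding the success probability on a good edge $e = (u,v)$ using $\bfhatg \in \bfhatL$, i.e.\ $\sum_{j \in \pi_{e,u}^{-1}(a)}\hatg_u(j) = \sum_{k \in \pi_{e,v}^{-1}(a)}\hatg_v(k)$ for all $a \in [L]$. Write $\tilde g_{e,u}$ for the function on $[L]$ obtained by folding $\hatg_u$ along $\pi_{e,u}$ (and similarly $\tilde g_{e,v}$), so that the consistency constraint is exactly $\tilde g_{e,u} = \tilde g_{e,v}$. The probability that the randomized labeling satisfies $e$ is $\sum_{a}\bigl(\sum_{j \in \pi_{e,u}^{-1}(a)}\hatg_u(j)^2\bigr)\bigl(\sum_{k \in \pi_{e,v}^{-1}(a)}\hatg_v(k)^2\bigr)\big/\bigl(\cnorm{2}{\hatg_u}^2\cnorm{2}{\hatg_v}^2\bigr)$, and since each preimage has size at most $T$, Cauchy--Schwarz (in the form $(\sum_{i} x_i)^2 \le T \sum_i x_i^2$) and $\tilde g_{e,u} = \tilde g_{e,v}$ lower-bound this by $\sum_a \tilde g_{e,u}(a)^4 / (T^2\cnorm{2}{\hatg_u}^2\cnorm{2}{\hatg_v}^2) = \cnorm{4}{\tilde g_{e,u}}^4 / (T^2\cnorm{2}{\hatg_u}^2\cnorm{2}{\hatg_v}^2) \ge \epsilon^4\,\cnorm{4}{\tilde g_{e,u}}^4 / T^2$. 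So it remains to show that folding preserves a constant fraction of the fourth-moment mass, namely that $\cnorm{4}{\tilde g_{e,u}}^4$ is at least a fixed multiple of $\cnorm{4}{\hatg_u}^4 > (\delta\epsilon)^4$ for most good edges. Here the candidate sets re-enter: since $\pi_{e,u}$ is injective on $H_u$, the heavy coordinates land in distinct classes of $[L]$, and after bounding how much light mass (at most $T$ coordinates per class, each below $\beta$, with total $\ell_2$ mass at most $\cnorm{2}{\hatg_u} \le 1/\epsilon$) can accumulate in or cancel within such a class, one shows these classes already contribute a constant fraction of $\cnorm{4}{\hatg_u}^4$; combining the pieces and bookkeeping the powers of $\epsilon$ and $\beta$ yields the claimed $\epsilon^8\beta^4/1024$ bound.

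I expect the last step — controlling the light Fourier coefficients after folding — to be the main obstacle, and it is precisely what distinguishes this setting from a pure $\ell_2$ projection argument: a single $\pi_{e,u}$-preimage class can gather up to $T$ subthreshold coordinates whose signed sum may cancel or dilute the heavy coordinate sitting in it, so the heavy label cannot simply be ``read through'' the consistency constraint. The resolution is to work throughout with the restriction of $\hatg_v$ to $H_v$, whose $\ell_1$ mass is bounded by $|H_v|\cdot\cnorm{2}{\hatg_v} \le 1/(\epsilon^3\beta^2)$ independently of the ambient dimension (which grows with the smoothness parameter, so any bound scaling with it would be useless), and to trade this against the $T$-to-$1$ and $\ell_2$ bounds; this careful accounting is exactly where the precise form of the smoothness requirement $T/(c'\epsilon^8\beta^4)$ and of the final fraction come from.
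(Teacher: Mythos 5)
The paper does not actually prove this lemma: it is invoked as a black-box citation of Lemma 3.6 of \cite{BRS15}, so there is no in-paper proof to compare your argument against. Evaluating your reconstruction on its own terms, the framing (candidate sets $H_v$ via the fourth-moment/two-norm gap, randomized decoding proportional to $\hatg_v(i)^2$, localizing smoothness so that $\pi_{e,v}$ is injective on $H_v$, and weak expansion to get an $\Omega(\epsilon^4)$ fraction of good edges) is the standard template and is set up correctly.

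The genuine gap is in the per-edge success bound, and you have flagged the right place but the Cauchy--Schwarz step you propose cannot close it. After passing to $\tilde g_{e,u}$ you lower-bound the satisfaction probability by $\epsilon^4 \cnorm{4}{\tilde g_{e,u}}^4 / T^2$. To match the $T$-independent target $\epsilon^8\beta^4/1024$ (after dividing out the $\epsilon^4/4$ fraction of good edges), you would need $\cnorm{4}{\tilde g_{e,u}}^4 \gtrsim \beta^4 T^2$; but this cannot hold in general, since on a good edge $\pi_{e,u}$ is injective on $H_u$ so each class contains at most one heavy coordinate, and the natural lower bound on $\cnorm{4}{\tilde g_{e,u}}^4$ from the surviving heavy mass is only of order $\beta^4$, independent of $T$. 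Your own final paragraph identifies the obstruction (up to $T$ sub-threshold coordinates can cancel or dilute the heavy one inside a class) and gestures at restricting to $H_v$ and bounding $\ell_1$ mass, but does not show how this removes the $T^2$ loss; as written the sketch terminates with a bound that degrades with $T$, whereas the claimed fraction is $T$-free. Any complete proof must handle the within-class cancellation more carefully than a single global Cauchy--Schwarz (e.g.\ by using smoothness to control the expected light-mass collision with the heavy coordinate's class, or by a decoding/counting argument localized to the heavy class only), and this is exactly what your sketch leaves open.
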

This finishes the proof of~\cref{lem:soundness} by setting $\xi := \epsilon^8 \beta^4 /1024$
and $J := D(\xi) / (c' \epsilon^8 \beta^4)$ with $D(\xi)$ defined in~\cref{thm:smooth_label_cover}. 
Given a $3$-SAT formula, $\phi$, 
by the standard property of Smooth Label Cover, the size of the reduction is 
$|\phi|^{O(J \log (1 / \xi))} = |\phi|^{\poly(1/\eps)}$. 
\end{proof}
\fi

\section{Hardness of \onorm{p}{q}}\label{sec:hypercontractive}
In this section, we prove our main results. 
We prove \cref{thm:main_nhc} on hardness of approximating \onorm{p}{q} when $p \geq 2 \geq q$, and 
\cref{thm:main_hype} on hardness of approximating \onorm{p}{q} when 
$2 < p < q$. By duality, the same hardness is implied for the case of $p < q < 2$.  

Our result for $p \geq 2 \geq q$ in~\cref{sec:composition} follows 
from \cref{thm:brs} using additional properties in the completeness case. 
For hypercontractive norms, we start by showing 
constant factor hardness via reduction from \onorm{p}{2} (see \cref{isometry}), and then amplify the 
hardness factor by using the fact that all hypercontractive norms productivize under Kronecker product, 
which we prove in \cref{sec:productivize}. 
\subsection{Hardness for $p \geq 2 \geq q$}
\label{sec:composition}

We use~\cref{thm:brs} to prove hardness of \onorm{p}{q} for $p \geq 2 \geq q$, which proves~\cref{thm:main_nhc}.

\begin{proofof}{\cref{thm:main_nhc}}
	Fix $p, q$, and $\delta > 0$ such that $\infty \geq p \geq 2 \geq q$ and $p > q$. 
	Our goal is to prove that \onorm{p}{q} is NP-hard to approximate within a factor $1 / (\gamma_{p^*}\gamma_q + \delta)$. 
	For \onorm{2}{q} for $1 \leq q < 2$, Theorem~\ref{thm:brs} (with $\varepsilon \leftarrow \delta^{1/(2-q)}$) directly proves a hardness
     ratio of $1/(\gamma_q + \varepsilon^{2-q}) = 1/(\gamma_q + \delta)$. 
	By duality, it also gives an $1/(\gamma_{p^*} + \delta)$ hardness for \onorm{p}{2} for $p > 2$. 

    For \onorm{p}{q} for $p > 2 > q$,
	apply~\cref{thm:brs} with $\varepsilon = (\delta / 3)^{\max(1/(2-p^*), 1/(2-q))}$. 
	It gives a polynomial time reduction that produces a symmetric matrix $A \in \R^{n \times n}$ given a \threesat formula $\phi$. 
Our instance for \onorm{p}{q} is  $AA^T = A^2$.
    \begin{itemize}
        \item (Completeness) If $\phi$ is satisfiable, there exists $x \in \R^{n}$ such that $|x(i)| = 1$ for all $i \in [N]$ and $Ax = x$. Therefore, $A^2
            x = x$ and $\enorm{p}{q}{A^2} \geq 1$. 

        \item (Soundness) If $\phi$ is not satisfiable, 
\begin{align*}
\enorm{p}{2}{A} &~=~ \enorm{2}{p^*}{A} \leq  \gamma_{p^*} + \varepsilon^{2-p^*} \leq  \gamma_{p^*} +
                  \delta / 3, 
\mbox{ and } \\
\enorm{2}{q}{A} &~\leq~ \gamma_{q} + \varepsilon^{2-q} \leq \gamma_{q} + \delta / 3. 
\end{align*}
This implies that 
            \[ \enorm{p}{q}{A^2} \leq \enorm{p}{2}{A}  \enorm{2}{q}{A} \leq
                (\gamma_{p^*} + \delta /3)(\gamma_{q} + \delta / 3) \leq
                \gamma_{p^*}\gamma_q + \delta \mper \]
    \end{itemize} 
This creates a gap of $1 / (\gamma_{p^*} \gamma_q + \delta)$ between the completeness and the soundness case. The same gap holds for the counting norm since $\cnorm{p}{q}{A^2} = n^{1/q - 1/p} \cdot \enorm{p}{q}{A^2}$. 
\end{proofof}

\subsection{Reduction from \onorm{p}{2} via Approximate Isometries}
\label{isometry}
Let $A \in \R^{n \times n}$ be a hard instance of \onorm{p}{2}. For any $q \geq 1$, if a matrix 
$B \in \R^{m \times n}$ satisfies $\cnorm{q}{Bx} = (1 \pm o(1)) \cnorm{2}{x}$ for all $x \in \R^n$,  
then $\norm{p}{q}{BA} = (1 \pm o(1)) \norm{p}{2}{A}$. Thus $BA$ will serve as a hard instance for 
\onorm{p}{q} if one can compute such a matrix $B$ efficiently. In fact, a consequence of the 
Dvoretzky-Milman theorem is that a sufficiently tall random matrix $B$ satisfies the aforementioned 
property with high probability. 
In other words, for $m=m(q,n)$ sufficiently large, a random linear operator from 
$\ell_2^n$ to $\ell_q^m$ is an approximate isometry. 

To restate this from a geometric perspective, for $m(q,n)$ sufficiently larger than $n$, a random section 
of the unit ball in $\ell_q^m$ is approximately isometric to the unit ball in $\ell_2^n$. In the interest of 
simplicity, we will instead state and use a corollary of the following matrix deviation inequality due to 
Schechtman (see \cite{Schechtman06}, Chapter 11 in \cite{Vershynin17}). 
\begin{theorem}[Schechtman~\cite{Schechtman06}]
\label{matrix:deviation:bound}
    Let $B$ be an $m \times n$ matrix with i.i.d. $\gaussian{0}{1}$ entries. Let $f : \R^m\to \R$ be a 
    positive-homogeneous and subadditive function, and let $b$ be such that 
    $f(y)\leq b \cnorm{2}{y}$ for all $y\in \R^m$. Then for any $T\subset \R^n$, 
    \[
        \sup_{x\in T} |f(Bx)-\Ex{f(Bx)}| = O(b\cdot \gamma(T) + t\cdot \mathrm{rad}(T))
    \]
    with probability at least $1-e^{-t^2}$, 
    where $\mathrm{rad}(T)$ is the radius of $T$, and 
    $\gamma(T)$ is the Gaussian complexity of $T$ defined as 
    \[
        \gamma(T) := \Ex{g\sim \gaussian{0}{I_n}}{\sup_{t\in T} |\mysmalldot{g}{t}|}
    \]
\end{theorem}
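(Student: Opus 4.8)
The plan is to prove this the way the matrix deviation inequality is usually established (streamlining Schechtman's original multiscale argument~\cite{Schechtman06}; see also~\cite{Vershynin17}): one studies the random process $x\mapsto f(Bx)$ on $T$ through a sub-Gaussian increment bound, and then combines generic chaining with Gaussian concentration. First I would record the elementary facts: since $f$ is positive-homogeneous and subadditive it is the support function of a closed convex set $K$, and the hypothesis $f(y)\le b\,\cnorm{2}{y}$ says precisely that $K$ is contained in the Euclidean ball of radius $b$; in particular $f$ is $b$-Lipschitz, $|f(y)-f(y')|\le b\,\cnorm{2}{y-y'}$. Let $X_x := f(Bx) - \Ex{f(Bx)}$ for $x\in T$; the goal is to control $\sup_{x\in T}|X_x|$. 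Since each row of $B$ is $\gaussian{0}{I_n}$, the vector $Bx$ has the law of $\cnorm{2}{x}\cdot g$ with $g\sim\gaussian{0}{I_m}$, so $\Ex{f(Bx)} = \cnorm{2}{x}\cdot\Ex{f(g)}$, which can be of order $b\sqrt{m}\,\cnorm{2}{x}$. The content of the theorem is that once this deterministic quantity is removed, the residual fluctuation is governed by $\gamma(T)$, with \emph{no} $\sqrt m$ factor.

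The key step, and what I expect to be the main obstacle, is the sub-Gaussian increment estimate
\[
\bigl\lVert X_x - X_{x'}\bigr\rVert_{\psi_2} ~\lesssim~ b\,\cnorm{2}{x-x'}\qquad\text{for all }x,x'\in T,
\]
again with no dependence on $m$. The naive route, using subadditivity to write $|f(Bx)-f(Bx')|\le\max\{f(B(x-x')),\,f(B(x'-x))\}\le b\,\cnorm{2}{B(x-x')}=b\,\cnorm{2}{x-x'}\cdot\cnorm{2}{g}$, only gives a $\psi_2$ bound of order $b\sqrt m\,\cnorm{2}{x-x'}$ and would inject a spurious $\sqrt m$ into the chaining; the point is that this bound is dominated by a deterministic term that cancels upon centring. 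To extract the cancellation I would use the variational form $f(Bx)=\mysmalldot{z^\ast}{Bx}$ and $f(Bx')=\mysmalldot{z'^\ast}{Bx'}$ with optimisers $z^\ast,z'^\ast\in K$, giving the sandwich $\mysmalldot{z'^\ast}{B(x-x')}\le f(Bx)-f(Bx')\le\mysmalldot{z^\ast}{B(x-x')}$, and then combine the Gaussian concentration of each linear functional $\mysmalldot{z}{B(x-x')}\sim\gaussian{0}{\cnorm{2}{z}^2\cnorm{2}{x-x'}^2}$ with a careful accounting of how the optimiser $z^\ast=z^\ast(x)$ moves as $x$ varies. For the prototypical case $f=\cnorm{2}{\cdot}$ this is completely explicit: from $\cnorm{2}{Bx}^2-\cnorm{2}{Bx'}^2 = 2\,\mysmalldot{Bx'}{B(x-x')}+\cnorm{2}{B(x-x')}^2$, dividing by $\cnorm{2}{Bx}+\cnorm{2}{Bx'}$ (which concentrates near $2\sqrt m\,\cnorm{2}{x}$) and applying $\chi^2$-concentration to $\cnorm{2}{B(x-x')}^2$ and the Hanson--Wright inequality to the bilinear term leaves a fluctuation of order $\cnorm{2}{x-x'}$. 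The general sublinear case is more delicate precisely because $z^\ast(x)$ can jump, and I expect the bulk of the work to be here (this is the step carried out in~\cite{Schechtman06} and~\cite{Vershynin17}).

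Granting the increment bound, $(X_x)_{x\in T}$ is a sub-Gaussian process with respect to the metric $b\,\cnorm{2}{\cdot}$, so Talagrand's comparison inequality --- equivalently, generic chaining / the majorizing-measures theorem --- gives
\[
\Ex{\sup_{x\in T}|X_x|} ~\lesssim~ b\cdot\gamma_2\bigl(T,\cnorm{2}{\cdot}\bigr) ~\asymp~ b\cdot\gamma(T),
\]
using that Talagrand's $\gamma_2$-functional of a subset of Euclidean space is comparable to its Gaussian complexity. To upgrade to a tail bound, observe that the map $B\mapsto\sup_{x\in T}|f(Bx)-\Ex{f(Bx)}|$ is $b\cdot\mathrm{rad}(T)$-Lipschitz with respect to the Frobenius (Euclidean) norm on $\R^{m\times n}$, since $|f(Bx)-f(B'x)|\le f((B-B')x)\le b\,\cnorm{2}{(B-B')x}\le b\,\mathrm{rad}(T)\,\lVert B-B'\rVert_F$; Gaussian concentration then yields $\Pr{\sup_{x\in T}|X_x| > \Ex{\sup_{x\in T}|X_x|}+u}\le\exp\bigl(-u^2/(2b^2\mathrm{rad}(T)^2)\bigr)$, and taking $u\asymp t\,b\,\mathrm{rad}(T)$ and combining with the expectation bound gives $\sup_{x\in T}|f(Bx)-\Ex{f(Bx)}| = O\bigl(b\,\gamma(T)+t\,b\,\mathrm{rad}(T)\bigr)$ with probability at least $1-e^{-t^2}$, which is the claimed bound. (Schechtman's original proof replaces the chaining by a direct successive-approximation scheme that applies Gaussian concentration on a net at each scale; the formulation above just repackages that.)
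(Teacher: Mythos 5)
Your proposal takes the same route the paper indicates: the paper cites the theorem and remarks only that it follows from showing that $X_x := f(Bx) - \E[f(Bx)]$ has sub-Gaussian increments and then appealing to Talagrand's comparison bound, which is exactly your plan, and you correctly supply the Lipschitz--Gaussian-concentration step to pass from the expectation bound to the stated tail bound. (A minor point: your tail term carries the factor $b$, i.e.\ $t\,b\,\mathrm{rad}(T)$, whereas the paper's statement reads $t\,\mathrm{rad}(T)$; since the paper only invokes the theorem with $b=1$, this discrepancy is immaterial.)
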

The above theorem is established by proving that the random process given by
$X_x := f(Bx) - \Ex{f(Bx)}$  has sub-gaussian increments with respect to $L_2$ and subsequently
appealing to  Talagrand's Comparison tail bound.

We will apply this theorem with $f(\cdot) = \cnorm{q}{\cdot}$, $b=1$ and $T$ being the unit ball 
under $\cnorm{2}{\cdot}$. We first state a known estimate of $\Ex{f(Bx)} = \Ex{\cnorm{q}{Bx}}$ for any fixed $x$ satisfying $\cnorm{2}{x} = 1$. 
Note that when $\cnorm{2}{x} = 1$, $Bx$ has the same distribution as an $m$-dimensional random vector with i.i.d. $\gaussian{0}{1}$ coordinates. 

\begin{theorem}[Biau and Mason~\cite{BM15}]
\label{expected:qnorm}
    Let $X\in\R^m$ be a random vector with i.i.d. $\gaussian{0}{1}$ coordinates. Then for any $q\geq 2$, 
    \[
        \Ex{\cnorm{q}{X}} = m^{1/q}\cdot \gamma_q+O(m^{(1/q) -1)}).
    \]
\end{theorem}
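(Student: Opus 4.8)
The plan is to reduce the statement to a concentration estimate for the random $q$-th power and then transfer it through the map $t \mapsto t^{1/q}$. Set $S := \cnorm{q}{X}^{q} = \sum_{i=1}^{m} |X_i|^{q}$, so that $\cnorm{q}{X} = S^{1/q}$; since the coordinates are i.i.d.\ standard Gaussians, $a := \E[S] = m\,\E|X_1|^{q} = m\gamma_q^{q}$, and note $a^{1/q} = m^{1/q}\gamma_q$. Let $\sigma^{2} := \Varsymb(|X_1|^{q}) = \gamma_{2q}^{2q} - \gamma_q^{2q}$, which is finite because Gaussians have all moments, so $\Varsymb(S) = m\sigma^{2}$. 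The whole argument then amounts to comparing $\E[S^{1/q}]$ with $a^{1/q}$ using only the concavity and second-order smoothness of $t \mapsto t^{1/q}$ on $(0,\infty)$.

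First I would dispose of the upper bound: since $t \mapsto t^{1/q}$ is concave on $[0,\infty)$ for $q \ge 1$, Jensen's inequality gives $\E[\cnorm{q}{X}] = \E[S^{1/q}] \le (\E S)^{1/q} = m^{1/q}\gamma_q$, with no error term.

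For the lower bound, I would first isolate the tail event $E := \{\,|S-a| > a/2\,\}$, which by Chebyshev satisfies $\Pr{E} \le \Varsymb(S)/(a/2)^{2} = O(1/m)$. On $E^{c}$ both $S$ and $a$ lie in $[a/2,3a/2]$, so a second-order Taylor expansion of $h(t) := t^{1/q}$ around $a$ with Lagrange remainder gives, for some $\xi$ between $S$ and $a$,
\[ h(S) = a^{1/q} + \tfrac1q\, a^{1/q-1}(S-a) + \tfrac12\, h''(\xi)(S-a)^{2}, \]
and since $h'' < 0$ with $|h''(\xi)| = \tfrac{q-1}{q^{2}}\,\xi^{1/q-2} \le \tfrac{q-1}{q^{2}}(a/2)^{1/q-2}$ there (as $1/q-2 < 0$ and $\xi \ge a/2$), this yields a clean pointwise lower bound for $h(S)\1{E^{c}}$. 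Multiplying by $\1{E^{c}}$, taking expectations, and using $\E[S^{1/q}\1{E}] \ge 0$, I am left with three contributions to estimate: the leading term $a^{1/q}\Pr{E^{c}} = a^{1/q} - a^{1/q}\Pr{E} = m^{1/q}\gamma_q - O(m^{1/q-1})$; the linear term, which I would control using the first-moment cancellation $\E[S-a]=0$, so that $\E[(S-a)\1{E^{c}}] = -\E[(S-a)\1{E}]$ has absolute value at most $\sqrt{\Varsymb(S)\,\Pr{E}} = O(1)$ by Cauchy--Schwarz, contributing $O(a^{1/q-1}) = O(m^{1/q-1})$ since $a^{1/q-1} = m^{1/q-1}\gamma_q^{1-q}$; and the quadratic remainder, bounded in absolute value by $\tfrac{q-1}{2q^{2}}(a/2)^{1/q-2}\E[(S-a)^{2}] = \tfrac{q-1}{2q^{2}}(a/2)^{1/q-2}m\sigma^{2} = O(m^{1/q-1})$ because $a^{1/q-2}m = m^{1/q-1}\gamma_q^{1-2q}$. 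Adding these gives $\E[\cnorm{q}{X}] \ge m^{1/q}\gamma_q - O(m^{1/q-1})$, matching the upper bound.

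The step I expect to be the real obstacle is obtaining the error exponent $m^{1/q-1}$ rather than the weaker $m^{1/q-1/2}$ that a crude first-order estimate $|S^{1/q} - a^{1/q}| \lesssim a^{1/q-1}|S-a|$ would produce: one must exploit $\E[S-a]=0$ and carry the expansion to second order, while separately quarantining the tail event $E$ on which the Taylor remainder is not uniformly controlled. Throughout, the implied constants are allowed to depend on $q$ (and remain bounded for $q$ ranging over compact subsets of $[2,\infty)$, where $\gamma_q \ge 1$), which is all the applications require.
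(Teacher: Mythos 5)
The paper does not prove this statement; it cites it verbatim from Biau and Mason \cite{BM15}, so there is no in-paper argument to compare against. Your proof is correct and self-contained: the upper bound via concavity/Jensen is immediate, and the lower bound correctly isolates the tail event $E$ (Chebyshev gives $\Pr{E} = O(1/m)$), applies a second-order Taylor expansion of $t^{1/q}$ with a uniform bound on $|h''|$ over $[a/2,3a/2]$, and uses $\E[S-a]=0$ together with Cauchy--Schwarz to push the linear-term contribution down to $O(m^{1/q-1})$. Your diagnosis of the key obstacle is also right: a first-order Lipschitz estimate only yields $O(m^{1/q-1/2})$, and it is precisely the centering $\E[S-a]=0$ combined with the second-order expansion (and quarantining $E$ where the Hessian bound fails) that recovers the sharper $O(m^{1/q-1})$. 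This is essentially the standard Taylor-around-the-mean argument that appears in the Biau--Mason line of work, though theirs is stated for more general coordinate distributions and extracts the exact coefficient of the $m^{1/q-1}$ term, which your cruder truncation does not attempt and which the application here does not require.
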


We are now equipped to see that a tall random Gaussian matrix is an approximate isometry (as a linear map 
from $\ell_2^n$ to $\ell_q^m$) with high probability. 
\begin{corollary}
\label{embedding}
    Let $B$ be an $m \times n$ matrix with i.i.d. $\gaussian{0}{1}$ entries where $m = \omega(n^{q/2})$. 
    Then with probability at least $1-e^{-n}$, every vector $x\in \R^n$ 
    satisfies, \[\cnorm{q}{Bx} = (1\pm o(1))\cdot m^{1/q} \cdot \gamma_q  \cdot \cnorm{2}{x}. \]
\end{corollary}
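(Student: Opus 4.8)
The plan is to combine Theorem~\ref{matrix:deviation:bound} (Schechtman's matrix deviation inequality) with the estimate of Theorem~\ref{expected:qnorm} (Biau--Mason) and a homogeneity argument to pass from the unit sphere to all of $\R^n$. First I would apply Theorem~\ref{matrix:deviation:bound} with $f(\cdot) = \cnorm{q}{\cdot}$, which is indeed positive-homogeneous and subadditive, and with $b = 1$ since $\cnorm{q}{y} \leq \cnorm{2}{y}$ for $q \geq 2$. The set $T$ will be the Euclidean unit sphere $S^{n-1}$ (equivalently the unit ball, since $f$ and the estimate are scale-invariant on rays). For this choice, $\mathrm{rad}(T) = 1$ and the Gaussian complexity is $\gamma(T) = \Ex{g \sim \gaussian{0}{I_n}}{\sup_{t \in S^{n-1}} |\mysmalldot{g}{t}|} = \Ex{\cnorm{2}{g}} \leq \sqrt{n}$. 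Taking $t = \sqrt{n}$ in the theorem gives that with probability at least $1 - e^{-n}$,
\[
\sup_{x \in S^{n-1}} \bigl| \cnorm{q}{Bx} - \Ex{\cnorm{q}{Bx}} \bigr| = O\bigl(\sqrt{n}\bigr).
\]

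Next I would invoke Theorem~\ref{expected:qnorm}: for any fixed $x$ with $\cnorm{2}{x} = 1$, the vector $Bx$ is distributed as an $m$-dimensional standard Gaussian, so $\Ex{\cnorm{q}{Bx}} = m^{1/q}\gamma_q + O(m^{1/q - 1})$. Substituting this into the deviation bound, for every $x \in S^{n-1}$ we get
\[
\cnorm{q}{Bx} = m^{1/q}\gamma_q + O(m^{1/q-1}) + O(\sqrt{n}) = m^{1/q}\gamma_q\Bigl(1 + O\bigl(m^{-1}\bigr) + O\bigl(\sqrt{n}\,m^{-1/q}\bigr)\Bigr).
\]
The hypothesis $m = \omega(n^{q/2})$ is exactly what makes the error term $\sqrt{n}\,m^{-1/q} = o(1)$ (and $m^{-1}$ is trivially $o(1)$), so $\cnorm{q}{Bx} = (1 \pm o(1))\, m^{1/q}\gamma_q$ uniformly over the unit sphere. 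Finally, for an arbitrary nonzero $x \in \R^n$, write $x = \cnorm{2}{x} \cdot (x/\cnorm{2}{x})$ and use positive homogeneity of $\cnorm{q}{B\cdot}$ to conclude $\cnorm{q}{Bx} = (1 \pm o(1))\, m^{1/q}\gamma_q \cnorm{2}{x}$; the case $x = 0$ is trivial.

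The main thing to be careful about — rather than a deep obstacle — is bookkeeping the two competing error scales: the additive $O(\sqrt n)$ from the deviation inequality versus the main term $m^{1/q}\gamma_q$, and checking that $m = \omega(n^{q/2})$ is precisely the threshold at which their ratio vanishes. One should also double-check that the Gaussian complexity of the sphere is $\Theta(\sqrt n)$ (it is $\Ex{\cnorm{2}{g}}$, which lies between $\sqrt{n}/\sqrt2$ and $\sqrt n$) and that taking $t = \sqrt n$ is legitimate and yields the stated $1 - e^{-n}$ probability. No step requires more than these routine estimates once Theorems~\ref{matrix:deviation:bound} and~\ref{expected:qnorm} are in hand.
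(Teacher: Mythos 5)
Your proof is correct and follows essentially the same route as the paper: apply Schechtman's matrix deviation inequality with $f = \cnorm{q}{\cdot}$, $b=1$, $t=\sqrt n$, and $T = S^{n-1}$ (so $\gamma(T)=\Theta(\sqrt n)$, $\mathrm{rad}(T)=1$), combine with the Biau--Mason estimate of $\Ex{\cnorm{q}{Bx}}$ via the triangle inequality, and observe that $m=\omega(n^{q/2})$ makes both error terms $o(m^{1/q})$. The only cosmetic difference is that you spell out the homogeneity step extending from the sphere to all of $\R^n$, which the paper leaves implicit.
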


\begin{proof}
    We apply \cref{matrix:deviation:bound} with function $f$ being the $\ell_q$ norm, $b=1$, and $t = \sqrt{n}$. 
    Further we set $T$ to be the $\ell_2$ unit sphere, which yields $\gamma(T) = \Theta(\sqrt{n})$ and 
    $\mathrm{rad}(T) = 1$.  Applying \cref{expected:qnorm} yields that with probability at least 
    $1 - e^{t^2} = 1-e^{-n}$, for all $x$ with $\cnorm{2}{x} = 1$, we have 
\begin{align*}
\left| \cnorm{q}{Bx} - m^{1/q}  \cdot \gamma_q \right| 
& \leq 
\left| \cnorm{q}{Bx} - \Ex{\cnorm{q}{X}} \right| 
+ 
\left| \Ex{\cnorm{q}{X}} - m^{1/q} \cdot  \gamma_q \right|  \\
& \leq O(b \cdot \gamma(T) + t  \cdot \mathrm{rad}(T) + m^{(1/q)-1})  \\
& \leq O(\sqrt{n} + \sqrt{n} + m^{(1/q)-1})\\ 
& \leq o(m^{1/q}). \tag*{\qedhere}
\end{align*}
\end{proof}

We thus obtain the desired constant factor hardness:
\begin{proposition}
\label{hypercontractive:const:factor}
    For any $p>2,~2\leq q< \infty$ and any $\varepsilon>0$, there is no polynomial time algorithm that 
    approximates \onorm{p}{q} (and consequently \onorm{q^*}{p^*}) within a factor of 
    $1/\gamma_{p^*}-\varepsilon$ ~unless $\NP \not\subseteq \BPP$. 
\end{proposition}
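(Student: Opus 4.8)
The plan is to combine \cref{thm:brs} (via duality, giving hardness of \onorm{p}{2} for $p > 2$) with the approximate isometry from \cref{embedding}, which embeds $\ell_2^n$ into $\ell_q^m$ for a tall enough random Gaussian $B$. First I would invoke \cref{thm:brs} with parameter $\varepsilon' = (\varepsilon')(\varepsilon, p)$ chosen small enough that $\gamma_{p^*} + (\varepsilon')^{2 - p^*} < \gamma_{p^*} + \varepsilon/2$, say. By \cref{onorm:groth}, $\cnorm{2}{p^*}{A} = \cnorm{p}{2}{A^T}$, and since $A$ is symmetric this gives a \threesat$\to$\onorm{p}{2} reduction producing $A \in \R^{n \times n}$ with: in the completeness case a $\pm 1$ vector $x$ with $Ax = x$ (so $\cnorm{p}{2}{A} \geq \cnorm{2}{x}/\cnorm{p}{x} = n^{1/2}/n^{1/p} = n^{1/2 - 1/p}$, i.e. $\enorm{p}{2}{A}\ge 1$), and in the soundness case $\enorm{2}{p^*}{A} \le \gamma_{p^*} + \varepsilon/2$, hence $\enorm{p}{2}{A} \le \gamma_{p^*} + \varepsilon/2$.

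Next I would set $m = \omega(n^{q/2})$ (e.g. $m = n^{q}$), sample $B \in \R^{m \times n}$ with i.i.d.\ $\gaussian{0}{1}$ entries, and form the matrix $M = BA$, which is the output instance for \onorm{p}{q}. By \cref{embedding}, with probability at least $1 - e^{-n}$ we have $\cnorm{q}{By} = (1 \pm o(1)) m^{1/q} \gamma_q \cnorm{2}{y}$ simultaneously for all $y \in \R^n$. Condition on this event. Then for every $x$,
\[
\frac{\cnorm{q}{BAx}}{\cnorm{p}{x}} = (1 \pm o(1)) \, m^{1/q}\gamma_q \cdot \frac{\cnorm{2}{Ax}}{\cnorm{p}{x}},
\]
so $\cnorm{q}{p}{} $... more precisely $\cnorm{p}{q}{BA} = (1 \pm o(1)) m^{1/q} \gamma_q \cdot \cnorm{p}{2}{A}$, and dividing by the common normalization factor relating counting and expectation norms ($\cnorm{p}{q}{BA} = m^{1/q} n^{-1/p}\enorm{p}{q}{BA}$ etc.), $\enorm{p}{q}{BA} = (1\pm o(1))\gamma_q\, \enorm{p}{2}{A}$. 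Thus in the completeness case $\enorm{p}{q}{BA} \ge (1 - o(1))\gamma_q$, and in the soundness case $\enorm{p}{q}{BA} \le (1 + o(1))\gamma_q(\gamma_{p^*} + \varepsilon/2)$. The ratio between the two cases is at least $(1-o(1))/(\gamma_{p^*} + \varepsilon/2) \ge 1/\gamma_{p^*} - \varepsilon$ for $n$ large, giving the claimed hardness. Since the reduction uses a random $B$ but otherwise runs in polynomial time (in $|\phi|$, as $m, n = |\phi|^{\poly(1/\varepsilon')}$), and correctness holds with probability $1 - e^{-n}$, this is a randomized polynomial-time reduction, placing the problem's inapproximability under $\NP \not\subseteq \BPP$. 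The statement for \onorm{q^*}{p^*} follows immediately from \cref{onorm:groth}.

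The only genuinely delicate point is bookkeeping the counting-versus-expectation normalization so that the Gaussian-moment constant $\gamma_q$ cancels on the completeness side and only $\gamma_{p^*}$ survives in the gap; I would handle this by working throughout with a fixed convention (say counting norms for $BA$) and explicitly tracking the factors $m^{1/q}, n^{1/2}, n^{1/p}$, and then noting the gap is scale-invariant. A secondary check is that the $(1 \pm o(1))$ multiplicative error from \cref{embedding} is harmless: since $\gamma_{p^*} < 1$ strictly for $p^* < 2$ (equivalently $p > 2$), for any fixed $\varepsilon > 0$ one can absorb the $o(1)$ error into $\varepsilon$ by taking $n$ (hence the formula size) large. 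No new ideas beyond \cref{thm:brs}, duality (\cref{onorm:groth}), and \cref{embedding} are needed.
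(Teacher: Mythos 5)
Your proposal is correct and follows essentially the same route as the paper: apply \cref{thm:brs} (via duality, using symmetry of $A$) to obtain a hard instance of \onorm{p}{2}, then map $A \mapsto BA$ with $B$ a random Gaussian $m \times n$ matrix, $m = \omega(n^{q/2})$, and invoke \cref{embedding} so that $\cnorm{p}{q}{BA} = (1\pm o(1))\,m^{1/q}\gamma_q\,\cnorm{p}{2}{A}$. One small bookkeeping slip: the relation $\enorm{p}{q}{BA} = (1\pm o(1))\,\gamma_q\,\enorm{p}{2}{A}$ is off by a constant factor ($n^{1/2}$, because $BA$ is $m\times n$ while $A$ is $n\times n$), but as you note the gap is scale-invariant so this does not affect the conclusion.
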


\begin{proof}
    By \cref{embedding}, for every $n\times n$ matrix $A$ and a random $m\times n$ matrix $B$ with i.i.d. 
    $\gaussian{0}{1}$ entries ($m = \omega(n^{q/2})$), with probability at least $1-e^{-n}$, we have 
    \[
        \cnorm{p}{q}{BA} = 
        (1\pm o(1))\cdot \gamma_q\cdot m^{1/q}\cdot \cnorm{p}{2}{A}.
    \]
    Thus the reduction $A\rightarrow BA$ combined with \onorm{p}{2} hardness implied by \cref{thm:brs}, 
    yields the claim. 
\end{proof}

The generality of the concentration of measure phenomenon underlying the proof of the 
Dvoretzky-Milman theorem allows us to generalize \cref{hypercontractive:const:factor}, to obtain constant 
factor hardness of maximizing various norms over the $\ell_p$ ball ($p>2$).  In this more general version, 
the strength of our hardness assumption is dependent on the Gaussian width of the dual of the norm being 
maximized. Its proof is identical to that of \cref{hypercontractive:const:factor}. 

\begin{theorem}
\label{p-to-anything:const:factor}
    Consider any $p>2, \varepsilon>0$, and any family $(f_m)_{m\in \N}$ of positive-homogeneous and  
    subadditive functions where $f_m : \R^m\to \R$. Let $(b_m)_{m\in \N}$ be such that $f_m(y) \leq b_m\cdot 
    \cnorm{2}{y}$ for all $y$ and let $N=N(n)$ be such that $\gamma_*(f_N) = \omega(b_N\cdot \sqrt{n})$, 
    where 
    \[
        \gamma^*(f_N) := \Ex{g\sim \gaussian{0}{I_N}}{f_N(g)}.
    \] 
    Then unless $\NP \not\subseteq \BPTIME{N(n)}$, there is no polynomial 
    time $(1/\gamma_{p^*}-\varepsilon)$-approximation algorithm for the problem of computing 
    $\sup_{\norm{p}{x} = 1} f_m(Ax)$, given an $m\times n$ matrix $A$. 
\end{theorem}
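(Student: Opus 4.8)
The plan is to follow the proof of \cref{hypercontractive:const:factor} essentially verbatim, replacing the $\ell_q$ norm throughout by the abstract function $f_N$ and the Dvoretzky-type estimate for $\ell_q$ (\cref{embedding}) by its analogue for $f_N$. Given a $3$-CNF formula $\phi$, first run the reduction of \cref{thm:brs} with a sufficiently small parameter $\varepsilon' = \varepsilon'(\varepsilon, p) > 0$ to obtain a symmetric matrix $A \in \R^{n\times n}$ with $n = |\phi|^{\poly(1/\varepsilon')}$. Since $A$ is symmetric, \cref{onorm:groth} gives $\enorm{p}{2}{A} = \enorm{2}{p^*}{A}$ (with $p^* < 2$), so \cref{thm:brs} yields $\enorm{p}{2}{A} \geq 1$ in the completeness case and $\enorm{p}{2}{A} \leq \gamma_{p^*} + (\varepsilon')^{2 - p^*}$ in the soundness case. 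Then sample an $N \times n$ matrix $B$ with i.i.d.\ $\gaussian{0}{1}$ entries, where $N = N(n)$, and output $M := BA$ as the instance of the problem $\sup_{\norm{p}{x} = 1} f_N(Mx)$. This reduction runs in time $\poly(N(n))$ and uses randomness only to sample $B$, which is why the resulting hardness is conditional on $\NP \not\subseteq \BPTIME{N(n)}$.

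The one new ingredient is the analogue of \cref{embedding}: if $\gamma^*(f_N) = \omega(b_N \sqrt{n})$, then with probability at least $1 - e^{-n}$ over $B$, every $x \in \R^n$ satisfies $f_N(Bx) = (1 \pm o(1))\,\gamma^*(f_N)\,\cnorm{2}{x}$. To see this, fix $x$ with $\cnorm{2}{x} = 1$; then $Bx$ has the $\gaussian{0}{I_N}$ distribution, so $\Ex{f_N(Bx)} = \gamma^*(f_N)$, and positive-homogeneity of $f_N$ extends this to $\Ex{f_N(Bx)} = \gamma^*(f_N)\,\cnorm{2}{x}$ for all $x$. Now apply \cref{matrix:deviation:bound} with the function $f_N$, the constant $b = b_N$, the set $T$ equal to the $\cnorm{2}{\cdot}$-unit sphere in $\R^n$ (so that $\gamma(T) = \Theta(\sqrt n)$ and $\mathrm{rad}(T) = 1$), and $t = \sqrt n$: with probability at least $1 - e^{-n}$,
\[
\sup_{\cnorm{2}{x} = 1}\bigl| f_N(Bx) - \gamma^*(f_N)\bigr| \;=\; O\bigl(b_N\sqrt n\bigr) \;=\; o\bigl(\gamma^*(f_N)\bigr),
\]
the last step being exactly the hypothesis on $N(n)$; scaling by homogeneity extends this to all $x$.

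Given this, the two-case analysis is immediate. In the completeness case \cref{thm:brs} provides $x$ with $\enorm{p}{x} = 1$ and $Ax = x$, so $Mx = Bx$ and $\sup_{\enorm{p}{z} = 1} f_N(Mz) \geq f_N(Bx) = (1 - o(1))\,\gamma^*(f_N)\sqrt n$ (using $\cnorm{2}{x} = \sqrt n\,\enorm{2}{x} = \sqrt n$). In the soundness case, for every $z$ with $\enorm{p}{z} = 1$ we have $f_N(Mz) = (1 \pm o(1))\gamma^*(f_N)\cnorm{2}{Az} \leq (1 + o(1))\gamma^*(f_N)\sqrt n\cdot\enorm{p}{2}{A} \leq (1 + o(1))\gamma^*(f_N)\sqrt n\,(\gamma_{p^*} + (\varepsilon')^{2 - p^*})$. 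Hence the ratio between the completeness and soundness optima is at least $\tfrac{1 - o(1)}{(1 + o(1))(\gamma_{p^*} + (\varepsilon')^{2-p^*})}$, which exceeds $1/\gamma_{p^*} - \varepsilon$ once $\varepsilon'$ is small and $n$ large, giving the claimed inapproximability. (The same gap holds for the counting norm after the usual rescaling, exactly as in \cref{hypercontractive:const:factor}.)

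The proof is routine once \cref{thm:brs} and \cref{matrix:deviation:bound} are available; the only point that needs genuine care is ensuring that the Schechtman deviation term $O\bigl(b_N\gamma(T) + t\cdot\mathrm{rad}(T)\bigr)$ is $o(\gamma^*(f_N))$, and this is precisely what the scaling hypothesis $\gamma^*(f_N) = \omega(b_N\sqrt n)$ buys — it is also the step that determines how strong a hardness assumption, $\NP \not\subseteq \BPTIME{N(n)}$, one is forced to make. (Specializing to $f_m = \cnorm{q}{\cdot}$, where $b_m = 1$ and $\gamma^*(f_m) = \Theta(m^{1/q})$, recovers $N(n) = \omega(n^{q/2})$ and hence \cref{hypercontractive:const:factor}.)
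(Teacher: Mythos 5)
Your proof is correct and takes precisely the route the paper intends. The paper gives no separate argument for this theorem---it only says ``Its proof is identical to that of \cref{hypercontractive:const:factor}''---and your write-out is a faithful elaboration of that claim: run \cref{thm:brs}, compose with a random Gaussian $B\in\R^{N\times n}$, and use \cref{matrix:deviation:bound} with $f=f_N$, $b=b_N$, $T$ the $\ell_2$ unit sphere, and $t=\sqrt n$ to conclude $f_N(Bx)=(1\pm o(1))\gamma^*(f_N)\cnorm{2}{x}$, the hypothesis $\gamma^*(f_N)=\omega(b_N\sqrt n)$ being exactly what makes the deviation term negligible. (One cosmetic point: as stated in the paper, \cref{matrix:deviation:bound} has a $+\,t\cdot\mathrm{rad}(T)$ term not multiplied by $b$; your simplification to $O(b_N\sqrt n)$ tacitly uses the original Schechtman/Vershynin form $O\bigl(b\,(\gamma(T)+t\cdot\mathrm{rad}(T))\bigr)$, which is the correct form and the one needed if $b_N$ is allowed to vanish. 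For $b_N=\Omega(1)$---including the $\ell_q$ case with $b_N=1$---the two readings agree, so this does not affect the conclusion.)
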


\subsection{Derandomized Reduction}\label{sec:derandomization}
In this section,  we show how to derandomize the reduction in \cref{hypercontractive:const:factor}
to obtain  NP-hardness when $q\geq 2$ is an even integer and $p > 2$.  Similarly to~\cref{isometry},
given $A \in \R^{n \times n}$ as a hard instance of \onorm{p}{2}, our strategy is to construct a
matrix $B \in \R^{m \times n}$ and output $BA$ as a hard instance of \onorm{p}{q}.

Instead of requiring $B$ to satisfy $\cnorm{q}{Bx} = (1 \pm o(1)) \cnorm{2}{x}$ for all $x \in
\R^n$, we show that $\cnorm{q}{Bx} \leq (1 + o(1)) \cnorm{2}{x}$ for all $x \in \R^n$  and
$\cnorm{q}{Bx} \geq (1 - o(1)) \cnorm{2}{x}$ when every coordinate of $x$ has the same absolute
value. Since~\cref{thm:brs} ensures that $\cnorm{p}{2}{A}$ is achieved by $x = Ax$ for such a
well-spread $x$ in the completeness case,  $BA$ serves as a hard instance for \onorm{p}{q}. 

\ifconf
We use a construction of $q$-wise independent sets due to Alon, Babai and Itai 
\cite{ABI86} to prove the following derandomized version of our results. The details of the
derandomization can be found in the full version of the paper.
\fi

\ifnotconf
 We use the following construction of $q$-wise independent sets to construct such a $B$ deterministically.

\begin{theorem}[Alon, Babai, and Itai~\cite{ABI86}]
\label{kwise:independence}
    For any $k\in\N$, one can compute a set $S$ of vectors in $\{\pm 1\}^n$ of size $O(n^{k/2})$, in time 
    $n^{O(k)}$, such that the vector random variable $Y$ obtained by sampling uniformly from $S$ satisfies 
    that for any $I\in {[n]\choose k}$, the marginal distribution $Y\restrict{I}$ is the uniform distribution over 
    $\{\pm 1\}^{k}$. 
\end{theorem}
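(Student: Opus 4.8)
\noindent Although \cref{kwise:independence} is classical --- and we invoke it only as a black box --- we sketch the standard ``dual‑BCH / trace‑function'' proof for completeness. The plan is to write down the sample space explicitly, bound its size and construction time, and then establish $k$-wise independence by reducing it to the invertibility of a Vandermonde matrix.

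\noindent\emph{The construction.} Set $m:=\ceil{\log_2(n+1)}$ and work in the field $\F:=\F_{2^m}$, with trace $\mathrm{Tr}:\F\to\F_2$, $\mathrm{Tr}(z)=z+z^2+\cdots+z^{2^{m-1}}$; recall that the bilinear form $(a,b)\mapsto\mathrm{Tr}(ab)$ on $\F$ is non‑degenerate, so $\mathrm{Tr}(bz)=0$ for all $b\in\F$ forces $z=0$. Fix distinct nonzero $\alpha_1,\dots,\alpha_n\in\F$ (possible since $2^m-1\ge n$) and put $t:=\ceil{k/2}$. Index the sample space by pairs $(c,\beta)\in\F_2\times\F^{t}$, the coordinates of $\beta$ being labelled by the odd integers $1,3,\dots,2t-1$, and to each pair assign the vector $y_{c,\beta}\in\{\pm1\}^n$ given by
\[
  y_{c,\beta}(i)\ :=\ (-1)^{\,c\ +\ \sum_{j\in\{1,3,\dots,2t-1\}}\mathrm{Tr}(\beta_j\alpha_i^{\,j})},\qquad i\in[n],
\]
and let $S$ be the set of all such vectors. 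There are $2\cdot 2^{mt}$ index pairs, so $|S|\le 2\cdot2^{mt}=O(n^{\ceil{k/2}})$, which is $O(n^{k/2})$ for even $k$ (the only case invoked in \cref{sec:derandomization}) and $n^{O(k)}$ in general; moreover, producing an irreducible polynomial of degree $m$ over $\F_2$, listing the $\alpha_i$, and enumerating all pairs while writing down each vector takes $n^{O(k)}$ time. A routine check (valid once $m$ is large relative to the constant $t$) shows the $y_{c,\beta}$ are pairwise distinct, so ``sampling uniformly from $S$'' coincides with ``sampling $(c,\beta)$ uniformly and outputting $y_{c,\beta}$''.

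\noindent\emph{$k$-wise independence.} Identify $\{\pm1\}\leftrightarrow\{0,1\}$ via $(-1)^b\leftrightarrow b$. It suffices to show that for every $k$-subset $I=\{i_1,\dots,i_k\}\subseteq[n]$ the $\F_2$-linear map
\[
  \Phi:\F_2\times\F^{t}\to\F_2^{k},\qquad \Phi(c,\beta)\ =\ \Big(c+\textstyle\sum_{j}\mathrm{Tr}(\beta_j\alpha_{i_\ell}^{\,j})\Big)_{\ell=1}^{k},
\]
is surjective, because a surjective $\F_2$-linear map pushes the uniform measure on its domain to the uniform measure on $\F_2^{k}$, which is exactly the claim. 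If $\Phi$ were not onto, some nonzero $u\in\F_2^k$ with support $T:=\operatorname{supp}(u)$ would annihilate the image: $|T|\,c+\sum_{j}\mathrm{Tr}\big(\beta_j\sum_{\ell\in T}\alpha_{i_\ell}^{\,j}\big)=0$ identically in $(c,\beta)$. Non‑degeneracy of the trace form forces $|T|$ even and $\sum_{\ell\in T}\alpha_{i_\ell}^{\,j}=0$ for every odd $j\le 2t-1$. Squaring in characteristic $2$ (Frobenius) promotes each of these to $\sum_{\ell\in T}\alpha_{i_\ell}^{2^a j}=0$ for all $a\ge0$, and since every integer in $\{1,\dots,2t-1\}$ is $2^a$ times an odd number $\le 2t-1$, we get $\sum_{\ell\in T}\alpha_{i_\ell}^{j'}=0$ for all $j'\in\{1,\dots,2t-1\}$; combined with the parity condition $\sum_{\ell\in T}\alpha_{i_\ell}^{0}=|T|=0$ in $\F_2$, this yields $\sum_{\ell\in T}\alpha_{i_\ell}^{j'}=0$ for $j'=0,1,\dots,2t-1$, in particular for $j'=0,\dots,|T|-1$ since $|T|\le k\le 2t$. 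But then the $|T|\times|T|$ Vandermonde matrix on the distinct points $\{\alpha_{i_\ell}:\ell\in T\}$ (row exponents $0,\dots,|T|-1$) maps the all‑ones vector to $0$, contradicting its invertibility. Hence $\Phi$ is surjective and $Y$ is $k$-wise independent.

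\noindent The step I expect to be the real obstacle is the chain that produces \emph{enough} vanishing power sums to invoke Vandermonde: the Frobenius squarings upgrade the $t$ odd‑exponent identities to all exponents in $\{1,\dots,2t-1\}$, and --- crucially --- the extra free bit $c$ contributes the exponent‑$0$ identity via parity, without which one would be forced to take $t=\ceil{(k+1)/2}$ and pay an extra factor $\Theta(n)$ in the size of $S$ for even $k$. The remaining ingredients (non‑degeneracy of the trace form, $\F_2$-linearity of $\Phi$, invertibility of the Vandermonde, and the field/coset bookkeeping for distinctness and running time) are routine.
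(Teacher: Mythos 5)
The paper cites this as a known theorem of Alon, Babai and Itai and does not prove it, so there is no in-paper proof to compare against. Your reconstruction is the standard dual-BCH (trace-function) argument and is correct in all its steps: the annihilator computation, the use of nondegeneracy of the trace form to isolate the vanishing power sums, the Frobenius cascade, and the Vandermonde contradiction all check out, and you correctly observe that surjectivity of the $\F_2$-linear map $\Phi$ is exactly what uniformity of the $k$-marginal requires.

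Two small remarks. First, the set-versus-multiset point you flag (distinctness of the $y_{c,\beta}$) is not actually needed for the independence conclusion: since $(c,\beta)\mapsto y_{c,\beta}$ is a group homomorphism, the uniform measure on $\F_2\times\F^t$ pushes forward to the uniform measure on the image $S$, and the marginal on $I$ factors through $\Phi$; distinctness only affects whether $|S|$ equals $2\cdot 2^{mt}$ rather than merely being bounded by it, and the theorem only needs the upper bound. Second, and more substantively, your choice $t=\lceil k/2\rceil$ over-shoots for odd $k$, giving $O(n^{(k+1)/2})$ rather than the stated $O(n^{k/2})$. You flag this and note it is harmless for the paper's application (which only uses even $q$). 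But in fact your own Frobenius cascade already closes the gap: the odd part of the exponent $2t$ equals the odd part of $t$, which is $\le t\le 2t-1$, so the vanishing power sums extend to $j'=0,1,\dots,2t$, i.e.\ the construction with $t$ odd exponents is $(2t+1)$-wise independent, and $t=\lfloor k/2\rfloor$ suffices for every $k$. With that one-line sharpening the bound $O(n^{\lfloor k/2\rfloor})\le O(n^{k/2})$ holds for all $k$, matching the citation exactly.
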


For a matrix $B$ as above, a randomly chosen row behaves similarly to an $n$-dimensional
Rademacher random vector with respect to $\cnorm{q}{\cdot}$. 

\begin{corollary}
\label{derandomized:operator}
    Let $R\in\R^n$ be a vector random variable with i.i.d. Rademacher ($\pm 1$) coordinates. 
    For any even integer $q\geq 2$, there is an $m \times n$ matrix $B$ with $m = O(n^{q/2})$, computable in $n^{O(q)}$ time, 
    such that for all $x\in\R^n$, we have 
    \[
        \cnorm{q}{Bx} = m^{1/q} \cdot \Ex{R}{\mysmalldot{R}{x}^q}^{1/q}.
    \]
\end{corollary}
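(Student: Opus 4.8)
The plan is to take $B$ to be the matrix whose rows are the $q$-wise independent sign vectors supplied by \cref{kwise:independence}. Concretely, apply \cref{kwise:independence} with $k = q$ to obtain, in time $n^{O(q)}$, a set $S = \{s_1, \dots, s_m\} \subseteq \{\pm 1\}^n$ of size $m = O(n^{q/2})$ such that for a uniformly random $Y \in S$ and every $q$-element subset $I \subseteq [n]$, the restriction $Y\restrict{I}$ is uniform on $\{\pm 1\}^q$; in particular the restriction of $Y$ to any subset of size at most $q$ is uniform as well. Let $B \in \R^{m \times n}$ be the matrix with rows $s_1, \dots, s_m$. This $B$ is computable within the claimed time and dimension bounds, so it remains only to verify the norm identity.

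Fix $x \in \R^n$. Since $q$ is even, $\abs{t}^q = t^q$ for all $t \in \R$, so
\[
\cnorm{q}{Bx}^q = \sum_{j=1}^m \abs{\mysmalldot{s_j}{x}}^q = \sum_{j=1}^m \mysmalldot{s_j}{x}^q = m \cdot \Ex{j \sim [m]}{\mysmalldot{s_j}{x}^q}.
\]
I would then expand the $q$-th power, $\mysmalldot{s_j}{x}^q = \sum_{(i_1, \dots, i_q) \in [n]^q} x(i_1)\cdots x(i_q)\, s_j(i_1)\cdots s_j(i_q)$, and take the expectation over a uniform $j$ term by term. Each monomial $s_j(i_1)\cdots s_j(i_q)$ is a function of at most $q$ distinct coordinates of $s_j$, so by $q$-wise independence $\Ex{j \sim [m]}{s_j(i_1)\cdots s_j(i_q)} = \Ex{R}{R(i_1)\cdots R(i_q)}$, where $R \in \R^n$ has i.i.d. Rademacher coordinates. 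Summing back up gives $\Ex{j \sim [m]}{\mysmalldot{s_j}{x}^q} = \Ex{R}{\mysmalldot{R}{x}^q}$, hence $\cnorm{q}{Bx}^q = m \cdot \Ex{R}{\mysmalldot{R}{x}^q}$; taking $q$-th roots (legitimate since $\mysmalldot{R}{x}^q \ge 0$ because $q$ is even) yields exactly the claimed identity.

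There is no real obstacle here; the one point worth spelling out is why $q$-wise independence suffices rather than full independence. Every monomial in the expansion of $\mysmalldot{s_j}{x}^q$ has degree exactly $q$ and therefore involves at most $q$ of the sign variables, which is precisely the regime in which a $q$-wise independent distribution is indistinguishable from the fully independent one as far as this expectation is concerned. Evenness of $q$ enters twice: it lets us drop the absolute values so that $\cnorm{q}{Bx}^q$ is literally the unsigned $q$-th moment, and it guarantees the right-hand side is nonnegative so that the $q$-th root is well defined. If one instead wanted $B$ itself to be a genuine approximate isometry one would simply rescale it by $m^{-1/q}$; as stated the normalization already matches.
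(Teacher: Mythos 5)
Your proof is correct and follows exactly the paper's construction: take the rows of $B$ to be the $q$-wise independent sign vectors from \cref{kwise:independence}, then use evenness of $q$ to drop absolute values and $q$-wise independence to match degree-$q$ moments. The paper states this more tersely, but you have simply spelled out the monomial expansion the paper leaves implicit.
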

\begin{proof}\belowdisplayskip=-12pt
    Let $B$ be a matrix, the set of whose rows is precisely $S$.
    By \cref{kwise:independence}, 
    \begin{align*}
        \cnorm{q}{Bx}^q
        =
        \sum_{Y \in S}{\mysmalldot{Y}{x}^q}
        &= 
        m\cdot \Ex{R}{\mysmalldot{R}{x}^q}
        \mper \tag*{\qedhere}
    \end{align*}
\end{proof}

We use the following two results that will bound $\cnorm{p}{q}{BA}$ for the completeness case and the soundness case respectively.

\begin{theorem}[Stechkin~\cite{Stechkin61}]
\label{stechkin:clt}
    Let $R\in\R^n$ be a vector random variable with i.i.d. Rademacher coordinates. Then for any $q\geq 2$ 
    and any $x\in \R^n$ whose coordinates have the same absolute value, 
    \[
        \Ex{\mysmalldot{R}{x}}^{1/q} = (1-o(1))\cdot \gamma_q \cnorm{2}{x}.
    \]
\end{theorem}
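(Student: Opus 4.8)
The plan is to reduce the statement to a one-dimensional moment estimate for a Rademacher sum and then invoke convergence of moments in the central limit theorem. Since every coordinate of $x$ has the same absolute value $a := \cnorm{2}{x}/\sqrt{n}$, and negating a coordinate of $x$ merely negates the corresponding independent Rademacher variable (whose law is symmetric), the scalar $\mysmalldot{R}{x}$ has the same distribution as $a \cdot S_n$, where $S_n := \epsilon_1 + \dots + \epsilon_n$ with $\epsilon_1,\dots,\epsilon_n$ i.i.d.\ Rademacher. Hence it suffices to show $\bigl( \Ex{|S_n|^q} \bigr)^{1/q} = (1 - o(1))\, \gamma_q \sqrt{n}$; multiplying by $a$ and using $\cnorm{2}{x} = a\sqrt{n}$ then gives the claim. (For non-integer $q$ the left-hand side of the theorem must be read as $\bigl(\Ex{|\mysmalldot{R}{x}|^q}\bigr)^{1/q}$, which is the quantity used in the application.)

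First I would apply the classical central limit theorem: $S_n / \sqrt{n}$ converges in distribution to a standard Gaussian $g$, hence so does $|S_n/\sqrt{n}|^q$ to $|g|^q$. To promote this to convergence of expectations, I would check uniform integrability of $\{ |S_n/\sqrt{n}|^q \}_{n}$, which is immediate from Khintchine's inequality (or a direct sub-Gaussian tail bound for $S_n$): these give $\Ex{ |S_n/\sqrt{n}|^{2q} } = O_q(1)$ uniformly in $n$. Therefore $\Ex{|S_n/\sqrt{n}|^q} \to \Ex{|g|^q} = \gamma_q^q$, so $\Ex{|S_n|^q} = n^{q/2}(\gamma_q^q + o(1))$ and $\bigl(\Ex{|S_n|^q}\bigr)^{1/q} = (1-o(1))\gamma_q\sqrt{n}$, as needed.

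In the regime actually used here, where $q$ is an even integer coming from \cref{derandomized:operator}, one can avoid limits entirely and make the error explicit. Expanding $\Ex{S_n^q} = \sum_{(i_1,\dots,i_q) \in [n]^q} \Ex{\epsilon_{i_1}\cdots\epsilon_{i_q}}$, a tuple contributes $1$ precisely when every index appearing in it appears an even number of times (and contributes $0$ otherwise). The leading contribution comes from the tuples whose $q$ positions split into $q/2$ equal-index pairs with the $q/2$ chosen indices pairwise distinct; there are $\tfrac{q!}{2^{q/2}(q/2)!}\cdot n(n-1)\cdots(n-q/2+1)$ of these, while every other admissible pattern involves at most $q/2-1$ distinct indices and hence contributes $O_q(n^{q/2-1})$ in total. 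Thus $\Ex{S_n^q} = \tfrac{q!}{2^{q/2}(q/2)!}\, n^{q/2}\,(1+O_q(1/n))$, and since $\gamma_q^q = \Ex{g^q} = (q-1)!! = \tfrac{q!}{2^{q/2}(q/2)!}$ for even $q$, we again obtain $\bigl(\Ex{S_n^q}\bigr)^{1/q} = (1-o(1))\gamma_q\sqrt{n}$.

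I do not anticipate a genuine obstacle: the content is simply that a Rademacher sum has the Gaussian $q$-th moment up to a $(1+o(1))$ factor. The only place calling for mild care is the passage from convergence in distribution to convergence of the $q$-th moment in the general-$q$ route, handled by the uniform $L^{2q}$ bound from Khintchine, or equivalently controlling the lower-order terms uniformly in the combinatorial route; both are standard, and it is exactly the value $\gamma_q^q = (q-1)!!$ that makes the leading term match.
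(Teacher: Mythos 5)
The paper does not prove this statement; it is attributed to Stechkin~\cite{Stechkin61} and used as a black box, so there is no argument in the text to compare against. Your proof is correct and self-contained. The opening reduction---absorbing the signs of $x$ into the Rademacher coordinates to get $\mysmalldot{R}{x}$ distributed as $a\cdot S_n$ with $a = \cnorm{2}{x}/\sqrt{n}$---is the right normalization, and you are right that the displayed left-hand side must be read as $\bigl(\Ex{\abs{\mysmalldot{R}{x}}^q}\bigr)^{1/q}$; as printed, $\Ex{\mysmalldot{R}{x}}^{1/q}$ is identically zero. Your general-$q$ route (CLT for $S_n/\sqrt{n}$ plus uniform integrability of $\abs{S_n/\sqrt{n}}^q$ from a uniform $L^{2q}$ Khintchine bound) is complete and covers the full range $q\geq 2$. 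Your even-$q$ combinatorial expansion is also correct: the pair-partition count $\frac{q!}{2^{q/2}(q/2)!}\cdot n(n-1)\cdots(n-q/2+1)$ dominates, lower-order admissible patterns contribute $O_q(n^{q/2-1})$, and the leading coefficient matches $\gamma_q^q = (q-1)!! = \frac{q!}{2^{q/2}(q/2)!}$. Since \cref{derandomized:operator} and \cref{hypercontractive:derandomized:const:factor} only invoke this for even integer $q$, that second route already covers what the paper needs and gives an explicit $O_q(1/n)$ rate rather than a bare $o(1)$.
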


\begin{theorem}[Khintchine inequality~\cite{Haagerup81}]
\label{khintchine:clt}
    Let $R\in\R^n$ be a vector random variable with i.i.d. Rademacher coordinates. Then for any $q\geq 2$ 
    and any $x \in \R^n$, 
    \[
        \Ex{\mysmalldot{R}{x}^q}^{1/q} \leq \gamma_q \cdot \cnorm{2}{x}.
    \]
\end{theorem}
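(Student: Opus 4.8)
The plan is to prove the slightly more informative bound $\Ex{\smallabs{\mysmalldot{R}{x}}^q} \le \gamma_q^q \, \cnorm{2}{x}^q$; this coincides with the stated inequality whenever $q$ is an even integer (the only case invoked in the sequel), and is the natural reading of $\mysmalldot{R}{x}^q$ in general. By positive homogeneity of both sides we may assume $\cnorm{2}{x} = 1$. Writing $g \sim \gaussian{0}{1}$, we have $\gamma_q^q = \Ex{\smallabs{g}^q}$, and the Gaussian combination $\sum_i x_i g_i$ is distributed as $\gaussian{0}{1}$ precisely because $\cnorm{2}{x} = 1$; so the claim is exactly that the $q$-th absolute moment of the Rademacher sum $\sum_i x_i R_i$ is dominated by that of the corresponding Gaussian sum --- informally, Gaussians are extremal among zero-mean sums with a fixed $\ell_2$ budget.

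\medskip
\noindent\emph{The case $q = 2k$ an even integer (all that is needed below).} Here $\smallabs{\mysmalldot{R}{x}}^q = \mysmalldot{R}{x}^{2k}$. Expanding the power and using that $\Ex{R}{R_{i_1}\cdots R_{i_{2k}}}$ equals $1$ if every index occurs an even number of times and $0$ otherwise, one obtains
\[
\Ex{R}{\mysmalldot{R}{x}^{2k}} \;=\; \sum_{\substack{\beta \in \Z_{\ge 0}^n,\ \beta_1 + \cdots + \beta_n = k}} \binom{2k}{2\beta_1, \ldots, 2\beta_n}\, \prod_{i=1}^{n} x_i^{2\beta_i}.
\]
The identical expansion for $g$ multiplies the $\beta$-th monomial by $\prod_i (2\beta_i - 1)!!$ (using $\Ex{g}{g^{2\beta_i}} = (2\beta_i - 1)!!$, with the convention $(-1)!! = 1$), and sums to $\Ex{g}{\mysmalldot{g}{x}^{2k}} = \cnorm{2}{x}^{2k}\,(2k-1)!! = \gamma_{2k}^{2k}$. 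Since every factor $(2\beta_i - 1)!!$ is at least $1$ and every surviving monomial $\prod_i x_i^{2\beta_i}$ is nonnegative, the Rademacher expansion is dominated by the Gaussian one term by term, giving $\Ex{R}{\mysmalldot{R}{x}^{2k}} \le \gamma_{2k}^{2k}$.

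\medskip
\noindent\emph{General real $q \ge 2$.} This is Haagerup's sharp Khintchine inequality \cite{Haagerup81}, which I would obtain by reducing to the Gaussian limit in two steps. First, a \emph{smoothing} step shows the extremal unit vectors are the uniform ones: replacing two coordinates $(x_i, x_j)$ by $\bigl(\sqrt{(x_i^2+x_j^2)/2},\, \sqrt{(x_i^2+x_j^2)/2}\bigr)$ never decreases $\Ex{\smallabs{\mysmalldot{R}{x}}^q}$; equivalently, conditioning on all other coordinates, $\Ex{\smallabs{aR_1 + bR_2 + c}^q}$ is Schur-convex in $(a^2, b^2)$ for fixed $a^2 + b^2$, so it is largest at $a^2 = b^2$, and one iterates over all pairs. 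Second, for the uniform vector $x^{(n)} = n^{-1/2}(1, \ldots, 1)$ the Rademacher sum converges in distribution to $\gaussian{0}{1}$ by the central limit theorem, and $\sup_n \Ex{\smallabs{\mysmalldot{R}{x^{(n)}}}^q} = \Ex{\smallabs{g}^q} = \gamma_q^q$ once uniform integrability is checked --- which follows by dominating $\smallabs{\cdot}^q$ by a higher even moment, itself controlled via the even case above. Combining the two steps yields $\Ex{\smallabs{\mysmalldot{R}{x}}^q} \le \gamma_q^q$ for every unit $x$, and taking $q$-th roots gives the theorem.

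\medskip
The entire difficulty of the general case sits in the first (smoothing) step: verifying that $(a^2, b^2) \mapsto \Ex{\smallabs{aR_1 + bR_2 + c}^q}$ is Schur-convex is a one-variable convexity statement that Haagerup establishes via an integral representation of $t \mapsto \smallabs{t}^q$ together with estimates on the Gamma function, and this is the technical heart of \cite{Haagerup81}. For the present paper it suffices to cite it, since only the even-integer case is ever used.
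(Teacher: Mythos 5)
The paper does not prove this statement; it is cited directly from Haagerup \cite{Haagerup81}, and indeed only the even-integer case is ever invoked (in \cref{hypercontractive:derandomized:const:factor}). Your proof of the even case $q = 2k$ is correct and self-contained: the multinomial expansion, the vanishing of all terms except those with every index appearing an even number of times, and the term-by-term comparison with the Gaussian expansion via $(2\beta_i - 1)!! \geq 1$ all check out, giving $\Ex{R}{\mysmalldot{R}{x}^{2k}} \leq (2k-1)!! \cnorm{2}{x}^{2k} = \gamma_{2k}^{2k} \cnorm{2}{x}^{2k}$. This is a genuinely elementary argument that the paper leaves to the reference.

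One terminology slip in your sketch of the general case: you say the map $(a^2, b^2) \mapsto \Ex{\smallabs{aR_1 + bR_2 + c}^q}$ (for $q \geq 2$) is \emph{Schur-convex} and conclude it is maximized at $a^2 = b^2$. Schur-convex functions are maximized at the extreme points of the majorization order, not at the center; what you want (and what is true for $q \geq 2$, as your own $q=4$ computation $\Ex{\smallabs{aR_1 + bR_2}^4} = (a^2+b^2)^2 + 4a^2b^2$ confirms) is \emph{Schur-concavity}, so that equalizing the two coordinates never decreases the moment. The intended argument --- iterate pairwise averaging to reach the uniform vector, then pass to the Gaussian via CLT plus uniform integrability --- is the right outline of Haagerup's proof, and the error is purely one of naming the property. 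Since this step is only sketched and ultimately deferred to the citation, the overall proposal stands.
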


We finally prove the derandomimzed version of 
\cref{hypercontractive:const:factor} for even $q \geq 2$. 

\fi

\begin{proposition}
\label{hypercontractive:derandomized:const:factor}
    For any $p > 2, \varepsilon>0$, and any even integer $q\geq 2$, it is NP-hard to approximate 
    \onorm{p}{q} within a factor of ~$1/\gamma_{p^*}  - \varepsilon$.
\end{proposition}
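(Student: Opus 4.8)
The plan is to follow the template of \cref{isometry}, but to replace the random Gaussian approximate isometry there by the explicit $q$-wise independent operator of \cref{derandomized:operator}. Given a \threesat formula $\phi$, first apply \cref{thm:brs} with $r = p^*$ and a small parameter $\varepsilon' = \varepsilon'(\varepsilon, p) > 0$ to be fixed at the end, producing the symmetric matrix $A \in \R^{n \times n}$ with $n = |\phi|^{\poly(1/\varepsilon')}$. Since $A$ is symmetric, \cref{onorm:groth} gives $\cnorm{p}{2}{A} = \cnorm{2}{p^*}{A}$; combining this with the conversion $\cnorm{2}{p^*}{A} = n^{1/p^* - 1/2}\cdot\enorm{2}{p^*}{A}$ and the identity $1/p^* - 1/2 = 1/2 - 1/p$, \cref{thm:brs} tells us that $\cnorm{p}{2}{A} \le n^{1/2 - 1/p}(\gamma_{p^*} + (\varepsilon')^{2 - p^*})$ when $\phi$ is unsatisfiable, and that when $\phi$ is satisfiable there is a vector $x \in \R^n$ with $|x(i)| = 1$ for all $i$ and $Ax = x$. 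Next compute, for the even integer $q$, the $m \times n$ matrix $B$ of \cref{derandomized:operator} with $m = O(n^{q/2})$; as $q$ is a fixed constant this takes time $n^{O(q)} = \poly(|\phi|)$, so $\phi \mapsto BA$ is a \emph{deterministic} polynomial-time reduction. Output $BA$ as the \onorm{p}{q} instance.

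For the soundness case ($\phi$ unsatisfiable), \cref{derandomized:operator} together with the Khintchine inequality (\cref{khintchine:clt}) gives, for every $x' \in \R^n$, $\cnorm{q}{Bx'} = m^{1/q}\Ex{R}{\mysmalldot{R}{x'}^q}^{1/q} \le m^{1/q}\gamma_q\cnorm{2}{x'}$; in particular $\cnorm{2}{q}{B} \le m^{1/q}\gamma_q$. Hence by \cref{composition:soundness},
\[
\cnorm{p}{q}{BA} \ \le\ \cnorm{2}{q}{B}\cdot\cnorm{p}{2}{A} \ \le\ m^{1/q}\gamma_q\, n^{1/2 - 1/p}\bigl(\gamma_{p^*} + (\varepsilon')^{2-p^*}\bigr).
\]
For the completeness case ($\phi$ satisfiable), take $x$ as above: then $Ax = x$, all coordinates of $x$ have equal absolute value, $\cnorm{2}{x} = \sqrt{n}$, and $\cnorm{p}{x} = n^{1/p}$. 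By \cref{derandomized:operator} and Stechkin's theorem (\cref{stechkin:clt}),
\[
\cnorm{q}{BAx} \ =\ \cnorm{q}{Bx} \ =\ m^{1/q}\Ex{R}{\mysmalldot{R}{x}^q}^{1/q} \ =\ (1 - o(1))\, m^{1/q}\gamma_q\cnorm{2}{x},
\]
so $\cnorm{p}{q}{BA} \ge \cnorm{q}{BAx}/\cnorm{p}{x} = (1 - o(1))\, m^{1/q}\gamma_q\, n^{1/2 - 1/p}$. Dividing the two estimates, the common factor $m^{1/q}\gamma_q\, n^{1/2 - 1/p}$ cancels, so the completeness-to-soundness gap is $(1-o(1))/(\gamma_{p^*} + (\varepsilon')^{2-p^*})$. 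Since $p^* < 2$ forces $\gamma_{p^*} < \gamma_2 = 1$ and $2 - p^* > 0$, choosing $\varepsilon'$ sufficiently small and using that $n \to \infty$ with $|\phi|$ (so the $o(1)$ term vanishes) makes this gap exceed $1/\gamma_{p^*} - \varepsilon$, which proves the claimed NP-hardness for \onorm{p}{q} (and, via \cref{onorm:groth}, for \onorm{q^*}{p^*}).

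The difficulty here is less a single hard step than the coordination of normalizations across the two cases: the powers of $n$ coming from $\cnorm{p}{2}{A}$ in soundness and from $\cnorm{2}{x}/\cnorm{p}{x}$ in completeness, together with the $m^{1/q}$ factor introduced by $B$, must appear identically on both sides so that the ratio is exactly $1/\gamma_{p^*}$ in the limit. The one genuinely asymmetric point — and the reason \cref{thm:brs} was engineered to provide the extra structural guarantee $|x(i)| = 1$ — is that completeness needs the \emph{near-equality} $\Ex{R}{\mysmalldot{R}{x}^q}^{1/q} \approx \gamma_q\cnorm{2}{x}$, which Stechkin's theorem supplies only for well-spread $x$, whereas soundness needs only the universal upper bound of Khintchine. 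Thus, in contrast to the randomized \cref{hypercontractive:const:factor}, we neither need nor (with only $q$-wise independence) could obtain a two-sided approximate isometry valid on all of $\R^n$; it suffices that $B$ preserves the norm of the planted well-spread solution while not expanding the norm of anything.
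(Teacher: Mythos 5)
Your proof is correct and follows exactly the same route as the paper's: apply \cref{thm:brs} with $r = p^*$, compose with the $q$-wise independent matrix $B$ of \cref{derandomized:operator}, then use \cref{stechkin:clt} for completeness (where the well-spread $x$ with $Ax = x$ is essential) and \cref{khintchine:clt} together with \cref{composition:soundness} for soundness. Your bookkeeping is actually somewhat more careful than the paper's: you track the $n^{1/2 - 1/p}$ normalization common to both bounds and choose $\varepsilon'$ explicitly so the final gap exceeds $1/\gamma_{p^*} - \varepsilon$, whereas the paper's soundness display drops this common factor and appears to contain a sign typo ($\gamma_{p^*} - \varepsilon$ in place of $\gamma_{p^*} + \varepsilon^{2-p^*}$).
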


\ifnotconf
\begin{proof}
    Apply \cref{thm:brs} with $r_1 \leftarrow p^*$ and $\varepsilon \leftarrow \varepsilon$.  Given an instance $\phi$ of \threesat, 
    \cref{thm:brs} produces a symmetric matrix $A \in \R^{n \times n}$ in polynomial time as a hard instance of \onorm{p}{2}.  Our instance for \onorm{p}{q} is $BA$ where $B$ is 
    the $m \times n$ matrix given by \cref{derandomized:operator} with $m = O(n^{q/2})$. 
    \begin{itemize}
        \item (Completeness) If $\phi$ is satisfiable, there exists a vector $x\in\{\pm \frac{ 1}{\sqrt{n}}\}^n$ such that $Ax = x$. 
        So we have $\cnorm{q}{BAx} = \cnorm{q}{Bx} = (1-o(1))\cdot m^{1/q} \cdot \gamma_q$, where the last 
        equality uses \cref{derandomized:operator} and \cref{stechkin:clt}. Thus $\cnorm{p}{q}{BA}\geq 
        (1-o(1))\cdot m^{1/q} \cdot \gamma_q$. 
        
        \item (Soundness) If $\phi$ is not satisfiable, then for any $x$ with $\cnorm{p}{x}=1$, 
            \begin{align*}
                &\cnorm{q}{BAx} = m^{1/q} \cdot \Ex{R}{\mysmalldot{R}{Ax}^q}^{1/q} \leq m^{1/q} \cdot \gamma_q\cdot \cnorm{2}{Ax} 
                 \\
                \leq~ & m^{1/q} \cdot  \gamma_q \cdot \cnorm{p}{2}{A} 
                \leq~ m^{1/q} \cdot  \gamma_q \cdot (\gamma_{p^*}-\varepsilon)
            \end{align*}
            where the first inequality is a direct application of~\cref{khintchine:clt}. 
            \hfill\qedhere
   \end{itemize}
\end{proof}
\fi

\subsection{Hypercontractive Norms Productivize}\label{sec:productivize}
We will next amplify our hardness results using the fact that hypercontractive norms
productivize under the  natural operation of Kronecker or tensor product. 
Bhaskara and Vijayraghavan~\cite{BV11} showed this for the special case of $p=q$ and the Harrow  and 
Montanaro~\cite{HM13} showed this for \onorm{2}{4} (via parallel repetition for $\mathrm{QMA(2)}$). 
In this section we prove this claim whenever $p\leq q$. 

\begin{theorem}
\label{productivization}
    Let $A$ and $B$ be $m_1\times n_1$ and $m_2\times n_2$ matrices respectively. Then for any $1\leq p\leq q < \infty$, 
    $\cnorm{p}{q}{A\otimes B} \leq \cnorm{p}{q}{A}\cdot \cnorm{p}{q}{B}$. 
\end{theorem}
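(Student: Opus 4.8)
The plan is to recast the tensor action as a two-sided matrix multiplication and then bound it in two stages --- apply $B$ ``along the rows'' and $A$ ``along the columns'' --- with the two stages linked by Minkowski's inequality for mixed $\ell_p/\ell_q$ norms, which is the only place the hypothesis $p\le q$ is used. Throughout, for a matrix $M$ I write $M_{i,\cdot}$ and $M_{\cdot,j}$ for its $i$-th row and $j$-th column, and $\cnorm{r}{\cdot}$ applied to a matrix for its entrywise $\ell_r$-norm.

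First I would set up the matrix picture. Identify a vector $z\in\R^{n_1 n_2}$ with a matrix $Z\in\R^{n_1\times n_2}$ via $z_{(i,j)} = Z_{ij}$, so that $\cnorm{p}{z} = \left(\sum_{i\in[n_1]}\sum_{j\in[n_2]}|Z_{ij}|^{p}\right)^{1/p}$ equals $\cnorm{p}{Z}$. Since $(A\otimes B)(x\otimes y) = (Ax)\otimes(By)$ corresponds under this identification to the rank-one matrix $(Ax)(By)^T = A(xy^T)B^T$, by linearity $A\otimes B$ acts on a general $Z$ by $Z\mapsto AZB^T$. (Depending on whether the pair $(i,j)$ is ordered row- or column-major this realizes $A\otimes B$ or $B\otimes A$, but the two are conjugate by a coordinate permutation, which preserves every $\ell_r$-norm and hence the $p\to q$ norm, so the choice is harmless.) Thus it suffices to prove $\cnorm{q}{AZB^T}\le \cnorm{p}{q}{A}\cnorm{p}{q}{B}\cnorm{p}{Z}$ for every $Z$.

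Next I would introduce the intermediate matrices $W := ZB^T\in\R^{n_1\times m_2}$ and $Y := AW = AZB^T$. Row $i$ of $W$ is $B$ applied to row $i$ of $Z$, so $\cnorm{q}{W_{i,\cdot}} = \cnorm{q}{BZ_{i,\cdot}}\le\cnorm{p}{q}{B}\cnorm{p}{Z_{i,\cdot}}$; raising to the $p$-th power and summing over $i$ gives
\[
\sum_{i} \cnorm{q}{W_{i,\cdot}}^{p} \;\le\; \cnorm{p}{q}{B}^{p}\sum_i \cnorm{p}{Z_{i,\cdot}}^{p} \;=\; \cnorm{p}{q}{B}^{p}\,\cnorm{p}{Z}^{p}.
\]
Similarly column $j$ of $Y$ is $A$ applied to column $j$ of $W$, so $\cnorm{q}{Y_{\cdot,j}}\le\cnorm{p}{q}{A}\cnorm{p}{W_{\cdot,j}}$, whence
\[
\cnorm{q}{Y}^{q} \;=\; \sum_{j}\cnorm{q}{Y_{\cdot,j}}^{q} \;\le\; \cnorm{p}{q}{A}^{q}\sum_{j}\cnorm{p}{W_{\cdot,j}}^{q} \;=\; \cnorm{p}{q}{A}^{q}\sum_{j}\left(\sum_i |W_{ij}|^{p}\right)^{q/p}.
\]
So everything reduces to the single inequality $\sum_{j}\left(\sum_i |W_{ij}|^{p}\right)^{q/p}\le\cnorm{p}{q}{B}^{q}\cnorm{p}{Z}^{q}$.

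The crux, and the step I expect to be the only subtle one, is Minkowski's inequality for mixed norms, valid precisely because $p\le q$ (the outer exponent, the larger one, must migrate to the inner position):
\[
\left(\sum_{j}\left(\sum_i |W_{ij}|^{p}\right)^{q/p}\right)^{1/q} \;\le\; \left(\sum_{i}\left(\sum_j |W_{ij}|^{q}\right)^{p/q}\right)^{1/p} \;=\; \left(\sum_i \cnorm{q}{W_{i,\cdot}}^{p}\right)^{1/p}.
\]
Raising to the $q$-th power and substituting the row bound $\sum_i \cnorm{q}{W_{i,\cdot}}^{p}\le\cnorm{p}{q}{B}^{p}\cnorm{p}{Z}^{p}$ yields exactly $\sum_{j}\left(\sum_i |W_{ij}|^{p}\right)^{q/p}\le\cnorm{p}{q}{B}^{q}\cnorm{p}{Z}^{q}$, and feeding this into the bound on $\cnorm{q}{Y}^{q}$ gives $\cnorm{q}{AZB^T}\le\cnorm{p}{q}{A}\cnorm{p}{q}{B}\cnorm{p}{Z}$, as required; the boundary exponents $p=1$ (and $q=\infty$, if one wanted it) degenerate harmlessly. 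I would also remark that the matching lower bound $\cnorm{p}{q}{A\otimes B}\ge\cnorm{p}{q}{A}\cnorm{p}{q}{B}$ is immediate by testing $A\otimes B$ on $x^{\ast}\otimes y^{\ast}$ for norm-attaining $x^{\ast},y^{\ast}$, so in fact the two sides are equal, which is what ``productivize'' refers to.
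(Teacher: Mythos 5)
Your proof is correct and is, at its core, the same argument as the paper's: both first peel off $A$ column-by-column to reduce to bounding $\sum_j \cnorm{p}{W_{\cdot,j}}^q$ where $W=ZB^T$, and both then use the convexity of the $\ell_{q/p}$ norm (valid exactly when $q\ge p$) to swap the roles of the inner and outer exponents so that $\cnorm{p}{q}{B}$ can be applied row-by-row. The difference is purely in how that convexity step is packaged. The paper normalizes $\cnorm{p}{z}=1$, writes $|W_{kj}|^p = \lambda_k v_k(j)$ with $\lambda_k = \cnorm{p}{z_k}^p$ a probability distribution, applies Jensen to the convex function $\cnorm{q/p}{\cdot}^{q/p}$, and then takes a max over $k$. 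You instead invoke Minkowski's inequality for mixed $\ell^q_j(\ell^p_i)$ versus $\ell^p_i(\ell^q_j)$ norms directly, which is the subadditivity (rather than Jensen-convexity) face of the same underlying fact that $\norm{\cdot}{q/p}$ is a norm for $q/p\ge 1$, and then bounds each row term $\cnorm{q}{W_{i,\cdot}}\le\cnorm{p}{q}{B}\cnorm{p}{Z_{i,\cdot}}$. Your packaging is arguably the cleaner and more symmetric of the two — it avoids the normalization, the $\lambda_k$/$v_k$ bookkeeping, and the max — and it correctly isolates the mixed-norm Minkowski inequality as the single place where $p\le q$ is used. The matrix-picture setup ($A\otimes B$ acting as $Z\mapsto AZB^T$ up to a harmless coordinate permutation) is sound, and your closing remark that the reverse inequality holds by testing on $x^\ast\otimes y^\ast$, so that the bound is actually an equality, is also correct.
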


\begin{proof}
    We will begin with some notation. Let $a_{i},b_{j}$ respectively denote the $i$-th and $j$-th rows of $A$ 
    and $B$. Consider any $z\in \R^{[n_1]\times [n_2]}$ satisfying $\cnorm{p}{z}=1$. 
    For $k\in [n_1]$, let $z_k\in \R^{n_2}$ denote the vector given by $z_k(\ell) := z(k,\ell)$. 
    For $j\in [m_2]$, let $\barz_j\in \R^{n_1}$ denote the vector given by $\barz_j(k) := 
    \mysmalldot{b_{j}}{z_{k}}$. 
    Finally, for $k\in [n_1]$, let $\lambda_k := \cnorm{p}{z_k}^{p}$ and let $v_k\in \R^{m_2}$ be the vector given 
    by $v_k(j) := |\barz_j(k)|^{p}/\lambda_k$. 
    
    We begin by 'peeling off' $A$:
    \begin{align*}
        \cnorm{q}{(A\otimes B)z}^{q} 
        ~=~ 
        \sum_{i,j} |\mysmalldot{a_i\otimes b_j}{z}|^{q}  
        &~=~
        \sum_{j} \sum_{i} |\mysmalldot{a_i}{\barz_j}|^{q}  \\
        &~=~
        \sum_{j} \cnorm{q}{A\barz_j}^{q}  \\ 
        &~\leq~
        \cnorm{p}{q}{A}^{q}\cdot \sum_{j} \cnorm{p}{\barz_j}^{q}  \\
        &~=~
        \cnorm{p}{q}{A}^{q}\cdot \sum_{j} \inparen{\cnorm{p}{\barz_j}^{p}}^{q/p} 
    \end{align*}
    In the special case of $p=q$, the proof ends here since the expression is a sum of terms of the 
    form $\cnorm{p}{By}^p$ and can thus be upper bounded term-wise by 
    $\cnorm{p}{p}{B}^p\cdot \cnorm{p}{z_k}^p$ which sums to $\cnorm{q}{p}{B}^p$. To handle the case of 
    $q>p$, we will use a convexity argument: 
    \begin{align*}
        &\cnorm{p}{q}{A}^{q}\cdot \sum_{j} \inparen{\cnorm{p}{\barz_j}^{p}}^{q/p}  \\
        ~=~
        &\cnorm{p}{q}{A}^{q}\cdot \sum_{j} \inparen{\sum_k |\barz_j(k)|^{p}}^{q/p}  \\
        ~=~
        &\cnorm{p}{q}{A}^{q}\cdot \cnorm{q/p}{\sum_k \lambda_k\cdot v_k}^{q/p}  
&&(|\barz_j(k)|^{p} = \lambda_k v_k(j)) \\
        ~\leq~
        &\cnorm{p}{q}{A}^{q}\cdot \sum_k \lambda_k \cdot \cnorm{q/p}{v_k}^{q/p}  
        &&(\text{by convexity of  }\norm{q/p}{\cdot}^{q/p} \text{ when } q\geq p) \\
        ~\leq~
        &\cnorm{p}{q}{A}^{q}\cdot \max_k \cnorm{q/p}{v_k}^{q/p}  
    \end{align*}
    
    It remains to show that $\cnorm{q/p}{v_k}^{q/p}$ is precisely $\cnorm{q}{Bz_k}^q/\cnorm{p}{z_k}^q$. 
    \begin{align*}
        \cnorm{p}{q}{A}^{q}\cdot \max_k \cnorm{q/p}{v_k}^{q/p} 
        ~=~
        &\cnorm{p}{q}{A}^{q}\cdot \max_k \frac{1}{\cnorm{p}{z_k}^q}\cdot \sum_j |\barz_j(k)|^q  \\
        ~=~
        &\cnorm{p}{q}{A}^{q}\cdot \max_k \frac{1}{\cnorm{p}{z_k}^q}\cdot \sum_j |\mysmalldot{b_j}{z_k}|^q  \\
        ~=~
        &\cnorm{p}{q}{A}^{q}\cdot \max_k \frac{\cnorm{q}{Bz_k}^q}{\cnorm{p}{z_k}^q}  \\
        ~\leq~
        &\cnorm{p}{q}{A}^{q}\cdot \cnorm{p}{q}{B}^q   
    \end{align*}
    Thus we have established $\cnorm{p}{q}{A\otimes B} \leq \cnorm{p}{q}{A}\cdot \cnorm{p}{q}{B}$. 
    Lastly, the claim follows by observing that the statement is equivalent to the statement obtained by 
    replacing the counting norms with expectation norms. 
\end{proof}

We finally establish super constant NP-Hardness of approximating \onorm{p}{q}, proving \cref{thm:main_hype}.
\begin{proofof}{\cref{thm:main_hype}}
Fix $2 < p \leq q < \infty$. 
\cref{hypercontractive:const:factor} states that there exists $c = c(p, q) > 1$ such that 
any polynomial time algorithm approximating the \onorm{p}{q} of an $n \times n$-matrix $A$ within a factor of $c$ will imply $\NP \subseteq \BPP$. 
Using \cref{productivization}, for any integer $k \in \N$ and $N = n^k$, any polynomial time algorithm approximating the \onorm{p}{q} of an $N \times N$-matrix $A^{\otimes k}$ within a factor of $c^k$ 
implies that $\NP$ admits a randomized algorithm running in time $\poly(N) = n^{O(k)}$. 
Under $\NP \not\subseteq \BPP$, any constant factor approximation algorithm is ruled out by setting $k$ to be a sufficiently large constant. 
For any $\varepsilon > 0$, setting $k = \log^{1/\varepsilon} n$ rules out an approximation factor of 
$c^k = 2^{O(\log^{1 - \varepsilon} N)}$ unless $\NP \subseteq \BPTIME{2^{\log^{O(1)} n}}$.

By duality, the same statements hold for $1 < p \leq q < 2$. 
When $2 < p \leq q$ and $q$ is an even integer, all reductions become deterministic due to \cref{hypercontractive:derandomized:const:factor}.
\end{proofof}


\subsection{A Simple Proof of Hardness for the Case $2 \notin [q,p]$}
\label{sec:reverse}
In this section, we show how to prove an almost-polynomial factor hardness for approximating
\onorm{p}{q} in the non-hypercontractive case when $2 > p \geq q$ (and the case $p \geq q > 2$
follows by duality). 
This result is already known from the work of Bhaskara and Vijayaraghavan \cite{BV11}. We
show how to obtain a more modular proof, composing our previous results with a simple embedding
argument. 
However, while the reduction in \cite{BV11} was deterministic, we will only give a randomized
reduction below.

As in \cite{BV11}, we start with a strong hardness for the \onorm{p}{p}, obtained in
\cref{thm:main_hype}. While the reduction in \cite{BV11} relied on special properties of the
instance for \onorm{\ell_p}{\ell_p}, we can simply use the following embedding result of Schechtman
\cite{Schechtman87} (phrased in a way convenient for our application). 
\begin{theorem}[Schechtman \cite{Schechtman87}, Theorem 5]
Let $q<p<2$ and $\eps>0$. Then, there exists a polynomial time samplable distribution $\calD$ 
on random matrices in $\R^{m \times
  n}$ with $m = \Omega_{\eps}(n^{3})$,  such that with probability $1-o(1)$, we have for every
$x \in R^n$,  $\norm{\ell_q}{Bx} ~=~ (1 \pm \eps) \cdot \norm{\ell_p}{x}$.
\label{thm:schechtman}
\end{theorem}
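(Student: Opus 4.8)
The plan is to prove this classical almost-isometric embedding $\ell_p^n \hookrightarrow \ell_q^m$ ($q < p < 2$) via $p$-stable random variables together with an empirical discretization. Recall that for $0 < p < 2$ there is a symmetric $p$-stable variable $\theta = \theta^{(p)}$ with $\Ex{e^{\mathrm{i}t\theta}} = e^{-|t|^p}$, whose defining property is that for i.i.d.\ copies $\theta_1,\dots,\theta_n$ and any $a\in\R^n$, $\sum_i a_i\theta_i \stackrel{d}{=} \cnorm{p}{a}\cdot\theta$; since $q<p$ the constant $c_{p,q} := \Ex{|\theta|^q}$ is finite, whereas $\Ex{|\theta|^p}=\infty$ and $\Pr{|\theta|>t}\le C_p t^{-p}$. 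I would take $\calD$ to be the law of the $m\times n$ matrix $B$ with i.i.d.\ entries $B_{ji} = (m\,c_{p,q})^{-1/q}\,\theta_{ji}$, the $\theta_{ji}$ i.i.d.\ $p$-stable (rounded to $\poly(n)$ bits; see below). For a fixed $x$ the variables $Y_j := |\sum_i \theta_{ji}x_i|^q / c_{p,q}$ are i.i.d.\ with mean $\cnorm{p}{x}^q$ by the stability identity, so $\Ex{\cnorm{q}{Bx}^q} = \cnorm{p}{x}^q$, and everything reduces to promoting this to a uniform $(1\pm\eps)$ statement. Efficient sampling is handled by the Chambers--Mallows--Stuck representation: if $U$ is uniform on $(-\pi/2,\pi/2)$ and $W\sim\mathrm{Exp}(1)$ is independent, an explicit elementary function of $(U,W)$ is $p$-stable, so each entry is produced in $\poly(\log n, 1/\eps)$ time.

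For the uniform estimate, I would first prove single-point concentration: fix a truncation level $M = M(p,q,\eps)$ and split $Y_j = Y_j\mathbf{1}[Y_j\le M] + Y_j\mathbf{1}[Y_j > M]$. The tail contribution has expectation $O(M^{1-p/q})$, which is below $\eps'\cnorm{p}{x}^q$ once $M$ is large (this is where $p/q>1$ is essential), while the truncated part lies in $[0,M]$, so Bernstein's inequality gives $\Pr{\bigl|\tfrac1m\sum_j Y_j\mathbf{1}[Y_j\le M] - \Ex{Y_1\mathbf{1}[Y_1\le M]}\bigr| > \eps'\cnorm{p}{x}^q} \le \exp(-\Omega_\eps(m))$, and together with a bound on the mass in the heavy tail this yields $\cnorm{q}{Bx}^q = (1\pm\eps')\cnorm{p}{x}^q$ with failure probability $\exp(-\Omega_\eps(m))$. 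I would then take a $\delta$-net $\mathcal{N}$ of the unit sphere of $\ell_p^n$ with $|\mathcal{N}| \le (3/\delta)^n$, union-bound over it, and extend from $\mathcal{N}$ to all of $\R^n$ by homogeneity and the Lipschitz bound $|\cnorm{q}{Bx}-\cnorm{q}{By}| \le \cnorm{p}{q}{B}\cdot\cnorm{p}{x-y}$, where $\cnorm{p}{q}{B} = 1+o(1)$ on the good event (itself controlled by the same tail estimates). Choosing $\delta = \Theta(\eps)$, the union bound succeeds once $m$ is a sufficiently large power of $n$ times a constant depending on $\eps$; a robust but unoptimized execution gives the stated $m = \Omega_\eps(n^3)$, and rounding each $\theta_{ji}$ to $\poly(n)$ bits perturbs $\cnorm{q}{Bx}$ by at most $\eps/2$ uniformly in $x$ (again via the operator-norm bound), so the rounded $\calD$ still works.

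The main obstacle is exactly the uniform control of $\cnorm{q}{Bx}^q$ over all $x$ --- equivalently, controlling $\cnorm{p}{q}{B}$ and the upward deviations --- because the summands $|\mysmalldot{\theta_j}{x}|^q$ are heavy-tailed with tail exponent $p/q$ that may be arbitrarily close to $1$ when $q$ is close to $p$; sub-Gaussian tools such as the matrix deviation inequality used earlier for the Gaussian case do not apply, and the truncation level $M$ must be chosen to trade off the bias it introduces against the concentration it buys. It is this trade-off, rather than any subtlety of the net argument, that forces $m$ to grow polynomially (here cubically) in $n$ rather than linearly.
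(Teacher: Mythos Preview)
Your overall architecture---$p$-stable entries, empirical averaging, Chambers--Mallows--Stuck samplability, and a net argument---matches the paper's. The gap is in the concentration step. You claim that for a fixed $x$ with $\cnorm{p}{x}=1$, the truncation $Y_j = Y_j\mathbf{1}[Y_j\le M] + Y_j\mathbf{1}[Y_j>M]$ together with Bernstein on the bounded part yields $\cnorm{q}{Bx}^q = (1\pm\eps')$ with failure probability $\exp(-\Omega_\eps(m))$. This is false for the upper tail. The summands $Y_j = |\mysmalldot{\theta_j}{x}|^q/c_{p,q}$ are regularly varying with tail exponent $\alpha = p/q$, and by the one-big-jump principle for such sums,
\[
\Pr{\tfrac{1}{m}\sum_j Y_j > 1+\eps'} \;\sim\; m\cdot\Pr{Y_1 > \eps' m} \;\asymp\; m^{\,1-p/q}\,,
\]
which decays only \emph{polynomially} in $m$. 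Your ``bound on the mass in the heavy tail'' is an expectation bound and does not upgrade to an event of exponentially small probability; no amount of bias--variance tradeoff in the choice of $M$ repairs this, because the untruncated part $\sum_j Y_j\mathbf{1}[Y_j>M]$ is itself heavy-tailed. Consequently the union bound over a net of size $(3/\delta)^n$ forces $m^{p/q-1}\gtrsim \exp(\Omega(n))$, i.e.\ $m$ exponential in $n$, not polynomial. The appeal to $\cnorm{p}{q}{B}=1+o(1)$ ``on the good event'' is then circular: that bound is exactly the uniform upper estimate you are trying to establish.

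The paper avoids this by truncating the \emph{entries} rather than the summands: it replaces the $p$-stable vector $Z$ by $Y := Z\cdot\mathbf{1}[\forall i\,|Z_i|\le B]$ with $B = O_{\eps,p,q}(n^{1/p})$, shows via H\"older that the resulting linear map $a\mapsto\mysmalldot{a}{Y}$ still has $\norm{L_q}{\cdot}\approx\cnorm{p}{\cdot}$, and crucially obtains a \emph{uniform} bound $\norm{L_\infty}{\mysmalldot{a}{Y}}\le \cnorm{1}{a}\cdot B = O(n)\cdot\cnorm{p}{a}$ that is independent of which $a$ is being tested. With this $L_\infty$ control in hand, the empirical $L_q$ norm of every function in the image subspace is a sum of variables bounded by $O(n^q)$, Bernstein gives genuinely exponential concentration $\exp(-\Omega(m/n^q))$, and the net argument goes through with $m = O_\eps(n\cdot n^q)\le O_\eps(n^3)$. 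The essential idea you are missing is that the truncation must be done once on the randomness (so that the resulting operator is deterministically bounded), not separately for each test vector $x$.
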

In fact the distribution $\calD$ is based on $p$-stable distributions. 
While the theorem in \cite{Schechtman87} does not mention the high probability bound or
samplability, it is easy to modify the proof to obtain there properties. We provide a proof sketch
below for completeness. We note that Schechtman obtains a stronger bound of $O(n^{1+p/q})$ on the
dimension $m$ of the $\ell_q$ space, which requires a more sophisticated argument using ``Lewis
weights''. However, we only state weaker $O(n^3)$ bound above, which suffices for our purposes and
is easier to convert to a samplable distribution. 

We first prove the following hardness result for approximating \onorm{p}{q} in the
reverse-hypercontractive case, using \cref{thm:schechtman}.
\begin{theorem}
For any $p, q$ such that $1 < q \leq p < 2$ or $2 < q \leq p < \infty$ and $\epsilon > 0$,  there
is no polynomial time algorithm that approximates the \onorm{p}{q} of an $n \times n$ matrix within
a factor $2^{\log^{1 - \epsilon} n}$ unless $\NP \subseteq \BPTIME{2^{(\log n)^{O(1)}}}$. 
\label{thm:reverse}
\end{theorem}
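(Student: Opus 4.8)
\textbf{Proof plan for \cref{thm:reverse}.}
The plan is to reduce from the $\onorm{p}{p}$ hardness established in \cref{thm:main_hype} via the approximate embedding of $\ell_p$ into $\ell_q$ provided by \cref{thm:schechtman}. Concretely, fix $1 < q \leq p < 2$ (the case $2 < q \leq p$ follows by the duality $\cnorm{p}{q}{A} = \cnorm{q^*}{p^*}{A^T}$, noting $p^* < q^* < 2$, so it reduces to the first case). Given a hard instance $A \in \R^{n \times n}$ for $\onorm{p}{p}$ — on which it is hard to distinguish $\cnorm{p}{p}{A} \geq C$ from $\cnorm{p}{p}{A} \leq c$ for a super-constant gap $C/c = 2^{\log^{1-\eps'} n}$ — I would sample $B \in \R^{m \times n}$ from the distribution $\calD$ of \cref{thm:schechtman} with $\eps$ a small constant and $m = \Theta_\eps(n^3)$, and output $BA \in \R^{m \times n}$ as the instance for $\onorm{p}{q}$. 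By \cref{thm:schechtman}, with probability $1-o(1)$ we have $\cnorm{q}{BAx} = (1 \pm \eps)\cnorm{p}{Ax}$ for every $x \in \R^n$ simultaneously, and hence
\[
\cnorm{p}{q}{BA} = \max_{x \neq 0} \frac{\cnorm{q}{BAx}}{\cnorm{p}{x}} = (1 \pm \eps) \max_{x \neq 0} \frac{\cnorm{p}{Ax}}{\cnorm{p}{x}} = (1 \pm \eps)\, \cnorm{p}{p}{A}.
\]
Thus a completeness value $\geq C$ maps to $\geq (1-\eps)C$ and a soundness value $\leq c$ maps to $\leq (1+\eps)c$, so the multiplicative gap is preserved up to the constant factor $(1-\eps)/(1+\eps)$, which is absorbed into the super-constant gap. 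Since $m = \poly(n)$, the instance size is polynomial; the reduction is randomized but succeeds with probability $1-o(1)$, which can be boosted or simply tolerated since it composes with the $\BPTIME$ assumption. Tracking the parameters: starting from $\onorm{p}{p}$ hardness of factor $2^{\log^{1-\eps'} n}$ under $\NP \not\subseteq \BPTIME{2^{(\log n)^{O(1)}}}$, and since $m = n^{\Theta(1)}$ so $\log m = \Theta(\log n)$, we obtain for the $m$-dimensional instance a hardness factor of $2^{\log^{1-\eps} m}$ for any $\eps > 0$ (choosing $\eps'$ appropriately) under the same assumption.

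The main obstacle is justifying the strengthening of Schechtman's theorem to the form stated in \cref{thm:schechtman} — namely that the embedding holds for \emph{all} $x$ simultaneously with high probability and that the distribution $\calD$ is polynomial-time samplable. The original statement in \cite{Schechtman87} constructs $B$ using $p$-stable random variables (rows are i.i.d.\ scaled $p$-stable vectors), and bounds the distortion in expectation or with constant probability over a net. To get the ``for all $x$'' guarantee one uses a standard net argument on the unit sphere of $\ell_p^n$ combined with a union bound, using that $p$-stable tails are controlled enough to make the failure probability on each net point exponentially small in $n^2$ (which is why the dimension blows up to $n^3$ rather than the optimal $n^{1+p/q}$); the cruder dimension is precisely what makes the net/union-bound argument go through cleanly and keeps $\calD$ samplable, since sampling a $p$-stable random variable to polynomial precision is routine (via the Chambers--Mallows--Stuck representation). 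I would present this as a proof sketch, since the paper explicitly says ``We provide a proof sketch below for completeness'' — the detailed verification of the net argument and the samplability of truncated $p$-stable distributions is the one genuinely technical piece, and everything downstream (the composition with \cref{thm:main_hype}, the parameter bookkeeping, the duality reduction for $p > 2$) is routine.
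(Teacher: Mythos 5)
Your proof plan follows exactly the paper's route: reduce from the $\|\cdot\|_{p\to p}$ hardness of \cref{thm:main_hype} by composing with a random matrix $B$ sampled from the distribution of \cref{thm:schechtman}, which is an approximate isometry $\ell_p^N\to\ell_q^m$; then $\|BA\|_{p\to q}\approx\|A\|_{p\to p}$, the gap is preserved up to $(1\pm\eps)$, the dimension stays polynomial, and the $2<q\leq p$ case follows by duality. This is the paper's argument essentially verbatim.

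One inaccuracy worth flagging in your sketch of the Schechtman embedding itself: you write that ``$p$-stable tails are controlled enough to make the failure probability on each net point exponentially small.'' For $p<2$, $p$-stable random variables have polynomially heavy tails ($\Pr[|Z|>t]\sim t^{-p}$), so a direct concentration-plus-net argument does \emph{not} go through with raw $p$-stable rows. The paper's sketch resolves this by first embedding $\ell_p^n$ into $L_q(\Omega)$ via \emph{truncated} $p$-stable variables, giving the embedded functions a bounded $L_\infty$ norm $M=O_{\eps,p,q}(n)$, and then applying \cref{clm:schechtman} (sample $m$ points of $\Omega$, use concentration for $L_\infty$-bounded functions and a net), yielding $m=O(n\cdot M^q)=O(n^{q+1})\leq O(n^3)$. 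The truncation — or some substitute like Schechtman's Lewis-weight reweighting — is the key ingredient that makes the tail bound and samplability work; without it the stated high-probability guarantee would not follow. Since you explicitly defer this to a sketch and the paper does too, this does not invalidate your proof plan for \cref{thm:reverse}, but the heavy-tail issue is the genuine technical content of \cref{thm:schechtman} and should not be elided.
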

\begin{proof}
We consider the case $1 < q \leq p < 2$ (the other case follows via duality).
\cref{thm:main_hype} gives a reduction from SAT on $n$ variables, approximating the \onorm{p}{p} of
matrices $A \in \R^{N \times N}$ with $N = 2^{(\log n)^{O(1/\eps)}}$, within a factor $2^{(\log
  N)^{1-\eps}}$. Sampling a matrix $B$ from the distribution $\calD$ given by \cref{thm:schechtman}
(with dimension $N$) gives that it is also hard to approximate $\norm{p}{q}{BA} \approx \norm{p}{p}{A}$,
within a factor $2^{(\log N)^{1-\eps}}$.
\end{proof}
We now give a sketch of the proof of \cref{thm:schechtman} including the samplability condition. The
key idea is to embed the space $\ell_p^n$ into the infinite-dimensional space $L_q$ (for $0 \leq q
\leq p < 2$) using $p$-stable random variables. The corresponding subspace of $L_q$ can then be
embedded into $\ell_q^m$ if the random variables (elements of $L_q$) constructed in the previous
space are bounded in $L_{\infty}$ norm. This is the content of the following claim.
\begin{claim}[Schechtman \cite{Schechtman87}, Proposition 4]
\label{clm:schechtman}
Let $\eps > 0$ and $\Omega$ be an efficiently samplable probability space and let $V$ be an
$n$-dimensional subspace of $L_q(\Omega)$, such that
\[
M ~\defeq~ \sup\inbraces{\norm{L_{\infty}}{f} ~\mid~ \norm{L_{q}}{f} \leq 1, f \in V} ~<~ \infty \mper
\]
Then there exists a polynomial time samplable distribution $\calD$ over linear operators $T:
L_q(\Omega) \to \R^{m}$ for $m = C(\eps, q) \cdot n \cdot M^q$ such that with
probability $1 - o(1)$, we have that for every $f \in V$, $\norm{\ell_q}{Tf} = (1\pm \eps) \cdot \norm{L_q}{f}$.
\end{claim}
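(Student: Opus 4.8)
The plan is to realize $T$ as an \emph{empirical sampling operator}. Draw $\omega_1, \dots, \omega_m$ independently from $\Omega$ --- possible in polynomial time since $\Omega$ is efficiently samplable --- and set $Tf := m^{-1/q} \cdot \bigl( f(\omega_1), \dots, f(\omega_m) \bigr) \in \R^m$; the distribution $\calD$ is simply the law of $T$ induced by this choice of sample points, and once the $\omega_i$ are fixed $T$ is an explicit linear map, so composing it with the $p$-stable embedding $\ell_p^n \hookrightarrow L_q(\Omega)$ in the application yields an explicit $m \times n$ matrix. For any fixed $f$ the numbers $|f(\omega_i)|^q$ are i.i.d.\ with mean $\norm{L_q}{f}^q$, so $\Ex{\norm{\ell_q}{Tf}^q} = \norm{L_q}{f}^q$ exactly; the content of the claim is that this random quantity concentrates \emph{uniformly} over the unit sphere of $V$, and the parameter $M$ enters precisely to keep the individual summands bounded.

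First I would prove a pointwise concentration estimate. Fix $f \in V$ with $\norm{L_q}{f} = 1$ and put $X_i := |f(\omega_i)|^q$, so that $\norm{\ell_q}{Tf}^q = \tfrac1m \sum_i X_i$. By the definition of $M$ we have $0 \le X_i \le \norm{L_\infty}{f}^q \le M^q$ almost surely, and $\Varsymb(X_i) \le \Ex{X_i^2} = \Ex{|f|^{2q}} \le \norm{L_\infty}{f}^q \cdot \norm{L_q}{f}^q \le M^q$. Bernstein's inequality then gives, for $\eps \le 1$ and an absolute constant $c > 0$,
\[
\Pr{\, \Bigl| \tfrac{1}{m} \sum_{i=1}^{m} X_i - 1 \Bigr| > \eps \,} \;\le\; 2 \exp\!\bigl( - c\, \eps^2 m / M^q \bigr) .
\]

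Next I would run the standard net argument over the unit sphere $S$ of $(V, \norm{L_q}{\cdot})$. Fix a $\delta$-net $\calN \subseteq S$ with $\delta := \eps/10$ and $|\calN| \le (3/\delta)^n$, and choose $m \ge C(\eps, q)\cdot n \cdot M^q$ large enough that $c\, \eps^2 m / M^q \ge n \ln(3/\delta) + \omega(1)$. A union bound over $\calN$ then ensures that, with probability $1 - o(1)$, every $g \in \calN$ satisfies $\norm{\ell_q}{Tg} = (1 \pm \eps/2)\norm{L_q}{g}$. To pass from the net to all of $V$ I would exploit that $T$ is linear and that both $\norm{L_q}{\cdot}$ and $\norm{\ell_q}{T\,\cdot}$ are seminorms: writing $K := \sup_{f \in S}\norm{\ell_q}{Tf}$ and picking, for each $f \in S$, a net point $g$ with $\norm{L_q}{f - g} \le \delta$, the triangle inequality gives $\norm{\ell_q}{Tf} \le (1 + \eps/2) + K\delta$, whence $K \le (1+\eps/2)/(1-\delta) \le 1+\eps$, and symmetrically $\norm{\ell_q}{Tf} \ge (1 - \eps/2) - K\delta \ge 1 - \eps$. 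Rescaling $\eps$ by an absolute constant then yields the asserted two-sided bound.

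The main obstacle is obtaining the clean dimension bound $m = C(\eps, q)\cdot n \cdot M^q$ \emph{without} an extra logarithmic factor: the crude union bound above costs $\ln(3/\delta) = \ln(30/\eps)$ per unit of $n$, so by itself it only delivers $m = C'(\eps,q)\cdot n \cdot M^q$ with $C'$ carrying an additional $\log(1/\eps)$. Removing this --- as Schechtman does --- requires bounding $\sup_{f \in S}\bigl| \tfrac1m \sum_i |f(\omega_i)|^q - \norm{L_q}{f}^q \bigr|$ directly, via symmetrization followed by a contraction / chaining estimate rather than by a union bound, and this is where I expect the real work to lie. For our only application of the claim, namely \cref{thm:schechtman}, even the log-lossy version is enough: there $V$ is the image of $\ell_p^n$ under a truncated $p$-stable embedding, one has $M^q = O_\eps(\poly(n))$, and hence $m = O_\eps(\poly(n))$ --- concretely $O_\eps(n^3)$ for that construction --- which is all that is needed.
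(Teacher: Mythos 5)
Your construction (empirical sampling operator $Tf = m^{-1/q}(f(\omega_1),\ldots,f(\omega_m))$, Bernstein for the pointwise estimate using the $L_\infty$ bound $M$, a $\delta$-net union bound, then the standard triangle-inequality bootstrap from net to sphere) is exactly the approach the paper's proof sketch outlines. The ``main obstacle'' you flag at the end is actually moot for the claim as stated: the extra $\log(1/\eps)$ you pay in the union bound is a function of $\eps$ alone and is therefore absorbed into $C(\eps,q)$, so your net argument already yields the asserted $m = C(\eps,q)\cdot n\cdot M^q$ without any need for chaining.
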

\begin{proofsketch}
The linear operator is simply defined by sampling $x_1, \ldots, x_m \sim \Omega$ independently, and
taking
\[
Tf ~\defeq~ \frac{1}{m^{1/q}} \cdot \inparen{f(x_1), \ldots, f(x_m)} \qquad \forall f\mper
\]
The proof then follows by concentration bounds for $L_{\infty}$-bounded random variables, and a
union bound over an epsilon net for the space $V$.
\end{proofsketch}
The problem then reduces to constructing an embedding of $\ell_p^n$ into $L_q$, which is bounded in
$L_{\infty}$ norm. While a simple embedding can be constructed using $p$-stable distributions,
Schechtman uses a clever reweighting argument to control the $L_{\infty}$ norm. We show below that a
simple truncation argument can also be used to obtain a somewhat crude bound on the $L_{\infty}$
norm, which suffices for our purposes and yields an easily samplable distribution. 

We collect below the relevant facts about $p$-stable random variables needed for our argument, which
can be found in many well-known references, including \cite{Indyk06, AlbiacK06}. 
\begin{fact}
For all $p \in (0,2)$, there exist (normalized) $p$-stable random variables $Z$ satisfying the
following properties:
\begin{enumerate}
\item For $Z_1, \ldots, Z_n$ iid copies of $Z$, and for all $a \in \R^n$, the random variable
\[
S ~\defeq~ \frac{a_1 \cdot Z_1 + \cdots + a_n \cdot Z_n}{\norm{\ell_p}{a}} \mcom
\]
has distribution identical to $Z$.
\item For all $q < p$, we have 
\[
C_{p,q} ~\defeq~ \norm{L_q}{Z} ~=~ \inparen{\Ex{\abs{Z}^q}}^{1/q} ~<~ \infty \mper
\]
\item There exists a constant $C_p$ such that for all $t > 0$,
\[
\Pr{\abs{Z} \geq t} ~<~ \frac{C_p}{t} \mper
\]
\item $Z$ can be sampled by choosing $\theta \in_R [-\pi/2, \pi/2]$, $r \in_R [0,1]$, and taking
\[
Z ~=~ \frac{\sin(p\theta)}{(\cos(\theta))^{1/p}} \cdot \inparen{\frac{\cos((1-p) \cdot
    \theta)}{\ln(1/r)}}^{(1-p)/p} \mper
\]
\end{enumerate}
\end{fact}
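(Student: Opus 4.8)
The plan is to assemble these as classical facts about stable laws, building $Z$ from its characteristic function and reading off each item. First I would \emph{define} $Z$ to be the real random variable with characteristic function $\varphi(t) = e^{-|t|^p}$; that $\varphi$ is a genuine characteristic function for every $p \in (0,2)$ is L\'evy's theorem (for $p \leq 1$ it also follows from P\'olya's criterion, and in general from the L\'evy--Khintchine formula, whose L\'evy measure here has density proportional to $|x|^{-1-p}$). ``Normalized'' will just mean we fix this scaling. Property~(1) is then immediate: with $S := (a_1 Z_1 + \cdots + a_n Z_n)/\|a\|_p$ for i.i.d.\ copies $Z_1, \dots, Z_n$ of $Z$ and $a \in \R^n \setminus \{0\}$, independence gives
\[
\Ex{e^{itS}} \;=\; \prod_{j=1}^n \exp\!\big( -|t|^p |a_j|^p / \|a\|_p^p \big) \;=\; \exp\!\Big( -\tfrac{|t|^p}{\|a\|_p^p} \sum_{j=1}^n |a_j|^p \Big) \;=\; e^{-|t|^p},
\]
so $S$ has the same distribution as $Z$.

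For properties~(2) and~(3) I would use the tail behaviour of stable laws. From the L\'evy--Khintchine representation (equivalently, a Tauberian estimate on $\varphi$ near the origin) one gets the standard asymptotic $\Pr{\abs{Z} \geq t} = \Theta_p(t^{-p})$ as $t \to \infty$, so in particular $\Pr{\abs{Z}\geq t} \leq C_p'\, t^{-p}$ for all $t \geq 1$. Property~(2) follows by integrating the tail, $\Ex{\abs{Z}^q} = \int_0^\infty q\, t^{q-1} \Pr{\abs{Z} \geq t}\, dt$, which converges precisely when $q < p$; one then sets $C_{p,q} := \|Z\|_q < \infty$. For property~(3), since the Fact is invoked only with $p \in [1,2)$ we have $t^{-p} \leq t^{-1}$ for $t \geq 1$, while for $t < 1$ the trivial bound $\Pr{\abs{Z}\geq t} \leq 1$ suffices; combining these gives $\Pr{\abs{Z}\geq t} < C_p/t$ after enlarging the constant. (For $p<1$ the sharp bound is $C_p/t^p$ instead, but that range is never needed here.)

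For property~(4) I would recognize the displayed expression as the Chambers--Mallows--Stuck representation of a symmetric $p$-stable variable: with $\theta$ uniform on $[-\pi/2, \pi/2]$ and $r$ uniform on $[0,1]$ (so that $\ln(1/r)$ is a unit-rate exponential), a change-of-variables computation shows the stated $Z$ has characteristic function $e^{-|t|^p}$, hence the law of the $Z$ built above. I would either carry out this verification or cite the classical simulation reference; it is purely mechanical.

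The steps I expect to take the most care are pinning down the correct form of the tail bound in~(3) --- the honest statement is $C_p/t^p$, and only the restriction $p \geq 1$ coming from the applications makes the stated $C_p/t$ literally true --- and the explicit characteristic-function computation behind~(4), which is the one genuinely computational ingredient. Neither is a conceptual obstacle, since L\'evy's theorem, the L\'evy--Khintchine tail asymptotics, and the Chambers--Mallows--Stuck sampler are all completely standard.
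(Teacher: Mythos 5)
The paper does not prove this Fact; it simply cites \cite{Indyk06, AlbiacK06} as standard references, so there is no internal proof to compare against. Your verification from the characteristic function $e^{-|t|^p}$ is correct: property (1) follows immediately from the multiplicativity of characteristic functions under independent sums, properties (2) and (3) follow from the tail asymptotics $\Pr{|Z|\geq t} \sim c_p t^{-p}$, and property (4) is indeed the Chambers--Mallows--Stuck sampler for the symmetric ($\beta = 0$) case. The one substantive thing you add beyond the citation is your observation on property (3): as literally stated for ``all $p \in (0,2)$'' it is false, since $\Pr{|Z|\geq t} \cdot t \sim c_p t^{1-p} \to \infty$ when $p < 1$; the correct universal bound is $C_p/t^p$, and the stated $C_p/t$ is valid only once one restricts to $p \geq 1$. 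This is a genuine (if minor) imprecision in the paper's statement, harmless here because the only invocation is inside the proof of Theorem~\ref{thm:schechtman}, which is applied with $1 < q \leq p < 2$, but worth noting since the $C_p/t$ form (rather than $C_p/t^p$) is what is actually used to bound $\Pr{\exists i\,\, |Z_i|\geq B}$ by $n C_p / B^p$ --- and that step in fact uses the $t^{-p}$ bound, not the $t^{-1}$ bound stated in the Fact. So the Fact as written is both too weak for $p\geq 1$ (the later calculation needs $t^{-p}$) and false for $p<1$; your diagnosis points at the right fix.
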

We now define an embedding of $\ell_p^n$ into $L_q$ with bounded $L_{\infty}$, using truncated
$p$-stable random variables. Let $Z = (Z_1, \ldots, Z_n)$ be a vector of 
iid $p$-stable random variables as above, and let $B$ be a parameter to be chosen later. We consider
the random variables
\[
\Delta(Z) ~\defeq~ \indicator{\exists i \in [n] ~\abs{Z_i} > B} 
\quad \text{and} \quad
Y ~\defeq~ (1 - \Delta(Z)) \cdot Z ~=~ \indicator{\forall i \in [n] ~\abs{Z_i} \leq B} \cdot Z\mper
\]
For all $a \in \R^n$, we define the (linear) embedding
\[
\phi(a) ~\defeq~ \frac{ \ip{a,Y}}{C_{p,q}} ~=~ \frac{\ip{a,Z}}{C_{p,q}} - \Delta(Z) \cdot
\frac{\ip{a,Z}}{C_{p,q}}\mper 
\] 
By the properties of $p$-stable distributions, we know that $\norm{L_q}{\ip{a,Z}/C_{p,q}} =
\norm{\ell_p}{a}$ for all $a \in \R^n$. By the following claim, we can choose $B$ so that the second
term only introduces a small error.
\begin{claim}
For all $\eps > 0$, there exists $B = O_{p,q,\eps}(n^{1/p})$ such that for the embedding $\phi$ defined
above
\[
\abs{\norm{L_q}{\phi(a)} - \norm{\ell_p}{a}} ~\leq~ \eps \cdot \norm{\ell_p}{a} \mper
\]
\end{claim}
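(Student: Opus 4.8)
The plan is to bound the error term $\Delta(Z) \cdot \ip{a,Z}/C_{p,q}$ in $L_q$ norm and show it is at most $\eps \cdot \norm{\ell_p}{a}$ for a suitable choice of $B = O_{p,q,\eps}(n^{1/p})$. By scaling, it suffices to consider $a$ with $\norm{\ell_p}{a} = 1$. First I would use the linearity of the embedding and the triangle inequality in $L_q$ (valid since $q > 1$; for the case $q < 1$ one uses the quasi-triangle inequality, but the statement is invoked only for $q > 1$ in \cref{thm:reverse}, so I will assume $q \geq 1$) to write
\[
\abs{ \norm{L_q}{\phi(a)} - \norm{\ell_p}{a} } ~=~ \abs{ \norm{L_q}{\phi(a)} - \norm{L_q}{\ip{a,Z}/C_{p,q}} } ~\leq~ \norm{L_q}{\Delta(Z) \cdot \ip{a,Z}/C_{p,q}} \mper
\]
So the whole task reduces to showing $\norm{L_q}{\Delta(Z) \cdot \ip{a,Z}} \leq \eps \cdot C_{p,q}$.

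The key step is a Cauchy–Schwarz / \Holder-type split of the expectation $\Ex{\Delta(Z)^q \cdot \abs{\ip{a,Z}}^q} = \Ex{\Delta(Z) \cdot \abs{\ip{a,Z}}^q}$ (using $\Delta(Z) \in \{0,1\}$). Picking an exponent $s > 1$ with conjugate $s^*$, \Holder gives
\[
\Ex{\Delta(Z)\cdot\abs{\ip{a,Z}}^q} ~\leq~ \inparen{\Ex{\Delta(Z)^{s^*}}}^{1/s^*} \cdot \inparen{\Ex{\abs{\ip{a,Z}}^{qs}}}^{1/s} \mper
\]
For the second factor: choose $s$ close enough to $1$ that $qs < p$ is still $< p$ — wait, this needs $q < p$, which does hold since the claim is applied with $q \leq p < 2$; actually in the claim statement $q < p$. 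Hmm, but we need $qs < p$, so $s < p/q$; since $q < p$ strictly (in \cref{thm:schechtman} we have $q<p<2$), such $s>1$ exists. Then by the $p$-stability property, $\ip{a,Z}/\norm{\ell_p}{a} = \ip{a,Z}$ has the same law as $Z$, so $\Ex{\abs{\ip{a,Z}}^{qs}} = \Ex{\abs{Z}^{qs}} = C_{p,qs}^{qs} < \infty$, a constant depending only on $p, q, s$. For the first factor: $\Ex{\Delta(Z)^{s^*}} = \Ex{\Delta(Z)} = \Pr{\exists i: \abs{Z_i} > B} \leq n \cdot \Pr{\abs{Z} > B} < n \cdot C_p / B$ by the union bound and the tail bound (Property 3 of $p$-stable variables). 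Choosing $B$ so that $n C_p / B = (\eps')^{s^*}$ for an appropriately small $\eps'$, i.e. $B = \Theta(n / (\eps')^{s^*})$ — but wait, that gives $B = \Theta(n)$, not $\Theta(n^{1/p})$.

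To get the sharper $B = O(n^{1/p})$ I would instead use a stronger moment estimate on $\Delta(Z) \cdot \ip{a,Z}$ directly, exploiting that $\ip{a,Z}$ is itself distributed as $Z$ with only a $\norm{\ell_p}{a}=1$ scaling — more precisely, I would bound $\Ex{\Delta(Z)\abs{\ip{a,Z}}^q}$ by conditioning and noting that on the event $\Delta(Z)=1$ at least one $\abs{Z_i}$ is large, and use that the contribution of each coordinate to $\ip{a,Z}$ is controlled by $\abs{a_i} \leq 1$. A cleaner route: apply \Holder with the \emph{third} factor being the indicator, i.e. write $\Ex{\Delta \abs{\ip{a,Z}}^q} \le \inparen{\Pr{\Delta=1}}^{1-q/(qs)}\cdot\inparen{\Ex{\abs{\ip{a,Z}}^{qs}}}^{1/s}$ as above, and simply accept $B = O_{p,q,\eps}(n)$; since the downstream use in \cref{clm:schechtman} only needs $m = C n M^q$ with $M = O(B)$, this yields $m = O(n^{1+q})$, which still gives a polynomial-size reduction — actually the excerpt's \cref{thm:schechtman} claims $m = \Omega_\eps(n^3)$, and $1 + q < 3$ when $q < 2$, so this crude bound is already good enough. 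Hence the main obstacle is not the $L_\infty$ bound per se but making sure the exponent bookkeeping in \Holder is consistent — namely that one can simultaneously have $s > 1$, $qs < p$ (so the $qs$-th moment of a $p$-stable variable is finite), and the tail-probability factor driven below $\eps^{s^*} \cdot C_{p,qs}^{-qs/s^*}$ by taking $B$ a large enough multiple of $n^{1/p}$ (or $n$, for the crude version). Once $B$ is fixed this way, the claimed inequality $\abs{\norm{L_q}{\phi(a)} - \norm{\ell_p}{a}} \leq \eps \norm{\ell_p}{a}$ follows, and combined with \cref{clm:schechtman} applied to the subspace $V = \phi(\R^n) \subseteq L_q(\Omega)$ with $M = O(B)$ and $\Omega$ the (efficiently samplable, by Property 4) space of the truncated $p$-stable vector, it completes the proof of \cref{thm:schechtman}.
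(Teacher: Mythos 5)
Your decomposition is the same as the paper's: split off the truncation error via the triangle inequality, then apply H\"older with a small slack exponent ($s>1$ with $qs<p$, which is exactly the paper's $1+\delta$) so that the $qs$-th moment of the $p$-stable law is finite and the indicator contributes a small power of the tail probability. The accounting inside that H\"older step is correct.

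The gap is in the tail bound, and it is real: the claim asserts $B = O_{p,q,\eps}(n^{1/p})$, and you only establish $B = O_{p,q,\eps}(n)$. The source of the discrepancy is that you used the tail estimate as literally stated in the paper's Fact, $\Pr{\abs{Z}\geq t}< C_p/t$; but the paper's own proof of this claim plugs in $\Pr{\abs{Z}\geq t}<C_p/t^p$ at the final step, which is the correct sharp tail for a $p$-stable random variable (the stated Fact appears to have dropped the exponent $p$). With the $t^{-p}$ tail, a union bound gives $\Ex{\Delta(Z)} \le n\, C_p/B^p$, and the same exponent arithmetic you carried out then yields $B = O_{p,q,\eps}(n^{1/p})$ as required. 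So the fix is only to replace $C_p/B$ by $C_p/B^p$ in your first-factor bound; everything else goes through unchanged. As written, though, you have proven a strictly weaker statement than the claim.

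Your fallback argument that the crude $B = O(n)$ ``is already good enough'' also has a bookkeeping error: you write $M = O(B)$, but the next claim in the paper shows $M = O(n^{1-1/p}\cdot B)$ (the $n^{1-1/p}$ comes from comparing $\norm{\ell_1}{a}$ to $\norm{\ell_p}{a}$). With $B = O(n)$ this gives $M = O(n^{2-1/p})$ and $m = O(n^{1+q(2-1/p)})$, which for $q$ close to $p$ close to $2$ approaches $n^4$ rather than $n^{1+q} < n^3$. This is still polynomial, so the downstream reduction in \cref{thm:reverse} is not broken, but the exponent you quote is not what you would actually get, and in any case it does not match the $O(n^3)$ stated in \cref{thm:schechtman}. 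The clean route is simply to use the $t^{-p}$ tail and obtain $B = O(n^{1/p})$ as the claim demands.
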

\begin{proof}
By triangle inequality, it suffices to bound $\norm{L_q}{\Delta(Z) \cdot \ip{a,Z}}$ by $\eps \cdot
C_{p,q} \cdot \norm{\ell_p}{a}$. Let $\delta > 0$ be such that $(1+\delta) \cdot q < p$. Using the
fact that $\Delta(Z)$ is Boolean and \Holder's inequality, we observe that
\begin{align*}
\norm{L_q}{\Delta(Z) \cdot \ip{a,Z}}
&~=~
\inparen{\Ex{\abs{\ip{a,Z}}^q \cdot \Delta(Z)}}^{1/q} \\
&~\leq~
\inparen{\Ex{\abs{\ip{a,Z}}^{q(1+\delta)}}}^{1/(q(1+\delta))} \cdot
  \inparen{\Ex{\Delta(Z)}}^{\delta/(q(1+\delta))} \\
&~=~
C_{p,(1+\delta)q} \cdot \norm{\ell_p}{a} \cdot \inparen{\Pr{\exists i \in [n] ~\abs{Z_i} \geq
  B}}^{\delta/(q(1+\delta))} \\
&~\leq~
C_{p,(1+\delta)q} \cdot \norm{\ell_p}{a} \cdot \inparen{n \cdot \frac{C_p}{B^p}}^{\delta/(q(1+\delta))}
\end{align*}
Thus, choosing $B = O_{\eps, p, q}(n^{1/p})$ such that
\[
\frac{C_{p,(1+\delta)q}}{C_{p,q}} \cdot \inparen{n \cdot \frac{C_p}{B^p}}^{\delta/(q(1+\delta))}
~\leq~ \eps
\]
proves the claim.
\end{proof}
Using the value of $B$ as above, we now observe a bound on $\norm{L_{\infty}}{\phi(a)}$.
\begin{claim}
Let $B = O_{\eps,p,q}(n^{1/p})$ be chosen as above. Then, we have that
\[
M ~\defeq~ \sup\inbraces{\norm{L_{\infty}}{\ip{a,Y}} ~\mid~ \norm{L_{q}}{\ip{a,Y}} \leq 1} ~=~
O_{\eps,p,q}(n) \mper 
\]
\end{claim}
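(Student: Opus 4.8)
The plan is to bound the $L_\infty$ norm of $\ip{a,Y}$ by exploiting the truncation built into $Y$. Recall that $Y = \indicator{\forall i ~\abs{Z_i} \leq B} \cdot Z$, so whenever $Y \neq 0$ we have $\abs{Z_i} \leq B$ for all $i \in [n]$, and hence by the triangle inequality and Cauchy--Schwarz,
\[
\abs{\ip{a,Y}} ~\leq~ \sum_{i \in [n]} \abs{a_i} \cdot \abs{Z_i} ~\leq~ B \cdot \sum_{i \in [n]} \abs{a_i} ~\leq~ B \cdot n^{1 - 1/p} \cdot \norm{\ell_p}{a} \mper
\]
So $\norm{L_\infty}{\ip{a,Y}} \leq B \cdot n^{1-1/p} \cdot \norm{\ell_p}{a}$ pointwise in $a$. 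Since the previous claim gives $B = O_{\eps,p,q}(n^{1/p})$, this is $O_{\eps,p,q}(n^{1/p} \cdot n^{1-1/p}) \cdot \norm{\ell_p}{a} = O_{\eps,p,q}(n) \cdot \norm{\ell_p}{a}$.

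Next I would convert this into a bound on the supremum $M$ as defined in the claim. The constraint there is $\norm{L_q}{\ip{a,Y}} \leq 1$; by the preceding claim $\norm{L_q}{\phi(a)} = \norm{L_q}{\ip{a,Y}}/C_{p,q}$ is within a factor $(1\pm\eps)$ of $\norm{\ell_p}{a}$, so $\norm{L_q}{\ip{a,Y}} \leq 1$ forces $\norm{\ell_p}{a} \leq \frac{1}{(1-\eps) C_{p,q}} = O_{\eps,p,q}(1)$. Plugging this into the pointwise bound above yields $M \leq O_{\eps,p,q}(n)$, as claimed. (Alternatively one could phrase it via $M \leq \sup_a \frac{B\, n^{1-1/p}\norm{\ell_p}{a}}{\norm{L_q}{\ip{a,Y}}} = B\, n^{1-1/p} \cdot \sup_a \frac{\norm{\ell_p}{a}}{\norm{L_q}{\ip{a,Y}}} \leq B\, n^{1-1/p}/((1-\eps)C_{p,q})$, which is the same thing.)

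There is essentially no obstacle here: the whole point of introducing the truncation parameter $B$ (rather than using raw $p$-stable variables, whose $L_\infty$ norm is unbounded) was precisely to make $\norm{L_\infty}{\ip{a,Y}}$ finite, and once $Y$ is coordinatewise bounded by $B$ the estimate is a one-line application of Cauchy--Schwarz / power-mean inequality between $\ell_1$ and $\ell_p$. The only thing to be slightly careful about is tracking that the dependence of $B$ on $n$ is exactly $n^{1/p}$ (from the previous claim), so that the product $B \cdot n^{1-1/p}$ comes out to $O(n)$ with all the $\eps,p,q$-dependent constants absorbed; everything else is routine. This bound on $M$ is then exactly what is needed to feed into \cref{clm:schechtman}, which produces the samplable distribution over operators $T : L_q \to \R^m$ with $m = C(\eps,q) \cdot n \cdot M^q = O_{\eps,p,q}(n^{q+1})$, completing the chain toward \cref{thm:schechtman} (and in particular giving the $O(n^3)$ bound quoted there for $q < 2$).
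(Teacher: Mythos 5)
Your proof is correct and follows the same argument as the paper: bound $\abs{\ip{a,Y}}$ by $B\cdot\norm{\ell_1}{a}\leq B\, n^{1-1/p}\norm{\ell_p}{a}$ using the coordinatewise truncation, then use the lower bound $\norm{L_q}{\ip{a,Y}}\gtrsim\norm{\ell_p}{a}$ from the preceding claim to reduce to $\norm{\ell_p}{a}=O(1)$. (One small nomenclature nit: the inequality $\norm{\ell_1}{a}\leq n^{1-1/p}\norm{\ell_p}{a}$ is H\"older's inequality, which is what the paper invokes; Cauchy--Schwarz is only the $p=2$ special case.)
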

\begin{proof}
By the choice of $B$, we have that $\norm{L_{q}}{\ip{a,Y}} \geq (1-\eps) \norm{\ell_p}{a}$. Thus, we
can assume that $\norm{\ell_p}{a} \leq 2$. \Holder's inequality then gives for all such $a$,
\begin{align*}
\abs{\ip{a,Y}} 
&~\leq~ 
\norm{\ell_1}{a} \cdot \norm{\ell_{\infty}}{Y} \\
&~\leq~
n^{1-1/p} \cdot \norm{\ell_p}{a} \cdot B \\
&~\leq~ 2 \cdot n^{1-1/p} \cdot B ~=~ O_{\eps, p, q}(n) \mcom
\end{align*}
which proves the claim.
\end{proof}
Using the above bound on $M$ in \cref{clm:schechtman} gives a bound of $m = O_{\eps,p,q}(n^{q+1}) =
O_{\eps,p,q}(n^3)$. Moreover, the distribution over embeddings is efficiently samplable, since it
obtained by truncating $p$-stable random variables. This completes the proof  of \cref{thm:schechtman}.



\bibliographystyle{alpha}
\bibliography{p-to-q-refs}

\ifnotconf
\appendix

\section{Dictatorship Test}\label{sec:dic}
\newcommand{\innernorm}{\cnorm}
\newcommand{\outernorf}{\enorm}

First we prove an implication of Berry-\Esseen estimate for fractional moments
(similar to Lemma 3.3 of \cite{GRSW16}, see also \cite{KNS10}).
\begin{lemma}
There exist universal constants $c>0$ and $\delta_0>0$ such that the following statement is true. 
    If $X_1,\cdots,X_n$ are bounded independent random variables with $\abs{X_i}\le 1$,
    $\Ex{X_i}=0$ for $i\in[n]$, and $\sum_{i\in[n]}\Ex{X_i^2}=1$,
    $\sum_{i\in[n]}\Ex{\abs{X_i}^3}\leq \delta$ for some $0 < \delta < \delta_0$, then
    for every $p\geq 1$:
    \[
        \inparen{\Ex{\abs{\sum_{j=1}^n X_j}^p}}^{\frac{1}{p}} \leq \gamma_p\cdot
        \inparen{1+c\delta\inparen{\log\inparen{\nfrac{1}{\delta}}}^{\frac{p}{2}}}.
    \]
    \label{lem:clt-analytic-sparse}
\end{lemma}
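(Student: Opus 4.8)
The plan is to follow the standard route for such ``fractional-moment central limit theorems'' (as in Lemma 3.3 of \cite{GRSW16}, cf.\ \cite{KNS10}): compare $S:=\sum_j X_j$ with a standard normal $g$ via the Berry--Ess\'een theorem, and then convert closeness of distribution functions into closeness of $p$-th moments through the layer-cake formula. Since $\E X_i=0$, $\sum_i\E X_i^2=1$, and $\sum_i\E|X_i|^3\le\delta$, the (non-i.i.d.) Berry--Ess\'een theorem gives $\sup_t|\Pr[S\le t]-\Pr[g\le t]|\le C_0\delta$ for an absolute constant $C_0$, hence $|\Pr[|S|>t]-\Pr[|g|>t]|\le 2C_0\delta$ for every $t\ge 0$. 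Writing $\E|S|^p=\int_0^\infty pt^{p-1}\Pr[|S|>t]\,dt$ and $\gamma_p^p=\E|g|^p=\int_0^\infty pt^{p-1}\Pr[|g|>t]\,dt$, and using $(1+x)^p\ge 1+px$, it suffices to bound $\int_0^\infty pt^{p-1}\bigl(\Pr[|S|>t]-\Pr[|g|>t]\bigr)\,dt$ by $c\,p\,\gamma_p^p\,\delta\,(\log\tfrac1\delta)^{p/2}$.

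I would split this integral at a $p$-dependent threshold $T=T_p:=a_p\sqrt{\log(1/\delta)}$ with $a_p$ of order $\gamma_p$; concretely $a_p:=(pc/4C_0)^{1/p}\gamma_p$, which for a large enough absolute constant $c$ satisfies both $a_p^2\ge 2$ for all $p\ge1$ and $a_p^p\le(pc/4C_0)\gamma_p^p$. On $[0,T]$ the integrand difference is at most $2C_0\delta$, contributing $\le 2C_0\delta\,T^p=2C_0\,a_p^p\,\delta\,(\log\tfrac1\delta)^{p/2}\le\tfrac12\,c\,p\,\gamma_p^p\,\delta\,(\log\tfrac1\delta)^{p/2}$ --- this is the step that produces the advertised $(\log\tfrac1\delta)^{p/2}$ factor, and it forces $T$ to be only a \emph{constant} multiple of $\sqrt{\log(1/\delta)}$, with the constant tied to $\gamma_p$. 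The $g$-tail $\E[|g|^p\mathbf 1_{|g|>T}]$ on $(T,\infty)$ is routine: since $T\gtrsim\gamma_p\sqrt{\log(1/\delta)}\gtrsim\sqrt p$ it is $\lesssim T^{p-1}e^{-T^2/2}=T^{p-1}\delta^{a_p^2/2}$, and $a_p^2\ge 2$ makes this $\le\delta$ times a harmless polynomial in $T$ and $p$, absorbed once $\delta_0$ is a small enough absolute constant.

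The delicate part is the $S$-tail $\int_T^\infty pt^{p-1}\Pr[|S|>t]\,dt$: Berry--Ess\'een gives no decay there, and the crude Bernstein bound using only $|X_i|\le1$ is merely exponential, hence far too weak at scale $t\asymp\sqrt{\log(1/\delta)}$. Here I would exploit the small third moment through a truncation: with $\tau\asymp 1/\sqrt{\log(1/\delta)}$, write $X_i=X_i'+X_i''$ where $X_i'=X_i\mathbf 1_{|X_i|\le\tau}$, and bound $\Pr[|S|>t]\le\Pr[|S'|>(1-\eta)t]+\Pr[|S''|>\eta t]$ for a small absolute $\eta$. For $S'$, Bernstein with the improved summand bound $\tau$ gives (sub)Gaussian decay $\Pr[|S'|>(1-\eta)t]\lesssim e^{-(1-\eta)^2t^2/2}$ up to $t\asymp 1/\tau=\Theta(\sqrt{\log(1/\delta)})$ and exponential decay with rate $\asymp\sqrt{\log(1/\delta)}$ beyond; in both ranges the resulting integral is $\le\delta$ times a polynomial, again using $a_p^2\ge 2$. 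For $S''$, the point is that $\sum_i\Pr[|X_i|>\tau]\le\delta/\tau^3=\delta\cdot\polylog(1/\delta)=:\mu$ is tiny, so a Chernoff/Poisson bound on $N'':=\#\{i:X_i''\neq 0\}$ together with $|S''|\le N''$ gives $\Pr[|S''|>\eta t]\le(e\mu/\lceil\eta t\rceil)^{\lceil\eta t\rceil}$, which is super-polynomially small in $1/\delta$ for all $t\ge T$; its contribution is negligible.

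The main obstacle I anticipate is bookkeeping the \emph{uniformity in $p$}: the constraint ``$T=O(\sqrt{\log(1/\delta)})$ with small constant'' coming from the Berry--Ess\'een term and the requirement $a_p^2\ge 2$ needed above pull in opposite directions, which is precisely why $a_p$ must scale like $\gamma_p$ (and why $c$ and $\delta_0$ must be fixed as absolute constants up front). For the complementary regime where $p$ is large relative to $\log(1/\delta)$, where those tail estimates would degrade, I would instead observe that the claimed inequality is essentially trivial: $(\E|S|^p)^{1/p}=O(p)$ (e.g.\ by Bernstein), whereas for $\delta_0$ a small enough absolute constant $\gamma_p\bigl(1+c\delta(\log\tfrac1\delta)^{p/2}\bigr)$ already dominates $O(p)$, since there $\delta(\log\tfrac1\delta)^{p/2}\ge\bigl(\sqrt{\log(1/\delta_0)}\cdot e^{-O(1)}\bigr)^{p}$ grows geometrically in $p$. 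The individual tail estimates are routine; the care is entirely in making all constants absolute and in reconciling the two $p$-regimes.
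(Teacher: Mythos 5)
Your proposal follows the paper's skeleton through the decomposition: Berry--Ess\'een, layer-cake, a split at $T\asymp\gamma_p\sqrt{\log(1/\delta)}$, and the reduction via $(1+x)^p\ge 1+px$. Where you diverge is in the estimate of $\int_T^\infty pt^{p-1}\,\Psymb[|S|>t]\,dt$, and your route is actually more careful than the paper's. The paper asserts, citing ``Hoeffding's lemma,'' that $\Psymb[|S|\ge t]<2e^{-2t^2}$; this does not follow from $|X_i|\le1$ and $\sum_i\E[X_i^2]=1$ alone (Hoeffding controls the tail by $\sum_i(\text{range of }X_i)^2$, which those hypotheses leave as large as $4n$), and even in the Rademacher-type case that actually arises in \cref{lem:dict} (where $|X_i|=|y_i|$ deterministically with $\sum y_i^2=1$) one gets only $2e^{-t^2/2}$, which, fed into the paper's incomplete-Gamma step, would require $\gamma_p^2/2>1$ --- false for small $p$. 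Your diagnosis that a bound using only $|X_i|\le1$ cannot be sub-Gaussian at scale $t\asymp\sqrt{\log(1/\delta)}$, and that the small third moment must be leveraged via truncation, is correct; your scheme (split at $\tau\asymp1/\sqrt{\log(1/\delta)}$; Bernstein on $S'$ with envelope $\tau$; Chernoff/Poisson on the exceedance count for $S''$) does furnish sub-Gaussian decay at the needed scale and proves the lemma in the generality stated. Two details to nail down in a full write-up: (i) $X_i'=X_i\mathbf{1}_{|X_i|\le\tau}$ is not centered, so record that $|\E[S']|\le\delta/\tau^2=\delta\log(1/\delta)=o(1)$ before invoking Bernstein; (ii) take $a_p^2$ bounded strictly above $2$, not merely $\ge2$: the Gaussian tail is $\asymp T^{p-1}e^{-T^2/2}$ and you need $\delta^{a_p^2/2-1}$ to absorb a leftover $\sqrt{\log(1/\delta)}$ factor, the same slack the paper uses when it writes $\delta^{2\gamma_p^2}\ll\delta$.
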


Now we state and prove the main lemma of this section:
\begin{lemma}\label{lem:dict}
    Let $f : \{ \pm 1 \}^R \to \R$ be a linear 
    function for
    some positive integer $R \in \N$ and $\hatf : [R] \to \R$ be its linear Fourier
    coefficients defined by 
    \[ \hatf(i) \defeq \Ex{x \in \{ \pm 1 \}^R}{x_i f(x)}.\]
    For all $\varepsilon > 0$, there exists $\delta > 0$
    such that if $\enorm{r}{f} > (\gamma_r + \varepsilon) \cnorm{2}{\hatf}$ then
    $\cnorm{4}{\hatf} > \delta \cnorm{2}{\hatf}$ for all $1 \leq r < 2$.
\end{lemma}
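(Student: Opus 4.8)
The plan is to establish the contrapositive: fixing a small $\delta = \delta(\varepsilon) > 0$ to be determined at the end, I assume $\cnorm{4}{\hatf} \le \delta\,\cnorm{2}{\hatf}$ and deduce $\enorm{r}{f} \le (\gamma_r + \varepsilon)\,\cnorm{2}{\hatf}$ for \emph{every} $r \in [1,2)$ simultaneously, so that a single $\delta$ works for all such $r$ as required. If $\cnorm{2}{\hatf} = 0$ then $f \equiv 0$ and the implication is vacuous, so by homogeneity I normalize $\cnorm{2}{\hatf} = 1$, i.e.\ $\sum_{i \in [R]} \hatf(i)^2 = 1$. Since $f$ is linear, $f(x) = \sum_{i} \hatf(i)\,x_i$, so under the uniform measure on $\{\pm 1\}^R$ the random variable $f(x)$ is a sum $\sum_i X_i$ of independent variables $X_i := \hatf(i)\,x_i$ with $\Ex{X_i} = 0$, $\sum_i \Ex{X_i^2} = \sum_i \hatf(i)^2 = 1$, and $|X_i| = |\hatf(i)| \le 1$ (the last because $\sum_j \hatf(j)^2 = 1$). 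Thus the $X_i$ are exactly in the form to which \cref{lem:clt-analytic-sparse} applies, provided their Lyapunov ratio is small.

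To bound that ratio I convert the fourth-moment hypothesis into a third-moment bound via Cauchy--Schwarz, writing $|\hatf(i)|^3 = |\hatf(i)| \cdot \hatf(i)^2$:
\[
\sum_{i} \Ex{|X_i|^3} \;=\; \sum_{i} |\hatf(i)|^3 \;\le\; \Big(\sum_i \hatf(i)^2\Big)^{1/2} \Big(\sum_i \hatf(i)^4\Big)^{1/2} \;=\; \cnorm{4}{\hatf}^2 \;\le\; \delta^2 .
\]
Choosing $\delta$ small enough that $\delta^2 < \delta_0$, \cref{lem:clt-analytic-sparse} applied with $p = r$ then yields, for every $r \ge 1$,
\[
\enorm{r}{f} \;=\; \Big(\Ex{\big|\textstyle\sum_j X_j\big|^r}\Big)^{1/r} \;\le\; \gamma_r\Big(1 + c\,\delta^2\big(\log(1/\delta^2)\big)^{r/2}\Big),
\]
where I have used $\sum_i |\hatf(i)|^3 \le \delta^2$ together with the monotonicity of $t \mapsto t\,(\log(1/t))^{r/2}$ for small $t$ (valid for all $r \in [1,2)$ once $\delta^2 < e^{-1}$, since then $r/2 < \log(1/\delta^2)$).

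It remains to make the error term uniform in $r \in [1,2)$. For such $r$ one has $\gamma_r \le \gamma_2 = 1$, and once $\delta$ is small enough that $\log(1/\delta^2) \ge 1$ we also have $(\log(1/\delta^2))^{r/2} \le \log(1/\delta^2)$; hence $\enorm{r}{f} \le \gamma_r + c\,\delta^2 \log(1/\delta^2)$. Since $t\log(1/t) \to 0$ as $t \to 0^+$, choosing $\delta = \delta(\varepsilon)$ so that $c\,\delta^2 \log(1/\delta^2) < \varepsilon$ gives $\enorm{r}{f} < \gamma_r + \varepsilon = (\gamma_r + \varepsilon)\,\cnorm{2}{\hatf}$ for every $r \in [1,2)$, which is the desired contrapositive. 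The only non-routine ingredient is \cref{lem:clt-analytic-sparse} itself — a Berry--Esseen--type estimate controlling \emph{fractional} ($r<2$) absolute moments of a normalized sum by $\gamma_r$ up to an error governed by the Lyapunov ratio — and that is where the analytic work (a quantitative CLT for $|\cdot|^r$, in the spirit of the Berry--Esseen applications in \cite{GRSW16,KNS10}) lies; everything above is an elementary reduction to it.
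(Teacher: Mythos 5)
Your proof is correct and follows essentially the same route as the paper's: normalize, view $f(x)=\sum_i \hatf(i)x_i$ as a sum of bounded independent mean-zero random variables, bound the third-moment (Lyapunov) sum by $\cnorm{4}{\hatf}^2 \le \delta^2$ via Cauchy--Schwarz, apply the Berry--Esseen fractional-moment estimate (\cref{lem:clt-analytic-sparse}), and pick $\delta$ small enough that the error is below $\varepsilon$ uniformly over $r\in[1,2)$. Your write-up is in fact slightly more careful than the paper's (you explicitly check $|X_i|\le 1$ and use $\gamma_r\le\gamma_2=1$ to make the choice of $\delta$ uniform in $r$), but there is no substantive difference.
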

\begin{proof}
    We will prove this lemma by the method of contradiction.
    Let us assume $\cnorm{4}{\hatf} \leq \delta\cnorm{2}{\hatf}$, for $\delta$ to be
    fixed later.

    Let us define $y_i\defeq \frac{\hatf(i)}{\cnorm{2}{\hatf}}$.
    Then, for all $x\in\on^R$,  
    \[
        g(x)\defeq \sum_{i\in[n]} x_i\cdot y_i = \frac{f(x)}{\cnorm{2}{\hatf}}\mper
    \]
    Let $Z_i=x_i\cdot y_i$ be the random variable when $x_i$ is independently uniformly
    randomly chosen from $\on$. Now
    \[
        \sum_{i\in[n]}\Ex{Z_i^2}=\sum_{i\in[n]}\frac{\hatf(i)^2}{\cnorm{2}{\hatf}^2}=1\mper
    \]
    and
    \[
        \sum_{i\in[n]}\Ex{\abs{Z_i}^3} = \sum_{i\in[n]}
        \frac{\abs{\hatf(i)}^3}{\cnorm{2}{\hatf}^3}= \sum_{i\in[n]}
        \frac{\abs{\hatf(i)}^2}{\cnorm{2}{\hatf}^2}\cdot \frac{\abs{\hatf(i)}}{\cnorm{2}{\hatf}}
        \leq \frac{\cnorm{4}{\hatf}^2}{\cnorm{2}{\hatf}^2} \leq \delta^2 \mcom
    \]
    where the penultimate inequality follows from Cauchy-Schwarz ineqality.

    Hence, by applying \cref{lem:clt-analytic-sparse}
    on the random variables $Z_1,\cdots, Z_n$, we get:
    \begin{align*}
        \frac{\enorm{r}{f}}{\cnorm{2}{\hatf}}=
        \enorm{r}{g} & = \inparen{ \Ex{x\in \on^n} {\abs{g(x)}^r}}^\frac{1}{r}\\
        & = \inparen{ \Ex{x\in \on^n} {\abs{\sum_{i\in[n]}Z_i}^r}}^\frac{1}{r}\\
        & \leq \gamma_r\inparen{1+c\delta^2 \inparen{\log{\frac{1}{\delta}}}^{r}}\\
    \end{align*}
    We choose $\delta>0$ small enough (since $1 \leq r < 2$, setting $\delta<
    \frac{\sqrt{\varepsilon}}{\min(\delta_0,\sqrt{\gamma_2}\log\frac{c\gamma_2}{\varepsilon})}
= \frac{\sqrt{\varepsilon}}{\min(\delta_0, \log\frac{c}{\varepsilon})}$
    suffices) so that $\delta^2(\log\frac{1}{\delta})^r <
    \frac{\varepsilon}{c\gamma_r}$. For this choise of $\delta$, we get:
    $\enorm{r}{f}\leq (\gamma_r+\varepsilon)\cnorm{2}{\hatf}$ -- a contradiction.
    And hence the proof follows.
\end{proof}

Finally we prove \cref{lem:clt-analytic-sparse}:
\begin{proofof}{\cref{lem:clt-analytic-sparse}}
    The proof is almost similar to that of Lemma 2.1 of \cite{KNS10}.
    From Berry-\Esseen theorem (see \cite{Beek72} for the constant), we get that:
    \[
        \Pr{\abs{\sum_{i=1}^nX_i}\geq u} \leq \Pr{\abs{g}\geq u} +
        2\sum_{i=1}^n\Ex{\abs{X_i}^3} \leq \Pr{\abs{g}\geq u} + 2\delta\mcom
    \] for every $u>0$ and where $g \sim \gaussian{0}{1}$.
    By Hoeffding's lemma, 
\[
\Pr{\abs{\sum_{i\in[n]} X_i}\ge t} < 2 \ee^{-2t^2}
\]
    for every $t>0$.
    Combining the above observations, we get:
    \begin{align*}
        \Ex{\abs{\sum_{i=1}^n X_i}^p} & = \int_{0}^{\infty} p u^{p-1} 
            \Pr{\abs{\sum_{i=1}^nX_i}\geq u}du \\ 
        & \leq \int_{0}^{a} pu^{p-1}\Pr{\abs{g}>u} du + 2\delta a^p +
            2\int_{a}^{\infty} pu^{p-1} \ee^{-2u^2} du\\
        & = \sqrt{\frac{2}{\pi}}\int_{0}^{a} u^p \ee^{-\nfrac{u^2}{2}}du + 2\delta
        a^p + \frac{2p}{2^{\frac{p-1}{2}}}\int_{2a^2}^{\infty} z^{\frac{p+1}{2}-1} \ee^{-z} dz\\
        & =\gamma_p^p -\sqrt{\frac{2}{\pi}}\int_{a}^{\infty} u^p\ee^{-\nfrac{u^2}{2}}du + 2\delta a^p
        +\Gamma\inparen{\frac{p+1}{2},2a^2} ~ \mcom
    \end{align*}
    where $\Gamma(\cdot,\cdot)$ is the upper incomplete gamma function and $a$ is a large constant determined later depending on $\delta$ and $p$.
	The second term is bounded as 
    \begin{align*}
        \int_{a}^{\infty}u^p\ee^{-\nfrac{u^2}{2}}du = a^{p-1}
        \ee^{-\nfrac{a^2}{2}} + (p-1) \int_{a}^{\infty}
        u^{p-2}\ee^{-\nfrac{u^2}{2}} du \leq a^{p-1}
        \ee^{-\nfrac{a^2}{2}} + \frac{p-1}{a^2} \int_{a}^{\infty}
        u^{p}\ee^{-\nfrac{u^2}{2}} du \mper
    \end{align*}
    Hence $\int_{a}^{\infty}u^p\ee^{-\nfrac{u^2}{2}}du \leq
    \frac{a^{p+1}e^{-\nfrac{a^2}{2}}}{1+a^2-p}$.
    
    We know, $\Gamma(\nfrac{p+1}{2},x)\to x^{\frac{p-1}{2}}\ee^{-x}$ as $x\to\infty$.
    We choose $a=\gamma_p \sqrt{\log\frac{1}{\delta}}$. Hence there exists $\delta_0$
    so that for all small enough
    $\delta<\delta_0$, we have $\Gamma(\nfrac{p+1}{2},2a^2) \sim 2^{\frac{p-1}{2}}a^{p-1}
    \delta^{2\gamma_p^2}\ll \delta a^p$ where the last inequality follows from 
    the fact that $2\gamma_p^2 > 1$ (as $p>1$).
    Putting all this together, we get:
    \[
        2\delta a^p +\Gamma\inparen{\frac{p+1}{2},2a^2}-
        \sqrt{\frac{2}{\pi}}\int_{a}^{\infty}u^p\ee^{-\nfrac{u^2}{2}}du 
        \ll 3\delta a^p -\sqrt{\frac{2}{\pi}}\frac{a^{p+1}e^{-\nfrac{a^2}{2}}}{1+a^2-p}
        \leq
        c\gamma_p^p \delta \inparen{\log{\frac{1}{\delta}}}^{\nfrac{p}{2}}\mcom
    \]
    where $c$ is an absolute constant independent of $a$ and $p$.
    This completes the proof of the lemma.
\end{proofof}

\fi

\end{document}